\documentclass[10pt]{article}
\usepackage[utf8]{inputenc}
\usepackage[margin=1in]{geometry}
\usepackage{graphicx} % Required for inserting images
\usepackage{tikz}
\usetikzlibrary{positioning}
\usepackage[dvipsnames]{xcolor}
\usetikzlibrary{arrows.meta, positioning, decorations.pathreplacing}
\usepackage{hyperref}

\usepackage{multirow}
\usepackage{amsmath}
\usepackage{amsthm}
\usepackage{amssymb}
\usepackage{mathtools}
\usepackage{amsfonts}
\usepackage{thmtools, thm-restate}
\usepackage{hhline}
\usepackage{booktabs}
\usepackage{xspace}
\usepackage{colortbl}
\usepackage{enumitem}
\usepackage{algorithm}
\usepackage{algpseudocode}
\usepackage{caption}
\usepackage{mathtools}
\usepackage{authblk}
\usepackage[normalem]{ulem}
\algrenewcommand{\algorithmicrequire}{\textbf{Input:}}
\algrenewcommand{\algorithmicensure}{\textbf{Output:}}

\definecolor{lgray}{rgb}{.6, .6, .6}
\definecolor{llgray}{rgb}{.8, .8, .8}

\newtheorem{theorem}{Theorem}
\newtheorem{lemma}[theorem]{Lemma}

\newtheorem{proposition}[theorem]{Proposition}

\newtheorem{example}[theorem]{Example}
\newtheorem{definition}[theorem]{Definition}

\newtheorem{remark}[theorem]{Remark}

\newcommand{\andrei}[1]{{\color{blue}\textbf{AD:} #1}}
\newcommand{\potential}[1]{{\color{purple}#1}}
\newcommand{\dan}[1]{{\color{orange}\textbf{Dan:} #1}}
\newcommand{\ahmet}[1]{{\color{blue}\textbf{Ahmet:} #1}}

\newcommand{\nop}[1]{}

\newcommand{\N}{\mathbb{N}}
\newcommand{\C}{\mathcal{C}}

\newcommand{\bigO}{\mathcal{O}}

\newcommand{\calC}{\mathcal{C}}
\newcommand{\calF}{\mathcal{F}}
\newcommand{\calP}{\mathcal{P}}
\newcommand{\calT}{\mathcal{T}}

\newcommand{\calV}{\mathcal{V}}

\newcommand{\vect}[1]{\boldsymbol{#1}}

\newcommand{\set}[1]{\left\{#1\right\}}

\newcommand{\defeq}{\stackrel{\text{def}}{=}}

\newcommand{\at}{\mathsf{at}}
\newcommand{\leaves}{\mathsf{leaves}}
\newcommand{\subw}{\mathsf{subw}}

\newcommand{\vars}{\mathsf{vars}}
\newcommand{\guard}{\mathsf{guard}}
\newcommand{\Dom}{\mathsf{Dom}}
\newcommand{\out}{\mathsf{OUT}}

\newcommand{\bound}{\mathsf{PBD}}
\newcommand{\DC}{\mathrm{DC}}

\newcommand{\obj}{\mathrm{obj}}
\newcommand{\mw}{\mathrm{mw}}
\newcommand{\dw}{\mathrm{dw}}

\newcounter{magicrownumbers}
\newcommand\rownumber{\footnotesize\stepcounter{magicrownumbers}\arabic{magicrownumbers}}
\newcommand{\linenumber}{\makebox[2ex][r]{\rownumber\TAB}}
\newcommand{\TAB}{\makebox[2.5ex][r]{}}%
\newcommand{\LET}{\textbf{let}\xspace}%
\newcommand{\FOREACH}{\textbf{foreach}\xspace}%
\newcommand{\diamondchord}{
  \tikz[scale=0.7, line width=0.6pt]{
    \draw (0,0) -- (0.15,0.15) -- (0,0.3) 
          -- (-0.15,0.15) -- cycle;
    \draw (-0.15,0.15) -- (0.15,0.15);
  }
}

\newcommand{\paw}{
    \begin{tikzpicture}[scale=0.5, line width=0.6pt]
        \draw (0,0) -- (0.25,0);
        \draw (0.25,0) -- (0.5,0.2) -- (0.5,-0.2) -- cycle;
    \end{tikzpicture}
}

\newcommand{\threepath}{
    \begin{tikzpicture}[scale=0.5, line width=0.5pt]
        \draw (0,0) -- (0.15,0.5) -- (0.3,0) -- (0.45,0.5);
    \end{tikzpicture}
}

\newcommand{\fourpath}{
    \begin{tikzpicture}[scale=0.5, line width=0.5pt]
        \draw (0,0) -- (0.15,0.5) -- (0.3,0) -- (0.45,0.5) -- (0.6,0);
    \end{tikzpicture}
}

\newcommand{\bigpaw}{
    \begin{tikzpicture}[scale=0.5, line width=0.6pt]
        \draw (0,0) -- (0.25,0.2) -- (0.25,-0.2) -- cycle;
        \draw (0.25,-0.2) -- (0.5,-0.2);
        \draw (0.25,0.2) -- (0.5,0.2);
    \end{tikzpicture}
}

\allowdisplaybreaks

\begin{document}

%\title{Incremental View Maintenance of Join Queries using Heavy-Light Partitioning of the Input Relations}
\title{Maintaining Queries under Updates \\Using Heavy-Light Partitioning of the Input Relations}

% Authors
%\author[1]{Mahmoud Abo-Khamis}
%\author[2]{Eden Chmielewski}
%\author[2]{Andrei Draghici}
%\author[3]{Ahmet Kara}
%\author[2]{Dan Olteanu}

\author{Mahmoud Abo-Khamis$^{2}$, Eden Chmielewski$^{1}$, Andrei Draghici$^{1}$, \\ Ahmet Kara$^{3}$, Dan Olteanu}

% Affiliations
\affil[1]{University of Zurich \nop{\texttt{edenviolet.chmielewski@uzh.ch, andrei.draghici@uzh.ch, dan.olteanu@uzh.ch}}, $^2$ Relational AI, $^3$ OTH Regensburg}

%\affil[2]{Relational AI \nop{\texttt{mahmoudabo@gmail.com}}}
%\affil[3]{OTH Regensburg \nop{\texttt{ahmet.kara@oth-regensburg.de}}}

\date{}

\maketitle

\begin{abstract}
    We study the classical incremental view maintenance problem: Given a query and a database, maintain the query output under single-tuple updates (inserts or deletes) to the database such that the tuples in the query output can be enumerated with constant delay after any update.

We introduce a maintenance approach whose update time matches or improves the best update time reported in prior work. Whereas prior approaches are manually tailored to each of a handful of queries, our approach generalizes to arbitrary join queries.
It combines three techniques: delta queries, trees of materialized views, and heavy-light data partitioning. The overall update time incurred by our approach for a given join query is characterized by the maintenance width, a new measure that is parameterized by the heavy-light threshold for data partitioning. We show how to find the threshold that minimizes the maintenance width.
\end{abstract}

\paragraph{Acknowledgements}
This work is partially supported by SNSF 200021-231956.

%%
%% The code below is generated by the tool at http://dl.acm.org/ccs.cfm.
%% Please copy and paste the code instead of the example below.
%%
% \begin{CCSXML}
%     <ccs2012>
%     <concept>
%     <concept_id>10003752.10010070.10010111.10011711</concept_id>
%     <concept_desc>Theory of computation~Database query processing and optimization (theory)</concept_desc>
%     <concept_significance>500</concept_significance>
%     </concept>
%     <concept>
%     <concept_id>10003752.10003809.10010047</concept_id>
%     <concept_desc>Theory of computation~Online algorithms</concept_desc>
%     <concept_significance>500</concept_significance>
%     </concept>
%     </ccs2012>
% \end{CCSXML}

% \ccsdesc[500]{Theory of computation~Database query processing and optimization (theory)}
% \ccsdesc[500]{Theory of computation~Online algorithms}

% \keywords{incremental view maintenance; data partitioning; delta queries; join queries; materialized views}

\section{Introduction}
\label{sec:intro}

In this paper, we study the classical incremental view maintenance (IVM) problem for join (or full conjunctive) queries:  Given a query and a database, we want to maintain the query output under single-tuple updates (inserts or deletes) to the database such that the tuples in the query output can be enumerated with constant delay after each update. This problem is central to databases and received attention from both database systems and theory communities over the past decades. As highlighted in a recent overview~\cite{IVM:GemsPODS:2024}, there has been renewed interest in charting the complexity of the IVM problem~\cite{IVMeps:ICDT:2019,IVMeps:PODS:2020} and in developing IVM systems in academia~\cite{DBT:VLDBJ:2014,DynYannakakis:SIGMOD:2017,FIVM,CROWN} and industry~\cite{DBSP:VLDBJ:2025,DiffDataFlow:CACM:2016,SnowflakeDT:SIGMOD:2025}.

The classical IVM approach is based on delta queries~\cite{DeltaViews}. More recently, two further techniques made their way into the IVM theory and systems: using a hierarchy, or tree, of materialized views~\cite{DBT:VLDBJ:2014,FIVM}, and heavy-light data partitioning~\cite{IVMeps:ICDT:2019,IVMeps:PODS:2020}. 
View trees (and variants thereof) have been previously used for maintenance by several IVM systems. DBToaster~\cite{DBT:VLDBJ:2014} compiles the given query into a set of view trees, one for each updatable relation. Dynamic Yannakakis~\cite{DynYannakakis:SIGMOD:2017}, F-IVM~\cite{FIVM,dynamicrelation}, and Crown~\cite{CROWN} compile the given query into one view tree, which is then maintained under updates. 
MVIVM translates the given query into a so-called multivariate extension query, which is maintained using view trees~\cite{InsertsVSDeletes}. With the exception of $q$-hierarchical queries, which can be maintained using (a variant of) view trees with constant update time and constant enumeration delay as shown in a seminal work~\cite{QHierarchical} and readily adopted by all subsequent approaches, the update time achieved by these approaches for arbitrary queries can be suboptimal. 

\begin{table}[t]
\centering
\renewcommand{\arraystretch}{1.4} % extra row spacing
\resizebox{0.95\textwidth}{!}{
\begin{tabular}{|l|l|llll|l|}
\hline
\textbf{\small Join Queries} & \textbf{\small Update Time} & \multicolumn{5}{c|}{\textbf{Update Time Upper Bounds}} \\
 & \textbf{\small Lower Bounds} & \textbf{\small F-IVM~\cite{FIVM}} & \textbf{\small IVM$^\epsilon$~\cite{TriangleQuery,IVMeps:PODS:2020}} & \textbf{\small HHH~\cite{4Cycle}} & \textbf{\small MVIVM~\cite{InsertsVSDeletes}} & \textbf{\small This Paper}\\

\hline
hierarchical &
    - &
    $\bigO(1)$ &
    $\bigO(1)$ &
    N/A &
    $\bigO(1)$ 
    &
    $\bigO(1)$ \\

3-path (\threepath) &
    $\Omega(N^{1/2-\gamma})$~\cite{QHierarchical,InsertsVSDeletes} &
    $\bigO(N)$ &
    $\bigO(N^{1/2})^*$ &
    $\bigO(N^{1/2})^*$ &
    $\bigO(N^{1/2})$ &
    $\bigO(N^{1/2})$ \\

4-path (\fourpath) &
    $\Omega(N^{1/2-\gamma})$~\cite{QHierarchical,InsertsVSDeletes} &
    $\bigO(N)$ &
    $\bigO(N^{1/2})^*$ &
    N/A &
    $\bigO(N)$ &
    $\bigO(N^{1/2})$ \\

triangle ($\triangle$) &
    $\Omega(N^{1/2-\gamma})$~\cite{QHierarchical,InsertsVSDeletes} &
    $\bigO(N)$ &
    $\bigO(N^{1/2})$ &
    $\bigO(N^{1/2})^*$ &
    $\bigO(N^{1/2})$ &
    $\bigO(N^{1/2})$\\

LW-$k$ &
    $\Omega(N^{1/2-\gamma})$~\cite{QHierarchical,InsertsVSDeletes} &
    $\bigO(N)$ &
    $\bigO(N^{1/2})$ &
    N/A &
    $\bigO(N^{1/2})$ &
    $\bigO(N^{1/2})$\\

4-cycle ($\square$) &
    $\Omega(N^{2/3-\gamma})$~\cite{InsertsVSDeletes} &
    $\bigO(N)$ &
    N/A &
    $\bigO(N^{2/3})^*$ &
    $\bigO(N)$ &
    $\bigO(N^{2/3})$ \\
    
diamond (\diamondchord) &
    $\Omega(N^{2/3-\gamma})$~\cite{InsertsVSDeletes} & 
    $\bigO(N)$ &
    N/A &
    $\bigO(N^{2/3})^*$ &
    $\bigO(N)$ & 
    $\bigO(N^{2/3})$
    \\

paw (\paw) &
    $\Omega(N^{2/3-\gamma})$~\cite{InsertsVSDeletes} & 
    $\bigO(N)$ &
    N/A &
    $\bigO(N^{2/3})^*$ &
    $\bigO(N^{2/3})$ & 
    $\bigO(N^{2/3})$
    \\

big paw (\bigpaw) &
    $\Omega(N^{2/3-\gamma})$~\cite{InsertsVSDeletes} & 
    $\bigO(N)$ &
    N/A &
    N/A &
    $\bigO(N^{2/3})$ & 
    $\bigO(N^{2/3})$
    \\

bow tie ($\bowtie$) &
    $\Omega(N^{3/4-\gamma})$~\cite{InsertsVSDeletes} &
    $\bigO(N)$ &
    N/A &
    N/A &
    $\bigO(N)$ & 
    $\bigO(N)$
    \\
\nop{
\rowcolor{llgray}{arbitrary} &
    $\Omega(N^{\text{subw} Q) - 1})$~\cite{InsertsVSDeletes} &
    $\bigO(N^\delta)$ &
    N/A &
    N/A &
    $\bigO(N^{\text{fhtw}(\widehat Q) - 1})$ &
    $\bigO(2^{\text{mw}(Q)})$
    \\
}    
\hline
\end{tabular}
}

\caption{
    Comparison between the update times of our maintenance approach versus the best known combinatorial results for different join queries studied in the literature. Update times of all approaches are amortized, except for F-IVM which is worst-case. $N$ is the size of the database at the time of the update. All results assume that after each update, constant delay enumeration of the query output is supported. ($^*$) entries indicate update times shown only for the counting version of the problem; (N/A) entries indicate that the approach is not applicable; LW-$k$ is the Loomis-Whitney query on $k$ variables:
    $Q(X_1, \ldots, X_k) = \prod_{i \in [k]} R_i(\{X_1, \ldots, X_k\}-\{X_i\})$. All lower bounds assume the query has no self-joins and they hold for any $\gamma > 0$.
}
\label{tab:comparisons}
\end{table}

Motivated by the suboptimality of the aforementioned IVM approaches, a distinct line of theoretical work~\cite{IVMeps:ICDT:2019,TriangleQuery,IVMeps:PODS:2020,4Cycle,InsertsVSDeletes} proposed adaptive maintenance approaches that partition the data and use different view trees for different data parts. This can lead to asymptotically lower and even optimal update times. These adaptive approaches were employed for a handful of queries, for which a careful crafting of view trees and the accompanying complexity analysis were made on a case-by-case basis, with no general approach in sight. One  notable exception is the adaptive maintenance of hierarchical queries with arbitrary free variables~\cite{IVMeps:PODS:2020}. IVM$^\epsilon$ is the first approach to achieve optimal (amortized) update time $\bigO(N^{1/2})$ for the triangle count query~\cite{IVMeps:ICDT:2019} and for the full triangle query~\cite{TriangleQuery}, where $N$ is the database size at the time of update. A further optimality result is known for a subclass of hierarchical (but not $q$-hierarchical) queries~\cite{IVMeps:PODS:2020}, where the update time and enumeration delay are $\bigO(N^{1/2})$. Yet tight bounds on the update time are not known beyond these notable cases. This is primarily due to the scarcity of the available lower bounds, which are conditional on the Online Matrix-Vector-Multiplication (OMv) conjecture~\cite{OMV:STOC:2015,QHierarchical}\footnote{In the OMv problem, we are given an $n \times n$ Boolean matrix $\mathbf{M}$ and receive $n$ column vectors of size $n$ denoted by $\mathbf{v}_1,\ldots, \mathbf{v}_n$, one by one; after seeing each  $\mathbf{v}_i$, we output the product $\mathbf{M}\mathbf{v}_i$, before we see the next vector. The OMv conjecture states that for any $\gamma > 0$, there is no combinatorial algorithm that solves OMv in time $\bigO(n^{3 - \gamma})$~\cite{OMV:STOC:2015}. Unless the OMv conjecture fails, there is no dynamic algorithm that can enumerate the output of a non-$q$-hierarchical self-join-free conjunctive query on any database of size $N$ with arbitrary pre-processing time and $\bigO(N^{1/2 - \gamma})$ delay and update time for any $\gamma > 0$~\cite{QHierarchical}.} or on the conjectured optimality of the submodular width for static query evaluation~\cite{InsertsVSDeletes}.\footnote{This conjecture states that for every $\gamma > 0$ and every (full or Boolean) conjunctive query $Q$, there does not exist a combinatorial algorithm that for any database of size $N$ can answer $Q$ in time $\bigO(N^{\subw(Q) - \gamma} + \out)$, where $\subw(Q)$ is the submodular width of $Q$ and \(\out\) is the query output size~\cite{InsertsVSDeletes}. This  conjecture in the static query evaluation setting implies that there is no fully dynamic algorithm that can maintain $Q$ with amortized update time  \(\bigO(N^{\subw(\widehat Q) - 1 - \gamma})\) and constant enumeration delay for any $\gamma>0$, where $\widehat Q$ is the multivariate extension of $Q$~\cite{InsertsVSDeletes}.}
Further approaches fall short of achieving the best known update times. For instance, the MVIVM approach~\cite{InsertsVSDeletes}, which reduces the IVM problem of a query to that of its multivariate extension, was shown to require $\bigO{(N)}$ update time for the 4-cycle query~\cite[Fig. 4]{InsertsVSDeletes}, whereas the best known update time is $\bigO{(N^{2/3})}$. MVIVM also needs $\bigO{(N)}$ update time for the 4-path query, whereas the best update time is $\bigO{(N^{1/2})}$.
Table~\ref{tab:comparisons} overviews the update times (lower and upper bounds) achieved by representative {\em combinatorial}\footnote{Using fast-matrix multiplication, a recently proposed non-combinatorial IVM algorithm~\cite{FMM4Cycle} can achieve $\bigO(N^{2/3-\gamma})$ for the 4-cycle query, where $\gamma = 0.009811$.} approaches for queries studied in the literature.

{\em In this paper, we put forward an adaptive maintenance approach that works for arbitrary join queries and whose update time matches or  improves the best update time reported in prior work}. 

Our approach uses all three aforementioned techniques: delta queries, view trees, and data partitioning. The key challenge addressed by our approach is to algorithmically find the view trees and the heavy-light threshold parameter for data partitioning  that minimize the update time for any given join query. This challenge was not addressed in prior works~\cite{IVMeps:ICDT:2019,TriangleQuery,4Cycle,IVMeps:PODS:2020}, as the choices of view trees and threshold parameter were made manually for each of the considered queries. As shown in Table~\ref{tab:comparisons}, our approach matches the best known upper bounds on the update time for the queries considered in the literature, while also providing a general approach for arbitrary join queries. It also achieves new non-trivial sub-linear update times for other queries, e.g., the optimal $\bigO{(N^{1/2})}$ update time for the $4$-path query.

This paper is organized as follows. Sec.~\ref{sec:prelims} introduces preliminary notions used throughout the paper. Sec.~\ref{sec:overview} overviews our approach, exemplifies it on the $4$-cycle query, and states our main theorem. Sec.~\ref{sec:maintenance-width} introduces the maintenance width, our new measure for the complexity of the maintenance cost. 
Sec.~\ref{sec:comparison} compares the maintenance width and our evaluation strategy with width measures and strategies used in prior approaches.
Sec.~\ref{sec:rebalancing} discusses how to amortize the cost of occasional expensive updates over a sequence of updates. Sec.~\ref{sec:enumeration} reviews the constant-delay enumeration of tuples from the query output represented by the view trees. Sec.~\ref{sec:conclusion} concludes with thoughts on future work. Some proof details are deferred to the appendix. Further details on the amortization and enumeration procedures, as well as the application of our approach to the queries listed in Table~\ref{tab:comparisons}, are provided in the appendix.

\section{Preliminaries}
\label{sec:prelims}

\subsection{Data and Queries}

A {\em schema} $\vect X$ is a tuple of attributes or variables $(X_1,\ldots,X_n)$, which we also conveniently see as a set to allow set operations on tuples. Each variable $X_i$ draws its values from a set $\Dom(X_i)$. 
A tuple $\vect x$ of values over the schema $\vect X$ is an element of the set $\Dom(\vect X)=\Dom(X_1)\times\ldots\times\Dom(X_n)$. Following prior work on IVM, e.g.,~\cite{FIVM}, a relation $R$ over schema $\vect X$ is a function that maps tuples of values over $\vect X$ to multiplicities, which are integers. When applying set operations to $R$, we treat it as the set of tuples $\vect{x}$ with $R(\vect{x}) >0$. For instance, the size of $R$, denoted by $|R|$, is the number of tuples $\vect x$ for which $R(\vect x) > 0$.
We specify queries using a syntax similar to that of functional aggregate queries over the  $(\mathbb{Z},+,\cdot,0,1)$ ring~\cite{FAQ}:
\begin{equation}\label{eq:query-explicit}
Q(\vect{F}) \;=\; \sum_{\vect B} R_1({\vect{X_1}})\cdot\ldots\cdot R_k(\vect{X_k}),
\end{equation}
where $\sum$ and $(\cdot)$ are the summation and respectively multiplication operation from the ring, $R_1,\dots,R_k$ are {\em relation symbols}, each $\vect{X_i}$ is a schema, and each $R_i(\vect{X_i})$ is an \emph{atom} of $Q$. We assume distinct relation symbols in a query. If several relation symbols correspond to the same physical database relation, which happens in case $Q$ has self-joins, then we assume without loss of generality (i.e., without changes in the data complexities stated in the paper) that each such atom gets its own copy of the database relation. The set of variables of $Q$ is  $\vars(Q) \defeq \bigcup_{i\in[k]}\vect{X_i}$. The {\em free} variables of $Q$ are $\vect F\subseteq \vars(Q)$, while $\vect B = \vars(Q)\setminus\vect F$ are the {\em bound variables}. If $\vars(Q) = \vect F$, then $Q$ is a full (or join) query. By $\at(Q)$ and $\at(Y)$ we denote the set of all atoms of $Q$ and the set of all atoms $R_i(\vect{X_i})$ with $Y\in\vect{X_i}$, respectively. For compactness, we write a set of variables as the concatenation of their names, e.g., $\{X,Y,Z\}$ becomes $XYZ$ while $\{X\}$ becomes $X$. 
Each variable $A \in \vect X_i \cap \vect X_j$ expresses an equi-join between $R_i$ and $R_j$, for $i\neq j$, and is called a {\em join variable}. Let ${\mathcal J}_Q$ be the tuple of all join variables in $Q$, ordered using a fixed total order on $\vars(Q)$.

For any variable $Y\in\vars(Q)$, $\Dom(Y)$ is defined as: $\Dom(Y) \defeq \bigcup_{R_i(\vect X_i)\in\at(Y)} \pi_{Y} R_i$, where $\pi$ is the standard projection operator in relational algebra. The marginalization of variables $\vect Y\subseteq \vect X$ from a relation $R$ over variables $\vect X$, denoted by $S(\vect Z) = \sum_{\vect Y} R(\vect X)$ for $\vect Z = \vect X \setminus \vect Y$, is defined by: 
$\forall \vect z\in\Dom(\vect Z): 
S(\vect z) \defeq \sum \{R(\vect x) \mid \vect x\in\Dom(\vect X)\wedge \vect z = \vect x.\vect Z \}$, where $\vect x.\vect Z$ is the restriction of the tuple $\vect x$ to the values of the variables in schema $\vect Z$. The union of two relations $R$ and $S$ over the same variable set $\vect X$, denoted by $T = R\cup S$, is defined as: $\forall \vect x\in\Dom(\vect X): T(\vect x) = R(\vect x) + S(\vect x)$. 
The query $Q$ in Eq.~\eqref{eq:query-explicit} defines the relation: $\forall \vect f\in\Dom(\vect F): Q(\vect f) = \sum_{\vect x\in\Dom(\vars(Q)): \vect f = \vect x.\vect F} R_1(\vect x.\vect X_1)\cdot\ldots\cdot R_k(\vect x.\vect X_k)$.

%%%%%%%%%%%%%%%%%%%
\subsection{Data Updates}
Following prior work~\cite{FIVM}, we model database \emph{updates} as a sequence of single-tuple inserts and deletes. 
The insert (delete) of a tuple $\vect x$ into (from) a relation $R$ is expressed as a delta relation $\delta R$ that maps $\vect x$ to 1 (and $-1$, respectively). The updated relation is the union of the old relation and the delta relation: $R := R \cup \delta R$. A delete, whose effect is a tuple with negative multiplicity in the updated relation, is rejected.
Updates are defined for joins of relations using the classical delta rule: $\delta (V_1(\vect Z_1)\cdot V_2(\vect Z_2)) = (\delta V_1 (\vect Z_1)\cdot V_2(\vect Z_2))\cup (V_1 (\vect Z_1)\cdot \delta V_2(\vect Z_2)) \cup (\delta V_1 (\vect Z_1)\cdot \delta V_2(\vect Z_2))$. This generalizes to a join of arbitrary relations by taking $V_2$ to be the  join of relations and applying recursively the delta rule to $\delta V_2$. If only $V_1$ is  changed, then $\delta V_2 = \emptyset$ and $\delta (V_1(\vect Z_1)\cdot V_2(\vect Z_2)) = \delta V_1 (\vect Z_1)\cdot V_2(\vect Z_2)$. Updates commute with variable marginalization: $\delta (\sum_{\vect Y} V) = \sum_{\vect Y} \delta V$. 
If a database relation has several copies due to our assumption on distinct relation symbols, then each of these copies needs to be updated and triggers a delta query. 

%%%%%%%%%%%%%%%%%%%
\subsection{Delta View Trees} Our maintenance approach is supported by a tree of materialized views.

\begin{definition}[(Delta) View Tree]
\label{def:viewtree}
A \emph{view tree} $T$ for a query $Q$ is a rooted tree with the properties:
\begin{itemize}
    \item There is a one-to-one mapping between the leaves of $T$ and the atoms of $Q$.
    \item Each inner node is a view over some variables of $Q$.
    \item If a node $V'(\vect{Y})$ has a single child node $V(\vect{X})$, then it is a \emph{projection view} defined by marginalizing variables of $V(\vect{X})$, i.e., $V'(\vect{Y}) = \sum_{\vect{X} \setminus \vect{Y}} V(\vect{X})$. 
    Furthermore, every atom of $Q$ with a variable from $\vect{X} \setminus \vect{Y}$  occurs in the subtree rooted at $V(\vect{X})$.
    \item If a node $V(\vect{X})$ has several children $V_1(\vect{X_1}), \ldots , V_1(\vect{X_n})$, for $n \geq 2$, then it is a \emph{join view} defined by the natural join of the  child views, i.e., $V(\vect{X}) = V_1({\vect{X_1}})\cdot\ldots\cdot V_k(\vect{X_n})$.
\end{itemize}
Under an update to a relation $R$, the view tree $T$ becomes a \emph{delta view tree}, denoted by $\delta T_R$, where $R$ is replaced by $\delta R$, and each view $V$ along the path from $\delta R$ to the root view is replaced by $\delta V$.
\end{definition}

For a view tree $T$, an update to a relation $R$, and a view $V(\vect X) \in T$, let $\leaves(\delta V(\vect X), \delta T_R)$ denote the set of all leaves of the subtree of $\delta T_R$ rooted at $\delta V(\vect X)$. When $\delta T_R$ is clear from context, it is omitted. We use $\calT(Q)$ to denote the set of all view trees of $Q$. Fig.~\ref{fig:4-cycle-view-trees} depicts view trees for the 4-cycle query. 

A procedure enumerates the query output with {\em constant delay} if the time is constant between: (i) the start of the enumeration process and the output of the first tuple; (ii) outputting any two consecutive tuples; and (iii) outputting the last tuple and the end of the enumeration process~\cite{ConstDelayEnum}. Any view tree for a join query allows for the constant-delay enumeration of the query output~\cite{FIVM}.

%%%%%%%%%%%%%%%%%%%%%%%%%
\subsection{Heavy-Light Data Partitioning}

For a join variable $Y$, we partition the $Y$-values in the database in {\em light} and {\em heavy} according to a {\em threshold parameter} $\epsilon\in[0,1]$ and database size $N$: 
\begin{align*}
    Light(Y) := \{y \in \Dom(Y) \mid \sum_{R_i(\vect{X_i})\in\at(Y)} |\sigma_{Y=y} R_i| \leq N^\epsilon\} \hspace*{2em} Heavy(Y):=\Dom(Y)\setminus Light(Y).
\end{align*}
That is, a light $Y$-value $y$ occurs in at most $N^{\epsilon}$ tuples across all relations, while there are at most $N^{1-\epsilon}$  heavy $Y$-values. For an atom $R(Y,\vect{Z})$, the relation $R$ is the disjoint union of its fragment where $Y$ is light and its fragment where $Y$ is heavy.

\begin{definition}[Degree Configuration]\label{def:degree-configuration}
    Given a query $Q$ with the tuple of join variables $(Y_1,\ldots, Y_n)$, a {\em degree configuration} is a tuple $(d_1,\ldots,d_n)$, where $d_i=L$ in case $Y_i$ is light and $d_i=H$ in case $Y_i$ is heavy ($i\in[n]$).
\end{definition}

There are $2^n$ degree configurations for a query $Q$ with $n$ join variables. We denote by ${\mathcal D}(Q)$ the set of all such degree configurations.

\begin{definition}[Relation Restriction]\label{def:relation-part}
    Given a query $Q$ with the tuple of join variables $(Y_1, \ldots, Y_n)$, degree configuration $\vect{d} = (d_1, ..., d_n)$, and atom $R(\vect{X})$ in $Q$, the \emph{$\vect{d}$-restriction} of relation $R$ is:
    \[
    \{\vect r \mid \vect r\in R\wedge \forall i\in [n]: Y_i\in \vect X \rightarrow (d_i=L\wedge \vect r. Y_i\in Light(Y_i) \vee d_i=H\wedge \vect r.Y_i\in Heavy(Y_i))\}.
    \]
\end{definition}

\subsection{Degree Constraints} 

Using the degree configurations and the updates, we derive constraints on the degrees for the database values.

\begin{definition}[Degree Constraint]\cite[Def.~1]{hung_algo}
\label{def:degree_constraint}   
    A \emph{degree constraint} on a database of size $N$ is a tuple \((\vect Z|\vect Y, N^{p_{\vect Z|\vect Y}})\), where $\vect Y\subsetneq \vect Z$ and \(p_{\vect Z|\vect Y}\in\mathbb{Q}_{\geq 0}\). Let \(Q\) be a query and $R_i(\vect X_i)$ an atom of \(Q\). Then $R_i(\vect X_i)$ \emph{guards} the degree constraint  \((\vect Z|\vect Y, N^{p_{\vect Z|\vect Y}})\) if \(\vect Z\subseteq \vect X_i\) and $\max_{\vect t} |\pi_{\vect Z}(\sigma_{\vect Y=\vect t.\vect Y} R_i)|\leq N^{p_{\vect Z|\vect Y}}$. \nop{We denote by \(\guard_Q(c)\) the set of atoms of $Q$ that guard the degree constraint \(c\).}
\end{definition}

For a set of degree constraints $\C$, we denote the set of all variables appearing in \(\C\) by \(\vars(\C) \defeq \bigcup_{(\vect Z\mid \vect Y,N^{p_{\vect Z|\vect Y}})\in\C} \vect Z\). A {\em projection} of a set \(\C\) of degree constraints  onto a set \(\vect V\) of variables, denoted by \(\C[\vect V]\), is defined as the set $\{(\vect Z\cap\vect V|\vect Y, N^{p_{\vect Z|\vect Y}}) \mid (\vect Z|\vect Y, N^{p_{\vect Z|\vect Y}})\in\C\wedge \vect Y\subsetneq \vect Z\cap\vect V\}$. Projections may not only shrink the set of variables covered by a degree constraint ($\vect Z\cap\vect V$), but also remove a constraint from $\C$ altogether. This happens when $\vect Y\subsetneq \vect Z\cap\vect V$ is violated. For instance, the projection $(\vect Y|\vect Y, N^{p_{\vect Z|\vect Y}})$ of the constraint $(\vect Z|\vect Y, N^{p_{\vect Z|\vect Y}})$ onto $\vect Y$ is uninformative and therefore discarded.

For a database of size $N$, delta view $\delta V(\vect X)$ in a delta view tree $T_{R_j}$ for an update $\delta R_j$, and degree configuration $\vect d$, we  define the set of degree constraints that are guarded by the database relations at the leaves $\mathcal{L}=\leaves(\delta V(\vect X))$ of $\delta V(\vect X)$ in $T_{R_j}$ as: 
\begin{align*}
    \DC(\mathcal{L}, \vect d) \defeq &\{
    (\vect X_i|\emptyset, N) \mid R_i(\vect X_i)\in \mathcal{L} \wedge i \neq j\} \cup &\text{ (size constr.)}\\
    &\{(\vect X_i|Y, N^{\epsilon}) \mid R_i(\vect X_i)\in \mathcal{L} \wedge Y\in \vect X_i \wedge Y \text{ is light in } \vect{d} \wedge i \neq j\} \cup &\text{ (light constr.)}\\
    &\{(Y|\emptyset, N^{1-\epsilon}) \mid  R_i(\vect X_i)\in \mathcal{L} \wedge Y\in \vect X_i \wedge Y \text{ is heavy in } \vect{d} \wedge i \neq j\}\cup &\text{ (heavy constr.)}\\
    &\{(A|\emptyset, 1) \mid  A \in \vect{X}_j\} &\text{ (update constr.)}
\end{align*}
Note that this set does not contain constraints that are guarded by the relation $R_j$ itself, since this relation is not in $\delta T_{R_j}$. 
If a set $\C$ of degree constraints is guarded by a relation, then this also holds for its projection $\C[\vect X]$ onto any set of variables $\vect X\subseteq \vars(\C)$.

\begin{definition}[Acyclic Sets of Degree Constraints]
    For any set $\C$ of degree constraints, associate a directed graph \(G_\C\) with a node for every variable in $\vars(\C)$ and with a directed edge \((y,z)\in \vect Y\times (\vect Z-\vect Y)\) for every degree constraint \((\vect Z\mid \vect Y,N^{p_{\vect Z|\vect Y}})\in\) \(\C\). If \(G_\C\) is acyclic, then \(\C\) is called \emph{acyclic}. We denote by ${\mathcal A}(\C)$ the set of all maximal acyclic subsets of \(\C\).
\end{definition}

All maximal acyclic subsets of a set $\C$ of degree constraints contain all size, heavy, and update constraints from $\C$, as these constraints have the form $(\vect Z|\emptyset, N^{p_{\vect Z|\emptyset}})$ and do not create edges in the constraint graph. It is only the light constraints that create edges and therefore cycles in this graph.

%%%%%%%%%%%%%%
\subsection{RAM Model of Computation}

We assume that each relation $R_i$ is implemented by a data structure of size $\bigO(|R_i|)$ that can: (i) look up, insert, and
delete tuples in $R_i$ in amortized constant time, and (ii) enumerate all tuples in $R_i$ with constant delay. 
For a set $\vect S \subsetneq \vect{X_i}$, we use an index data structure that, for any tuple $\vect{x_S}$ over the variables in $\vect S$, (iii) can enumerate all tuples in $\sigma_{\vect S = \vect{x_S}} R_i$ with constant delay, and (iv) insert and delete index entries in amortized constant time. We also need indices to (v) enumerate with constant delay the distinct tuples in each relation constructed by marginalizing any subset of variables of $R_i$. We report the time complexity as a function of the database size $N$ only, where the query is considered fixed and of constant size (data complexity). Therefore, the enumeration delay is constant when it does not depend on the database size.

%%%%%%%%%%%%%%%%%%%%%%%%%
%%%%%%%%%%%%%%%%%%%%%%%%%

\nop{We also drop the explicit summation over the bound variables and write $Q$ from Eq.~\eqref{eq:query-explicit} as follows:
\begin{equation}\label{eq:query-simple}
Q(\vect{F}) \;=\; R_1({\vect{X_1}})\cdot\ldots\cdot R_k(\vect{X_k}),
\end{equation}
}

\nop{
\potential{
That is, a light $\vect Y$-tuple value $\vect y$ occurs in at most $N^{\epsilon_{\vect Y}}$ tuples in any relation, while there are at most $N^{1-\epsilon_{\vect Y}}$ heavy $\vect Y$-tuple values. For an atom $R(\vect Y,\vect{Z})$ with a join variable $\vect Y$, the relation $R$ can be partitioned into two disjoint fragments.
The first fragment contains all light values associated with the $\vect Y$-tuple variables ($\sigma_{\vect Y\in Light(\vect Y)}R(\vect Y,\vect Z)$) and the second fragment contains all the heavy values associated with the $\vect Y$-tuple variables ($\sigma_{\vect Y\in Heavy(\vect Y)}R(\vect Y,\vect Z)$). Furthermore, this partition can be generalized to atoms containing multiple join variables that are not necessarily disjoint.}
}

\nop{
Each vset imposes a constraint on the relation part. That is, only tuples associated with a heavy (or light) value of a join variable set can appear in a relation part, as indicated by the degree configuration. Each relation in the database can be reduced to a relation part.

\begin{definition}[Database Part]\label{def:database-part}
    Given a query $Q$, a database $D$, ordered join variable sets $\vect{Y}$, and a degree configuration $\vect{d}$, the corresponding \emph{database part} is the set of relation parts defined by $\vect{Y}$ and $\vect{d}$ for each relation in $D$.
\end{definition}
}

\nop{
prior version:
\begin{definition}[Relation Part]\label{def:relation-part}
    Given a query $Q$ with join variable sets $(\vect{Y_1}, ..., \vect{Y_n})$, degree configuration $\vect{d} = (d_1, ..., d_n)$, and atom $R(\vect{X})$ in $Q$ mapped to the database relation $\mathbf{R}$, the \emph{$\vect{d}$-restriction} of $\mathbf{R}$ is given by
    \[\prod_{i \in [n]: \vect{Y_i} \subseteq \vect{X}} \left( {\mathbf 1}_{d_i \in L}\cdot\sum_{\vect y \in Light(\vect{Y_i})} \sigma_{\vect{Y_i} = \vect y} \mathbf{R} + {\mathbf 1}_{d_i \in H}\cdot\sum_{\vect y \in Heavy(\vect{Y_i})} \sigma_{\vect{Y_i} = \vect y} \mathbf{R} \right) \]
where the indicator ${\mathbf 1}_{\varphi}$ is 1 in case $\varphi$ holds and 0 otherwise.
\end{definition}
}

\nop{
\begin{definition}[Degree Configuration]\label{def:degree-configuration}
    Given a query $Q$ with join variables $(Y_1,\ldots,Y_n)$, a {\em degree configuration} is a tuple $(d_1,\ldots,d_n)$, where $d_i=L$ in case $Y_i$ is light and $d_i=H$ in case $Y_i$ is heavy, for $i\in[n]$.
\end{definition}

\potential{\begin{definition}[Degree        Configuration]\label{def:degree-configuration}
    Given a query $Q$ with join variables $(\vect Y_1,\ldots,\vect Y_n)$, a {\em degree configuration} is a tuple $(d_1,\ldots,d_n)$, where $d_i\in\set{L,H}$, for $i\in[n]$. For a join variable $\vect Y_i$ we say that $\vect Y_i$ is light if $d_i=L$ and heavy otherwise.
\end{definition}}

There are $2^n$ degree configurations for a query \(Q\) with $n$ join variables. We denote by \(\vect D(Q)\) the set of all such degree configurations. The database can be partitioned into fragments such that there is one fragment for each degree configuration and the join variables are light or heavy in the fragment as indicated by the degree configuration.
}

\section{Overview of Our Adaptive IVM approach}
\label{sec:overview}

In this section, we overview our maintenance approach.
Given a join query $Q$, a database of size $N$, and a single-tuple update (tuple insert or delete), our approach updates  the query output and allows for constant-delay enumeration of the tuples in the query output.
Our approach is adaptive: For different heavy-light partitioning of the database on the columns corresponding to the join variables in $Q$ (as described in Sec.~\ref{sec:prelims}), it may use a different maintenance strategy. 

%%%%%%%%%%%%%%
\nop{
\begin{figure}[t]
  \centering
  \begin{tikzpicture}[
    node distance=12mm,
    every node/.style={circle, draw, minimum size=6mm, inner sep=0pt},
    >=stealth
  ]

    %---------------------------
    % Graph 1: DC((L,L,H,H), δR)
    %---------------------------
    \node (A1) {A};
    \node[right=of A1] (B1) {B};
    \node[below=of B1] (C1) {C};
    \node[below=of A1] (D1) {D};

    % edges: (A,B), (A,D), (B,A), (B,C)
    \draw[->] (A1) to[bend left=20] (B1);
    \draw[->] (B1) to[bend left=20] (A1);
    \draw[->] (A1) -- (D1);
    \draw[->] (B1) -- (C1);

    % \node[below=10mm of D1] {$\DC((L,L,H,H),\delta R)$};

    %---------------------------
    % Graph 2: C_1 (delete edge from B to C)
    %---------------------------
    \begin{scope}[xshift=4cm]
      \node (A2) {A};
      \node[right=of A2] (B2) {B};
      \node[below=of B2] (C2) {C};
      \node[below=of A2] (D2) {D};

      % edges: (A,B), (A,D), (B,A)  (no B->C)
      \draw[->] (A2) to[bend left=20] (B2);
      \draw[->] (A2) -- (D2);
      \draw[->] (B2) -- (C2);

      % \node[below=10mm of D2] {$\C_1$};
    \end{scope}

    %---------------------------
    % Graph 3: C_2 (delete edge from A to B)
    %---------------------------
    \begin{scope}[xshift=8cm]
      \node (A3) {A};
      \node[right=of A3] (B3) {B};
      \node[below=of B3] (C3) {C};
      \node[below=of A3] (D3) {D};

      % edges: (B,A), (A,D), (B,C)  (no A->B)
      \draw[->] (B3) to[bend left=20] (A3);
      \draw[->] (A3) -- (D3);
      \draw[->] (B3) -- (C3);

      % \node[below=10mm of D3] {$\C_2$};
    \end{scope}

  \end{tikzpicture}
  \caption{Degree graph (left) and its maximal acyclic subgraphs $\C_1$ (middle) and $\C_2$ (right).}
  \Description{Three directed graphs on nodes A, B, C, D. The left graph has edges A→B, B→A, A→D, B→C. The middle graph removes the edge from B to C. The right graph removes the edge from A to B.}
  \label{fig:degree-graphs}
\end{figure}
}
%%%%%%%%%%%%%%

\begin{example}\label{ex:4cycle-degree-constraints}
\rm
    We use as running example throughout this section the 4-cycle query: 
    \[
        Q(A,B,C,D) = R(A,B)\cdot S(B,C) \cdot T(C,D) \cdot U(D,A).
    \]
    The query has $4$ join variables, $2^4=16$ degree configurations. For instance, the degree configuration $\vect d = (L,L,H,H)$ for the tuple of join variables $(A,B,C,D)$ corresponds to the relation restrictions where $A$ and $B$ are light while $C$ and $D$ are heavy. Under an update $\delta R$, the delta query is: $\delta Q(A,B,C,D) = \delta R(A,B)\cdot S(B,C) \cdot T(C,D) \cdot U(D,A)$.
    For the degree configuration $\vect d$, threshold parameter $\epsilon$, and database size $N$, we have the following degree constraints: 
    \begin{align*}
        \DC(\at(\delta Q), \vect d) = \{&
          (BC | \emptyset, N), (CD | \emptyset, N), (AD | \emptyset, N),
          (AD | A, N^{\epsilon}), (BC | B, N^{\epsilon}),\\
          & (C | \emptyset, N^{1-\epsilon}), (D | \emptyset, N^{1-\epsilon}), (A | \emptyset, 1), (B | \emptyset, 1)\}.
    \end{align*}
    The first three constraints are size constraints (relations $S$, $T$, and $U$ have sizes at most $N$). The next two constraints are light constraints, e.g., there are at most $N^\epsilon$ $D$-values for a given $A$-value. The first two constraints in the second line are heavy constraints, e.g., there are at most $N^{1-\epsilon}$ $C$-values, while the last two constraints express that each of $A$ and $B$ is set to one value (due to the update $\delta R$). 
    
    \nop{Fig.~\ref{fig:degree-graphs} shows the graph of these degree constraints and its two maximal acyclic subgraphs:
    \begin{equation*}
        \C_1 = \DC(\at(Q), (L,L,H,H), \delta R)\setminus\set{(AB|B,N^\epsilon)},
        \C_2 = \DC(\at(Q),(L,L,H,H),\delta R)\setminus\set{(AB|A,N^\epsilon)}
    \end{equation*}
    }
\end{example}

We maintain the output of a join query  using trees of materialized views (Def.~\ref{def:viewtree}). 

\begin{figure}
    \centering
    \setlength{\extrarowheight}{4pt}
    \setlength{\tabcolsep}{10pt} % default is 6pt
    \begin{tabular}{c}
    %------------------ Top row: Figures 1–2 ------------------%
    \begin{tabular}{cc}
        % Figure 1
        \begin{tikzpicture}[
            scale=0.75,
            every node/.style={
            circle,
            draw=none,
            minimum size=2.5em,
            inner sep=0pt,
            font=\scriptsize
            },
            every edge/.style={>=Stealth, line width=0.7, draw=black},
            ]
        
            \node at (0,0) {$R(A, B)$};
            \node at (1.5,0) {$S(B, C)$};
            \node at (3,0) {$T(C, D)$}; 
            \node at (4.5,0) {$U(D, A)$};
            
            \node at (0.75,1) {$V_1(A,B,C)$}; 
            \node at (3.75,1) {$V_2(C,D,A)$};
               
            \node at (0.75,2) {$V_3(A,C)$};
            \node at (3.75,2) {$V_4(C,A)$};
    
            \node at (2.25,3) {$V_5(A,C)$};
    
            \draw (0,0.25) -- (0.6,0.75);
            \draw (1.5,0.25) -- (0.9,0.75); 
            \draw (3,0.25) -- (3.6,0.75);
            \draw (4.5,0.25) -- (3.9,0.75);
    
            \draw (0.75,1.25) -- (0.75,1.75);
            \draw (3.75,1.25) -- (3.75,1.75);
    
            \draw (0.75,2.25) -- (2.1,2.75);
            \draw (3.75,2.25) -- (2.4,2.75);
    
            \node at (2.25,-0.7) {View Tree 1};
        \end{tikzpicture}
        &
        % Figure 2
        \begin{tikzpicture}[
            scale=0.75,
            every node/.style={
            circle,
            draw=none,
            minimum size=2.5em,
            inner sep=0pt,
            font=\scriptsize
            },
            every edge/.style={>=Stealth, line width=0.7, draw=black},
            ]
        
            \node at (0,0) {$R(A, B)$};
            \node at (1.5,0) {$U(D, A)$};
            \node at (3,0) {$S(B, C)$}; 
            \node at (4.5,0) {$T(C, D)$};
            
            \node at (0.75,1) {$V_1(A,B,D)$}; 
            \node at (3.75,1) {$V_2(B,C,D)$};
               
            \node at (0.75,2) {$V_3(B,D)$};
            \node at (3.75,2) {$V_4(B,D)$};
    
            \node at (2.25,3) {$V_5(B,D)$};
    
            \draw (0,0.25) -- (0.6,0.75);
            \draw (1.5,0.25) -- (0.9,0.75); 
            \draw (3,0.25) -- (3.6,0.75);
            \draw (4.5,0.25) -- (3.9,0.75);
    
            \draw (0.75,1.25) -- (0.75,1.75);
            \draw (3.75,1.25) -- (3.75,1.75);
    
            \draw (0.75,2.25) -- (2.1,2.75);
            \draw (3.75,2.25) -- (2.4,2.75);
    
            \node at (2.25,-0.7) {View Tree 2};
        \end{tikzpicture}
    \end{tabular}
    \\[-0.5em]
    %------------------ Bottom row: Figures 3–6 ------------------%
    \begin{tabular}{cccc}
        \begin{tikzpicture}[
            scale=0.75,
            every node/.style={
            circle,
            draw=none,
            minimum size=2.5em,
            inner sep=0pt,
            font=\scriptsize
            },
            every edge/.style={>=Stealth, line width=0.7, draw=black},
            ]
        
            \node at (0,0) {$R(A, B)$};
            \node at (1.5,0) {$S(B, C)$};
            
            \node at (0.75,1) {$V_1(A,B,C)$}; 
               
            \node at (0.75,2) {$V_2(A,C)$};
            \node at (2.25,2) {$T(C,D)$};
    
            \node at (0.75,3) {$V_3(A,C,D)$};

            \node at (0.75,4) {$V_4(A,D)$};
            \node at (2.25,4) {$U(D,A)$};

            \node at (0.75,5) {$V_5(A,D)$};
    
            \draw (0,0.25) -- (0.6,0.75);
            \draw (1.5,0.25) -- (0.9,0.75);
    
            \draw (0.75,1.25) -- (0.75,1.75);

            \draw (0.75,2.25) -- (0.75,2.75);
            \draw (2.25,2.25) -- (0.9,2.75);

            \draw (0.75,3.25) -- (0.75,3.75);

            \draw (0.75,4.25) -- (0.75,4.75);
            \draw (2.25,4.25) -- (0.9,4.75);
    
            \node at (1.125,-0.7) {View Tree 3};
        \end{tikzpicture}
        &
        \begin{tikzpicture}[
            scale=0.75,
            every node/.style={
            circle,
            draw=none,
            minimum size=2.5em,
            inner sep=0pt,
            font=\scriptsize
            },
            every edge/.style={>=Stealth, line width=0.7, draw=black},
            ]
        
            \node at (0,0) {$R(A, B)$};
            \node at (1.5,0) {$S(B, C)$};
            
            \node at (0.75,1) {$V_1(A,B,C)$}; 
               
            \node at (0.75,2) {$V_2(A,C)$};
            \node at (2.25,2) {$U(D,A)$};
    
            \node at (0.75,3) {$V_3(A,C,D)$};

            \node at (0.75,4) {$V_4(C,D)$};
            \node at (2.25,4) {$T(C,D)$};

            \node at (0.75,5) {$V_5(C,D)$};
    
            \draw (0,0.25) -- (0.6,0.75);
            \draw (1.5,0.25) -- (0.9,0.75);
    
            \draw (0.75,1.25) -- (0.75,1.75);

            \draw (0.75,2.25) -- (0.75,2.75);
            \draw (2.25,2.25) -- (0.9,2.75);

            \draw (0.75,3.25) -- (0.75,3.75);

            \draw (0.75,4.25) -- (0.75,4.75);
            \draw (2.25,4.25) -- (0.9,4.75);
    
            \node at (1.125,-0.7) {View Tree 4};
        \end{tikzpicture}
        &
        \begin{tikzpicture}[
            scale=0.75,
            every node/.style={
            circle,
            draw=none,
            minimum size=2.5em,
            inner sep=0pt,
            font=\scriptsize
            },
            every edge/.style={>=Stealth, line width=0.7, draw=black},
            ]
        
            \node at (0,0) {$R(A,B)$};
            \node at (1.5,0) {$U(D,A)$};
            
            \node at (0.75,1) {$V_1(A,B,D)$}; 
               
            \node at (0.75,2) {$V_2(B,D)$};
            \node at (2.25,2) {$T(C,D)$};
    
            \node at (0.75,3) {$V_3(B,C,D)$};

            \node at (0.75,4) {$V_4(B,C)$};
            \node at (2.25,4) {$S(B,C)$};

            \node at (0.75,5) {$V_5(B,C)$};
    
            \draw (0,0.25) -- (0.6,0.75);
            \draw (1.5,0.25) -- (0.9,0.75);
    
            \draw (0.75,1.25) -- (0.75,1.75);

            \draw (0.75,2.25) -- (0.75,2.75);
            \draw (2.25,2.25) -- (0.9,2.75);

            \draw (0.75,3.25) -- (0.75,3.75);

            \draw (0.75,4.25) -- (0.75,4.75);
            \draw (2.25,4.25) -- (0.9,4.75);
    
            \node at (1.125,-0.7) {View Tree 5};
        \end{tikzpicture}
        &
        \begin{tikzpicture}[
            scale=0.75,
            every node/.style={
            circle,
            draw=none,
            minimum size=2.5em,
            inner sep=0pt,
            font=\scriptsize
            },
            every edge/.style={>=Stealth, line width=0.7, draw=black},
            ]
        
            \node at (0,0) {$S(B,C)$};
            \node at (1.5,0) {$T(C,D)$};
            
            \node at (0.75,1) {$V_1(B,C,D)$}; 
               
            \node at (0.75,2) {$V_2(B,D)$};
            \node at (2.25,2) {$U(D,A)$};
    
            \node at (0.75,3) {$V_3(A,B,D)$};

            \node at (0.75,4) {$V_4(A,B)$};
            \node at (2.25,4) {$R(A,B)$};

            \node at (0.75,5) {$V_5(A,B)$};
    
            \draw (0,0.25) -- (0.6,0.75);
            \draw (1.5,0.25) -- (0.9,0.75);
    
            \draw (0.75,1.25) -- (0.75,1.75);

            \draw (0.75,2.25) -- (0.75,2.75);
            \draw (2.25,2.25) -- (0.9,2.75);

            \draw (0.75,3.25) -- (0.75,3.75);

            \draw (0.75,4.25) -- (0.75,4.75);
            \draw (2.25,4.25) -- (0.9,4.75);
    
            \node at (1.125,-0.7) {View Tree 6};
        \end{tikzpicture}
    \end{tabular}
    \end{tabular}
    \vspace*{-2em}
    \caption{The six view trees used to maintain the 4-cycle query.}
    \label{fig:4-cycle-view-trees}
    % \Description{Six view trees for the 4-cycle query.}
\end{figure}

\begin{example}
\rm
    Fig.~\ref{fig:4-cycle-view-trees} depicts six possible view trees for the 4-cycle query. The leaves are atoms that correspond to the four relations, while the intermediate nodes are join or projection views. 
\end{example}

Each view tree admits a simple maintenance mechanism~\cite{FIVM,dynamicrelation}. Given an update to a relation, all views along the path from the leaf corresponding to the updated relation to the root may be affected by the update and we compute deltas for them. We refer to the modified tree, where the updated relation and the views along the path to the root are replaced by their deltas, as the delta view tree.
The enumeration of the tuples in the query output proceeds top-down in the view tree and needs constant delay per tuple.

\begin{example}\label{ex:fivm-maintenance}
\rm
    An update $\delta R: \{(a,b)\mapsto m\}$ to relation $R$, where $m=+1$ for an insert and $m=-1$ for a delete, in the first view tree in Fig.~\ref{fig:4-cycle-view-trees} triggers the computation of updates for the views along the path from the leaf $R$ to the root of the view tree:
    \begin{align*}
        \delta V_1 (A,B,C) & = \delta R(A,B) \cdot S(B,C) \hspace*{2em} &V_1 := V_1 \cup \delta V_1\\
        \delta V_3(A,C) &= \sum_B \delta V_1 (A,B,C) &V_3 := V_3 \cup \delta V_3\\
        \delta V_5(A,C) &= \delta V_3(A,C) \cdot V_4(C,A) &V_5 := V_5 \cup \delta V_5
    \end{align*}    
    This bottom-up propagation of the updates ensures that the views are calibrated top-down. For instance, all pairs $(a,c)$ in $V_5$ are also in all views and relations below $V_5$; furthermore, the $B$-values ($D$-values) paired with $(a,c)$ in $V_1$ ($V_2$) are also in $R$ and $S$ (respectively $T$ and $U$). Consequently, the tuples in the query output can be enumerated with constant delay. We enumerate with constant delay: the pairs $(a,c)$ in $V_5$, the $B$-values paired with $(a,c)$ in $V_1$ and the $D$-values paired with $(a,c)$ in $V_2$.    
\end{example}

View trees (and variants thereof) have been previously used by several IVM systems~\cite{QHierarchical,DBT:VLDBJ:2014,DynYannakakis:SIGMOD:2017,FIVM,dynamicrelation,CROWN}. For instance, F-IVM~\cite{FIVM} compiles a given query into one view tree, which is then maintained under updates as shown in Ex.~\ref{ex:fivm-maintenance}. As discussed in the introduction, the update time achieved by these approaches for arbitrary queries can be suboptimal.

Our approach can use several view trees. It considers a heavy-light partitioning of the data and can use different view trees for different  degree configurations. This adaptivity can lead to lower update times than when using a single view tree. The challenge brought by adaptivity is to algorithmically find (1) an asymptotically best view tree for each degree configuration and (2) the heavy-light threshold parameter $\epsilon$  that minimizes the update time for any given query.  We address this challenge as follows. 

In our approach, the update time is a function of the threshold parameter $\epsilon$. In particular, given a delta view tree and a degree configuration, each delta view is computed under the degree constraints parameterized by $\epsilon$. By parameterization, we mean that for a degree constraint $(\vect Z|\vect Y, N^{p_{\vect Z | \vect Y}})$, the exponent $p_{\vect Z | \vect Y}$ is a linear function of $\epsilon$, which evaluates to a positive rational number for a given value for $\epsilon$. The compute time for a delta view is given by $\bigO(N^s)$, where $N$ is the database size and $s$ is the optimal solution of a linear program that computes the polymatroid bound under degree constraints~\cite{CSMA,hung_algo}. This polymatroid bound is a generalization of the well-known AGM bound~\cite{AGM:2008} from size constraints to more general degree constraints that also include the light constraints.
The linear program of such bounds assigns a positive weight to each constraint such that for each query variable, the sum of the weights of the constraints that cover the variable is at least one. The objective is to minimize the sum of all weights, where each weight is multiplied by the exponent $p_{\vect Z | \vect Y}$ in the corresponding constraint.
A key observation is that the inequalities of the linear program do not depend on $\epsilon$. Therefore, the vertices of the polyhedron given by the feasible region of the linear program are independent of $\epsilon$. This also means that the optimal solution, which is given by one of these vertices, can be expressed as the minimum over all vertices of the objective instantiated for each vertex.

\begin{example}
\rm
\label{ex:update-times}
    Let us consider View Tree 4 (Fig.~\ref{fig:4-cycle-view-trees}), the degree configuration $\vect d =(L,L,H,H)$ for join variables $(A,B,C,D)$, and an update $\delta R$. This update triggers updates to the views along the path from $R$ to the root of the view tree. 
    We first consider $\delta V_1(A,B,C) = \delta R(A,B)\cdot S(B,C).$ 
    The degree constraints that hold at the leaves of $\delta V_1(A,B,C)$ in the delta view tree are: $$\C_1=\DC(\leaves(\delta V_1(A,B,C)),\vect d) = \{(BC|\emptyset, N), (BC| B, N^\epsilon), (C|\emptyset, N^{1-\epsilon}), (A|\emptyset, 1), (B|\emptyset, 1)\}.$$ We obtain the exponent $s$ of an upper bound $\bigO(N^s)$ on the time to compute $\delta V_1(A,B,C)$ using the following linear program that assigns a  weight (positive number) to each constraint. We have the following vector of weights: $
          \vect w
          =
          \bigl(
            w_{1},
            w_{2},
            w_{3},
            w_{4},
            w_{5}
          \bigr).
        $
    We assume that the order of the weights follows the order of the above constraints. The linear program is as follows:
    \begin{align*}
      \text{minimize}\quad
        & w_{1}\cdot 1
        + w_{2}\cdot \epsilon
        + w_{3}\cdot (1-\epsilon)
        + w_4 \cdot 0
        + w_5 \cdot 0 &&\\  
      \text{subject to}\quad
        & w_{4} \;\geq\; 1 && \quad \text{// covers } A\\
        & w_{1} + w_{5} \;\geq\; 1 && \quad \text{// covers } B\\
        & w_{1} + w_{2} +  w_{3} \;\geq\; 1 && \quad \text{// covers } C\\
        & \mathrlap{w_{1}, \dots, w_{5} \;\geq\; 0}
    \end{align*}
    The objective is the sum of all weights, each multiplied by the base-$N$ logarithm of the bound in the corresponding constraint; note the coefficient of $w_4$ and $w_5$ is $\log_N 1 = 0$, so these weights do not contribute to the objective. The program has one inequality per variable: the sum of the weights of those constraints that cover the variable must be at least 1. For the optimal solution, it is enough to consider the weight vectors that are the vertices of the convex polyhedron given by the feasible region of the above linear program: \((1, 0, 0, 1, 0), (0, 1, 0, 1, 1), (0, 0, 1, 1, 1)\).
    
    None of these vertices depend on $\epsilon$, yet the program solutions may be parameterized by $\epsilon$. For instance, 
    the solution given by the 3rd vertex, which sets $w_{3}=w_{4}=w_{5}=1$ and all other weights to 0, is $1-\epsilon$, whereas the solution given by the 2nd vertex, which sets $w_{2}=w_{4}=w_{5}=1$ and all other weights to 0, is $\epsilon$. The minimal solution is therefore at most $\min(\epsilon, 1-\epsilon)$. The delta view $\delta V_1$ is a join query that can be computed using an adaptation of a worst-case optimal join algorithm~\cite{hung_algo} in time $\bigO(N^{\min(\epsilon, 1-\epsilon)})$.

    We next consider $\delta V_2(A,C) = \sum_B \delta V_1(A,B,C)$. The time to compute $\delta V_2$ is asymptotically the same as for $\delta V_1$, since the former can be computed in one pass over the latter. In the paper, we introduce a systematic approach to upper bound the time to compute conjunctive queries with bound variables, such as $\delta V_2$. For this, we consider the set of constraints that hold at the leaves of $\delta V_2$, which is $\C_1$, projected onto different supersets of the set of its free variables $\{A,C\}$, i.e., $\C_2 = \C_1[AC] = \{(C|\emptyset, N), (C|\emptyset, N^{1-\epsilon}), (A|\emptyset, 1)\}$ and $\C_1=\C_1[ABC]$. We next discuss each of these two cases.

    Using $\C_2$, we can define the join query $Q_{\C_2}$ that over-approximates $\delta V_2$ in the sense that all tuples in $\delta V_2$ are also in $Q_{\C_2}$ (multiplicities are ignored): 
    $Q_{\C_2}(A,C) = \delta R'(A)\cdot S'(C)$, where $\delta R'$ is the projection of $\delta R$ onto $A$ and $S'$
    is the projection of $S$ onto $C$ (Def.~\ref{def:guarding-query}). The query $Q_{\C_2}$ can be computed in time $\bigO(N^{1-\epsilon})$: $\C_2$ states that we have one $A$-value and at most $N^{1-\epsilon}$ $C$-values. The output of $\delta V_2$ can be recovered by semi-join reducing $Q_{\C_2}$ with $\delta V_1$. The time to compute $\delta V_2$ is therefore asymptotically the same as for $Q_{\C_2}$. 
    
    In case we use $\C_1$, we retain $B$ and with it the light constraint and the update constraint on $B$. The join query defined by $\C_1$ is precisely $\delta V_1$, which is an over-approximation of $\delta V_2$ in the sense that we can recover the tuples of $\delta V_2$ from $\delta V_1$ in the same $\bigO(N^{\min(\epsilon, 1-\epsilon)})$ time as for  computing $\delta V_1$.
    
    We next consider $\delta V_3(A,C,D) = \delta V_2(A,C)\cdot U(D,A)$. We take the set $\C_3=\DC(\leaves(\delta V_3),\vect d)$ of constraints that hold at the leaves of $\delta V_3$ and project it onto the two possible supersets of the set of its free variables: $\C'_3=\C_3[ACD]=\{(C|\emptyset, N), (C|\emptyset, N^{1-\epsilon}), (A|\emptyset, 1),(D|\emptyset, N^{1-\epsilon}), (AD|\emptyset, N),(AD|A, N^{\epsilon})\}$ and $\C''_3=\C_3[ABCD]=\C_1\cup\{(D|\emptyset, N^{1-\epsilon}), (AD|\emptyset, N),(AD|A, N^{\epsilon})\}$. 
    
    In case of $\C'_3$, we use the join query $Q_{\C'_3}$ that over-approximates $\delta V_3$: $Q_{\C'_3}(A,C,D) = \delta R'(A)\cdot S'(C)\cdot U(D,A)$. We use that: There is one $A$-value that is also light, $D$ is heavy, and $C$ is heavy. This yields a $N^{\min(\epsilon,1-\epsilon)}$ bound on the number of $D$-values. The time is then $\bigO(N^{\min(\epsilon,1-\epsilon)+1-\epsilon}) = \bigO(N^{\min(1,2-2\epsilon)})$. 
    
    In case of $\C''_3$, we use the join query $Q_{\C''_3}(A,B,C,D) = \delta R(A,B)\cdot S(B,C)\cdot U(D,A)$. This query over-approximates $\delta V_3$ in the sense that for every tuple $\vect t$ in $\delta V_3$ there is a tuple $\vect t'$ in $Q_{\C'_3}$ such that $\vect t'.(ACD) = \vect t$. To compute this query, we observe that the number of $C$-values or of $D$-values is upper bounded by $N^{\min(\epsilon,1-\epsilon)}$, since both $C$ and $D$ are heavy and there is one $A$-value and one $B$-value and both values are light. This yields the compute time $\bigO(N^{2\min(\epsilon,1-\epsilon)})$.
    The analysis follows similarly for $\delta V_4(C,D) = \sum_A \delta V_3(A,C,D)$ and $\delta V_5(C,D) = \delta V_4(C,D) \cdot T(C,D)$. 
    
    Our approach needs to account for all such evaluation strategies for each delta view, as their time may depend on $\epsilon$ and it is only clear which ones support the lowest overall update time once we find the value of $\epsilon$ that minimizes the update time across all degree configurations and relation updates. 
\end{example}

The update time for a view tree and degree configuration is given by the maximum time to compute any delta view in the view tree. For a degree configuration, the update time is the minimum update time over all possible view trees. The overall update time is the maximum update time over all degree configurations. We call the base-$N$ logarithm of the overall update time for a query $Q$ the {\em maintenance width} of $Q$ and denote it by $\mw(Q)$ (Def.~\ref{def:maintenance-width}). This width is given by a nesting of minimizations and maximizations of linear functions in the threshold parameter $\epsilon$. Its minimal value, and $\epsilon$ that gives its minimal value, can be computed using a set of linear programs (Sec.~\ref{subsec:computing-mw}).

\begin{example}
\rm
    Table~\ref{4-cycle-update-times} gives the base-$N$ logarithm of the update time for each degree configuration, when using the best view tree for that configuration. Each of the six view trees from Fig.~\ref{fig:4-cycle-view-trees} is used by at least one configuration. No further view trees are needed to achieve the overall lowest update time.  
    The maintenance width $\mw(Q)$ of the 4-cycle query $Q$ is then the minimum over all expressions in the table column for the base-$N$ logarithm of the update time (we ignore wlog the expressions for all remaining view trees as they do not yield a smaller maintenance width). This width can be computed as the minimization of optimal solutions of linear programs:
    \begin{align*}
    \mw(Q) &=  \min_\epsilon \max(\epsilon, 1-\epsilon, \max(\min(2\epsilon, 2-2\epsilon), \min(2\epsilon, 1))) \\
    &=  \min_\epsilon \max(\epsilon, 1-\epsilon, \min(2\epsilon, 2-2\epsilon), \min(2\epsilon, 1)) \\
    &\overset{*}{=} \min_\epsilon \min(
    \max(\epsilon, 1-\epsilon,2\epsilon,2\epsilon), 
    \max(\epsilon, 1-\epsilon,2\epsilon,1),\\
    &\hspace*{4.75em}
    \max(\epsilon, 1-\epsilon,2-2\epsilon,2\epsilon), 
    \max(\epsilon, 1-\epsilon,2-2\epsilon,1))\\
    &\overset{+}{=} \min (\min_\epsilon \max(\epsilon, 1-\epsilon,2\epsilon,2\epsilon) \text{ s.t. } 0\leq \epsilon \leq 1\\
    &\hspace*{3em} \min_\epsilon\max(\epsilon, 1-\epsilon,2-2\epsilon,1) \text{ s.t. } 0\leq \epsilon \leq 1\\
    &\hspace*{3em} \min_\epsilon\max(\epsilon, 1-\epsilon,2-2\epsilon,2\epsilon) \text{ s.t. } 0\leq \epsilon \leq 1\\
    &\hspace*{3em} \min_\epsilon\max(\epsilon, 1-\epsilon,2-2\epsilon,1) \text{ s.t. } 0\leq \epsilon \leq 1 )
\end{align*}

%%%%%%%%%%%%%%%
\begin{table}[t]
\renewcommand{\arraystretch}{1.2}
\centering
\begin{tabular}{|c|c|r||c|c|r|}
\hline
\textbf{Configuration} & \textbf{View} & \textbf{\multirow{2}{*}{$\log_N$ UpdateTime}} & \textbf{Configuration} & \textbf{View} & \textbf{\multirow{2}{*}{$\log_N$ UpdateTime}} \\
$(A, B, C, D)$ & \textbf{Tree} & & $(A, B, C, D)$ & \textbf{Tree} &  \\
\hline
$*, L, *, L$ & $1$ & $\epsilon$ & $H, L, L, H$ & $3$ & $f(\epsilon)$ \\
\hline
$L, L, L, H$ & $2$ & $\epsilon$ & $H, L, H, H$ & $1$ & $1 - \epsilon$ \\
\hline
$L, L, H, H$ & $4$ & $f(\epsilon)$ & $H,H,L,L$ & $6$ & $f(\epsilon)$ \\
\hline
$L, H, L, *$ & $2$ & $f(\epsilon)$ & $H,H,L,H$ & $2$ & $1 - \epsilon$ \\
\hline
$L, H, H, L$ & $5$ & $f(\epsilon)$ & $H,H,H,*$ & $1$ & $1 - \epsilon$ \\
\hline
$L, H, H, H$ & $2$ & $1 - \epsilon$ & & & \\
\hline
\end{tabular}
\caption{The base-$N$ logarithm of the update time  for each degree configuration and a specific view tree from Fig.~\ref{fig:4-cycle-view-trees}.
Here, $f(\epsilon) \defeq \max(\min(2\epsilon, 2-2\epsilon)), \min(2\epsilon, 1))$; $(*)$ in the degree configuration indicates that the join variable can be either heavy or light.}
\label{4-cycle-update-times}
\end{table}
%%%%%%%%%%%%%%%

The equality (*) holds due to the distributivity of $\max$ over $\min$, while the equality (+) is due to the commutativity of the two $\min$ functions. We then have to take the minimum of the optimal solutions of four optimization problems, which can be encoded as linear programs. We show the equivalent linear program for the first optimization problem above:
\begin{align*}
    &\min_\epsilon\ \ q \ \ \text{ s.t. }
    \hspace*{2em} q \geq \epsilon
    \hspace*{2em} q \geq 1-\epsilon
    \hspace*{2em} q \geq 2\epsilon
    \hspace*{2em} 0 \leq \epsilon \leq 1.    
\end{align*}
The optimal solution is $2/3$ and obtained for $\epsilon=1/3$.

Once we know the value of $\epsilon$, we can decide which evaluation strategy is best for each delta view in each view tree and for each degree configuration.
\end{example}
We are now ready to state the main technical result of this paper. Our incremental view maintenance approach follows the setting of prior work~\cite{IVMeps:ICDT:2019,TriangleQuery,IVMeps:PODS:2020}. It 
has a preprocessing phase, in which the heavy-light partitioning happens and the views of the used view trees are materialized. Then, it receives a sequence of single-tuple updates (inserts and deletes) and processes one update at a time. After each update, it can resolve requests to enumerate the tuples in the query output with constant delay. After some updates, the heavy/light degree assignment of some values may become invalid as: (1) light (heavy) values become heavy (light) according to the current threshold $N^\epsilon$; or (2) the threshold itself changes significantly as $N$ changes~\cite{IVMeps:ICDT:2019}. In the first case, we need to move values between the light and heavy parts of relations; this is called minor rebalancing. In the second case, we need to recompute the partitioning and then the views from scratch; this is called major rebalancing.  Major and minor rebalancing only need to be performed after a sufficiently large number of updates so that the partitioning still guarantees the desired asymptotic complexity for the update time. By amortizing the cost of rebalancing over many updates, the update time remains the same, albeit amortized.   

\begin{restatable}{theorem}{ThMain}
    \label{th:main}
    Any join query $Q$ can be maintained with $\bigO(N^{1+ \mw(Q)})$ preprocessing time, amortized \(\bigO(N^{\mw(Q)})\) single-tuple update time, and $O(1)$ enumeration delay, where $N$ is the size of the database at the time of update and $\mw(Q)$ is the maintenance width of $Q$.
\end{restatable}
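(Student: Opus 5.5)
We prove Theorem~\ref{th:main} in three parts: the per-update cost for a fixed partition, the amortized cost of rebalancing, and enumeration. Fix $\epsilon$ as the value minimizing the maintenance width (Sec.~\ref{subsec:computing-mw}). For every degree configuration $\vect d\in{\mathcal D}(Q)$, we select a view tree $T_{\vect d}$ attaining the inner minimum in the definition of $\mw(Q)$. The database is partitioned into $\vect d$-restrictions, and $Q$ is the disjoint union of $Q$ evaluated on each restriction, since every join variable value is either light or heavy. For each $\vect d$ we maintain $T_{\vect d}$ over the $\vect d$-restrictions of the relations. There are $2^n$ configurations, a constant in data complexity.

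\textbf{Update time for a fixed partition.} A single-tuple update $\delta R_j$ belongs to exactly one $\vect d$-restriction of $R_j$ for each compatible $\vect d$. It only touches the configurations that agree with the heavy/light status of the tuple's values, and at most $2^n$ of them. In each touched tree we propagate deltas along the leaf-to-root path (Def.~\ref{def:viewtree}). For each delta view $\delta V(\vect X)$ on this path, the relations at $\leaves(\delta V)$ satisfy all constraints in $\DC(\leaves(\delta V),\vect d)$:
\begin{itemize}
\item the size constraints hold since each relation has size at most $N$;
\item the light and heavy constraints hold by the definition of $Light$ and $Heavy$, as long as the partition is valid up to constant factors;
\item the update constraints hold because $\delta R_j$ is a single tuple.
\end{itemize}
We pick the projection $\C[\vect V]$ with $\vect V\supseteq\vect X$ and a maximal acyclic subset realizing the exponent used in $\mw(Q)$. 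We then evaluate the over-approximating join query $Q_{\C}$ with the worst-case optimal algorithm of~\cite{hung_algo}, in time $\bigO(N^{s})$ where $s$ is the polymatroid bound. Acyclicity is what makes this bound achievable by that algorithm. Finally we semijoin-reduce $Q_{\C}$ with the child deltas, which are already computed bottom-up, and marginalize to $\vect X$. By construction $s\le\mw(Q)$. The path length is constant, and inserting the deltas into the materialized views costs at most the size of the deltas. Hence each update costs $\bigO(N^{\mw(Q)})$ before rebalancing.

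\textbf{Rebalancing.} We follow~\cite{IVMeps:ICDT:2019}, using slack: a value is reclassified only when its degree crosses $\tfrac12N^\epsilon$ or $2N^\epsilon$. Likewise, a major rebalancing is triggered only when $N$ leaves the interval $[\tfrac14 M, M)$ for the $M$ fixed at the last rebuild. These slack factors change the constraints only by constants, so the previous step still applies.

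A \emph{major rebalancing} recomputes the partition and all views from scratch. This is the preprocessing step. Each view is a projection of a join whose delta we bound by $N^{\mw(Q)}$, so materializing it costs $\bigO(N^{1+\mw(Q)})$. Concretely, we insert the $N$ tuples one at a time through the update procedure, or bound each view via the same polymatroid argument without the update constraint. Since $\Omega(N)$ updates separate two major rebalancings, the amortized cost is $\bigO(N^{\mw(Q)})$.

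A \emph{minor rebalancing} of a value $y$ moves its $\bigO(N^\epsilon)$ tuples between parts, each as a delete plus an insert costing $\bigO(N^{\mw(Q)})$. Moving $y$ requires $\Omega(N^\epsilon)$ updates touching $y$ since its last move, so the amortized cost is again $\bigO(N^{\mw(Q)})$. The main obstacle is this step: ensuring the move cost is charged correctly when the moved tuples themselves are processed under the old versus new classification. We handle it by performing the moves as ordinary updates under the slack-relaxed constraints.

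\textbf{Enumeration.} All trees are calibrated as in Ex.~\ref{ex:fivm-maintenance}, so each tree supports constant-delay enumeration~\cite{FIVM}. The outputs of different configurations are disjoint, because they partition the join variable values. Concatenating the $2^n$ enumerations therefore gives $\bigO(1)$ delay without duplicate elimination.
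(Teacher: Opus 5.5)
Your proposal is correct and follows essentially the same route as the paper's proof in Appendix~\ref{sec:main-proof}. Both use the per-update cost from Lemma~\ref{lemma:compute-time-delta-view} and Def.~\ref{def:maintenance-width} at the optimal $\epsilon$, preprocessing by inserting the $N$ tuples one at a time, amortized major and minor rebalancing with slack thresholds (Appendix~\ref{app:rebalancing}), and enumeration by concatenating the constant-delay enumerations of the disjoint per-configuration outputs. One point to tighten: the minor-rebalancing thresholds must be measured against the base fixed at the last major rebalancing (the paper's $M^\epsilon$), not the current $N^\epsilon$, since otherwise values could cross a threshold merely because $N$ changed and your charging of each move to $\Omega(N^\epsilon)$ updates touching that value would fail.
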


%%%%%%%%%%%%%%%%%%%%%%%%%%%%%
\nop{
\dan{
Please expand on each of the following steps. In this section, the presentation is sufficiently high-level, intermixed with the example on 4-cycle. For each major step listed below, there will be a dedicated section introducing the problem described at that step and a solution for it.

\begin{itemize}
    \item[1.] Decompose the input relations using global heavy-light partitions for the join columns. The threshold for the partition is parameterized by $\epsilon\in[0,1]$. We call the heavy/light assignment to the join variables a degree configuration. We are thus to maintain the query under each of the degree configurations. This step can be introduced completely here, there is no need for a separate section.

    For the 4-cycle example: show here the query and the degree configurations.

    \item[2.] To maintain the query under different degree configurations, we use view trees. Define the view trees here and make the claim that they are a different syntax for hypertree decompositions of the query. In a later section, we show that for each hypertree decomposition of the given query there is a view tree for the same query whose compute time is given by the fractional hypertree decomposition of the query. We also show that each view tree corresponds to a hypertree decomposition with redundant bags.

    We can then motivate the use of view trees for the maintenance of the query. Bottom-up propagation of updates, top-down enumeration of the tuples in the query output.
    
    For the 4-cycle example: show here the view trees we are going to use for the maintenance and exemplify one update and one enumeration.
    
    \item[3.] A big challenge is how to  cost the update time for each degree configuration and view tree. This requires computing: (1) the worst-case optimal size of each delta view of a materialized view in the view tree; (2) the delta view in worst-case optimal time, so proportional to its worst-case optimal size. There are two challenges here: (i) we need to do this in the context of a degree configuration; (ii) we do not know a priori which threshold parameter is best for the query, so we need to express the worst-case optimal size of the delta view as a function of the threshold parameter and later obtain its value that minimizes the maximal size of a delta view over all degree configurations.
    
    We solve (1) using an LP that gives the polymatroid bound, where the degree configuration, the sizes of the input relations, and the constant size on the domains of the variables in the updated relation are turned into information inequalities over the vector of all joint entropies of the query variables. For each delta view, there is such an LP whose objective is given by the maximization of the joint entropy of the free variables of that view. We solve (2) using a known algorithm by Hung et al (Gems of PODS 2018), with the caveat that we only use degree constraints that form an acyclic graph. So we can iterate over all maximal acyclic subset of the set of degree constraints.

    For the 4-cycle example: show here the table with the max parameterized sizes of the delta views over all degree configurations. 
    
    \item[4.] Now that we have the formula expressing the update time as a function of the threshold parameter, the challenge is to find the parameter that minimizes the overall parameterized update time. Show this minimisation problem can be cast as a set of linear programs and how it can be solved.

    For the 4-cycle example: show here which parameter is the best. 
        
    \item[5.] \dan{I am here.}

    Point back to the the table from the introduction with prior work for the different join queries. Argue how IVM$^\epsilon$ can solve any of these join queries with update time at least that from prior work.
    \item Step 7: Explain why the enumeration delay can be kept constant.
    \item Step 8: Explain rebalancing. Is this as in prior work?

\end{itemize}
}
}

\nop{

\subsection{Heavy/Light Partitions}
\label{subsec:heavy-light}

We begin with an overview of the structural preparation used in our incremental view maintenance procedure. Throughout this section we use the four cycle query
\[
Q(A,B,C,D) = R(A,B),\ S(B,C),\ T(C,D),\ U(A,D)
\]
as our running example.

The first part of the construction is a partition of the domain values into heavy and light values. Let $N$ denote the size of the database instance (i.e., the total number of tuples in all relations). For each attribute $A$ and each value $a$ we define the degree of $a$ with respect to $A$ as
\[
\deg(A,a)
=
\sum_{\substack{R(\vect{X}) \text{ an atom of } Q \\ A \in \vect{X}}}
\bigl|\{\, t \in R : t[A] = a \}\bigr|.
\]
A value $a$ is called heavy for $A$ if $\deg(A,a) \ge N^{\epsilon}$ and light otherwise. The parameter $\epsilon$ is globally fixed. This classification identifies the values that appear frequently across the relations of the query.

Once each attribute has been equipped with its heavy and light values we refine every relation accordingly. Let $R$ be a relation with attributes $A_1,\ldots,A_m$. For each choice
$(x_1,\ldots,x_m) \in \{L,H\}^{m}$ we define a \emph{partition} of $R$ as
\[
R^{(x_1,\ldots,x_m)}
=
\{\, t \in R : t[A_i] \text{ is light if } x_i = L \text{ and heavy if }
x_i = H \text{ for each } i \,\}.
\]
These partitions are pairwise disjoint and their union is $R$. The refinement is performed on every relation of the query and therefore on every attribute. In the case of the four-cycle query, each relation has two attributes and hence yields four partitions of the form $R(A^L,A^L)$, $R(A^L,B^H)$, $R(A^H,B^L)$ and $R(A^H,B^H)$. The same refinement applies to $S$, $T$ and $U$.

\subsection{Maintenance Hypertrees}

We now give an informal overview of the construction that produces the
maintenance hypertree. The formal definition appears in
\Cref{sec:maintenance-hypertrees}. To convey the main idea we illustrate
the three phases on the four cycle query
\[
Q_{\square}(A,B,C,D) = R(A,B),\ S(B,C),\ T(C,D),\ U(A,D).
\]
We work with the hypertree decomposition that has the bags
$\{A,B,C\}$ and $\{A,C,D\}$.

In the first phase we refine the edge of the decomposition by inserting
a bag that stores the intersection of the two original bags. This
produces a path of the form
$\{A,B,C\}$, $\{A,C\}$, $\{A,C,D\}$.

In the second phase we add one leaf bag for each relation of the query. Each relation bag contains exactly the attributes of its relation and is attached to an original bag that covers these attributes. For the four cycle query this yields the relation bags for $R$, $S$, $T$ and $U$ attached to the two original bags. These two steps can be observed in \Cref{fig:4cycle-steps12}

In the third phase we choose a root. Any bag may serve as the root. If an original or intersection bag is chosen then the tree is only oriented. If a relation bag is chosen then the edge to its incident original bag is expanded into a short path of relation bags so that the chosen bag becomes a valid root. \Cref{fig:4cycle-step3} shows this last step. It illustrate both the easy case in which the intersection bag $\{A,C\}$ is chosen as the root and the case in which the relation bag for $R(A,B)$ is chosen as the root.

\begin{figure}
\centering

% (a) original hypertree
\begin{minipage}{0.32\linewidth}
\centering
\begin{tikzpicture}[scale=0.8, every node/.style={font=\scriptsize}]
  \node[original] (B1) at (0,0)   {$\{A,B,C\}$};
  \node[original] (B2) at (2.0,0) {$\{A,C,D\}$};
  \draw (B1) -- (B2);
\end{tikzpicture}
\caption*{\scriptsize (a) Initial hypertree decomposition.}
\end{minipage}
\hfill
% (b) after Step 1
\begin{minipage}{0.32\linewidth}
\centering
\begin{tikzpicture}[scale=0.8, every node/.style={font=\scriptsize}]
  \node[original]  (B1) at (-1.4,0) {$\{A,B,C\}$};
  \node[intersect] (I)  at ( 0.0,0) {$\{A,C\}$};
  \node[original]  (B2) at ( 1.4,0) {$\{A,C,D\}$};

  \draw (B1) -- (I);
  \draw (I)  -- (B2);
\end{tikzpicture}
\caption*{\scriptsize (b) After Step~1: intersection bag \(\{A,C\}\).}
\end{minipage}
\hfill
% (c) after Step 2
\begin{minipage}{0.32\linewidth}
\centering
\begin{tikzpicture}[scale=0.8, every node/.style={font=\scriptsize}]
  % bags after Step 1
  \node[original]  (B1) at (-1.4,0) {$\{A,B,C\}$};
  \node[intersect] (I)  at ( 0.0,0) {$\{A,C\}$};
  \node[original]  (B2) at ( 1.4,0) {$\{A,C,D\}$};

  \draw (B1) -- (I);
  \draw (I)  -- (B2);

  % relation bags attached to B1
  \node[relation] (R) at (-1.4, 1.3) {$R$};
  \node[relation] (S) at (-1.4,-1.3) {$S$};
  \draw (R) -- (B1);
  \draw (S) -- (B1);

  % relation bags attached to B2
  \node[relation] (T) at ( 1.4, 1.3) {$T$};
  \node[relation] (U) at ( 1.4,-1.3) {$U$};
  \draw (T) -- (B2);
  \draw (U) -- (B2);
\end{tikzpicture}
\caption*{\scriptsize (c) After Step~2: insertion of relation leaf bags for \(R,S,T,U\).}
\end{minipage}

\caption{Steps 1 and 2 on the 4-cycle query.}
\label{fig:4cycle-steps12}
\end{figure}

\begin{figure}
\centering

\begin{minipage}{0.47\linewidth}
\centering
\begin{tikzpicture}[scale=0.7, every node/.style={font=\scriptsize}]

  % easy case: root is the intersection bag {A,C} (placed highest)
  \node[intersect] (I)  at (0,1.5) {$\{A,C\}$};

  \node[original]  (B1) at (-1.4,0.2) {$\{A,B,C\}$};
  \node[original]  (B2) at ( 1.4,0.2) {$\{A,C,D\}$};
  \draw (B1) -- (I);
  \draw (I)  -- (B2);

  \node[relation] (R) at (-2.3,-1.0) {$R$};
  \node[relation] (S) at (-0.5,-1.0) {$S$};
  \draw (R) -- (B1);
  \draw (S) -- (B1);

  \node[relation] (T) at ( 0.5,-1.0) {$T$};
  \node[relation] (U) at ( 2.3,-1.0) {$U$};
  \draw (T) -- (B2);
  \draw (U) -- (B2);

\end{tikzpicture}
\caption*{\scriptsize Step~3, easy case: intersection bag \(\{A,C\}\) as root.}
\end{minipage}
\hfill
\begin{minipage}{0.47\linewidth}
\centering
\begin{tikzpicture}[scale=0.7, every node/.style={font=\scriptsize}]

  % root R with children l_R'' (left) and l_R' (right)
  \node[relation] (R)    at (0.0,3.0) {$R$};           % root
  \node[relation] (Rpp)  at (-1.4,2.0) {$\ell_R''$};   % left child
  \node[relation] (Rp)   at ( 1.4,2.0) {$\ell_R'$};    % right child

  \draw (R) -- (Rpp);
  \draw (R) -- (Rp);

  % {A,B,C} as child of l_R'
  \node[original] (B1)   at ( 1.4,1.0) {$\{A,B,C\}$};
  \draw (Rp) -- (B1);

  % children of {A,B,C}: S and {A,C}
  \node[relation] (S) at (0.4,0.0) {$S$};
  \node[intersect] (I) at (2.8,0.0) {$\{A,C\}$};

  \draw (S) -- (B1);
  \draw (B1) -- (I);

  % {A,C,D} as child of {A,C}
  \node[original]  (B2) at (2.8,-1.0) {$\{A,C,D\}$};
  \draw (I) -- (B2);

  % children of {A,C,D}: T and U
  \node[relation] (T) at (2.0,-2.0) {$T$};
  \node[relation] (U) at (3.6,-2.0) {$U$};
  \draw (T) -- (B2);
  \draw (U) -- (B2);

\end{tikzpicture}
\caption*{\scriptsize Step~3, hard case: relation bag for \(R(A,B)\) as root.}
\end{minipage}

\caption{Illustration of Step~3 on the 4 cycle decomposition.}
\label{fig:4cycle-step3}
\end{figure}
}

%%%%%%%%%%%%%%
%%%%%%%%%%%%%%

\nop{
\begin{example}
We continue our running example and show the linear program whose optimal solution gives an upper bound on the size of the update to the view $V_3(A,C,D)$ from View Tree 4 (Fig.~\ref{fig:4-cycle-view-trees}) used to maintain the 4-cycle query under an update $\delta R$ for the degree configuration \((L,L,H,H)\), \Cref{4-cycle-update-times}.
We start by projecting the set of degree constraints onto the set of variables \(\set{A,C,D}\) of \(V_3\):
\begin{align*}
    \DC((L,L,H,H),\leaves(V_3),\delta R))[A,C,D] = \{&(A|\emptyset,N),(C|\emptyset,N), (CD|\emptyset,N),(AD|\emptyset,N),\\
    &(AD|A,N^{\epsilon}),(C|\emptyset,N^{1-\epsilon}),(D|\emptyset,N^{1-\epsilon}),(A|\emptyset,1)\}.
\end{align*}
\ahmet{Where does the constraint $(CD|\emptyset,N)$ come from? It could come from the atom $T(CD)$, but this atom  is not in $\leaves(V_3)$.}
The program assigns a  weight (positive number) to each constraint. We have the following vector of weights: $
      \vect w
      =
      \bigl(
        w_{1},
        w_{2},
        w_{3},
        w_{4},
        w_{5},
        w_{6},
        w_{7},
        w_{8}
      \bigr).
    $
We assume that the order of the weights follows the order of the above constraints. The resulting linear program is as follows:
\begin{align*}
  \text{minimize}\quad
    & w_{1}\cdot \log N
    + w_{2}\cdot \log N
    + w_{3}\cdot \log N
    + w_{4}\cdot \log N \\
    & {}+ w_{5}\cdot \epsilon\cdot \log N 
    + w_{6}\cdot (1-\epsilon)\cdot \log N
    + w_{7}\cdot (1-\epsilon)\cdot \log N 
    + w_{8}\cdot \log 1
\end{align*}
\begin{alignat*}{2}
  \text{subject to}\quad
    & w_{1} + w_{4} + w_{8} \;\geq\; 1 && \quad \text{// covers } A\\
    & w_{2} + w_{3} +  w_{6} \;\geq\; 1 && \quad \text{// covers } C\\
    & w_{3}+ w_{4} + w_{5} + w_{7} \;\geq\; 1 && \quad \text{// covers } D\\
    & \mathrlap{w_{1}, \dots, w_{8} \;\geq\; 0}
\end{alignat*}
\ahmet{The additive factor $w_{5}$ is missing in the constraint for $A$.}
    For the optimal solution, it is enough to consider the weight vectors that are the vertices of the convex polyhedron given by the feasible region of the above linear program:
    \begin{align*}
        &(0, 1, 0, 1, 0, 0, 0, 0),
        (0, 0, 0, 1, 0, 1, 0, 0),
        (0, 0, 1, 0, 0, 0, 0, 1),
        (0, 0, 0, 0, 0, 1, 1, 1),
        (0, 1, 0, 0, 0, 0, 1, 1),\\
        &(0, 1, 0, 0, 1, 0, 0, 1),
        (0, 0, 0, 0, 1, 1, 0, 1),
        (0, 0, 1, 1, 0, 0, 0, 0),
        (1, 0, 0, 0, 1, 1, 0, 0),
        (1, 1, 0, 0, 1, 0, 0, 0),\\
        &(1, 1, 0, 0, 0, 0, 1, 0),
        (1, 0, 0, 0, 0, 1, 1, 0),
        (1, 0, 1, 0, 0, 0, 0, 0)
    \end{align*}
    None of these vertices depend on $\epsilon$, yet the program solutions may be parameterized by $\epsilon$. For instance, 
    the solution given by the fourth vertex, which sets $w_{6}=w_{7}=w_{8}=1$ and all other weights to 0, is $(1-\epsilon)\cdot\log N + (1-\epsilon)\cdot\log N +\log 1= \log N^{2-2\epsilon}$. This yields the upper bound $N^{2-2\epsilon}$ on the size of the view update. 
    The solution given by the first vertex, which sets $w_{2}=w_{4}=1$, is $\log N + \log N = \log N^2$, which is independent of $\epsilon$.
    The smallest upper bound is given by the minimum of the objective functions over all of these vertices.
    This view update is defined by a join query and can be computed in time proportional to the size upper bound given by the above linear program using an adaptation of a worst-case optimal join algorithm, e.g., LeapFrog TrieJoin~\cite{LFTJ}, to observe acyclic degree constraints~\cite{hung_algo}.
\end{example}
}

\nop{
\begin{example}
\label{ex:V3(ACD)}
    We continue our running example by showing how to compute the delta of  $V_3(A,C,D)$ from View Tree 4 (Fig.~\ref{fig:4-cycle-view-trees}) under an update $\delta R$ for the degree configuration \(\vect d =(L,L,H,H)\).

    As explained in TODO, we start by propagating up all the degree constraints from the leaves. This gives 
    \begin{align*}
        \C &= \DC(\leaves(V_2(A,C)),\vect d,\delta R)[AC]\cup \DC(\leaves(U(A,D),\vect D,\delta R))[AD]\\
        &=\set{(A|\emptyset,N),(C|\emptyset,N),(C|\emptyset,N^{1-\epsilon}), (A|\emptyset,1)}\cup \set{(AD|\emptyset,N),(AD|A,N^\epsilon),(D|\emptyset,N^{1-\epsilon})}.
    \end{align*}

    This set of degree constraints is acyclic. We assume that \(\delta V_2(A,C)\) under update \(\delta R\) has already been computed. For the join query \(\delta V_3(A,C,D) = \delta V_2(A,C)\cdot U(A,D)\) we can use the worst-case optimal join algorithm presented in~\cite{hung_algo}, and observe the acyclic degree constraints \(\C\) to argue that the runtime is bounded by \(\bigO(N^{2-2\epsilon})\)~\cite{hung_algo}. This bound is given by the following linear program:

    \begin{align*}
      \text{minimize}\quad
        & w_{1}\cdot \log N
        + w_{2}\cdot \log N
        + w_{3}\cdot (1-\epsilon)\log N
        + w_{4}\cdot \log 1 \\
        & {}+ w_{5}\cdot \log N 
        + w_{6}\cdot \epsilon\cdot \log N
        + w_{7}\cdot (1-\epsilon)\cdot \log N 
    \end{align*}
    \begin{alignat*}{2}
      \text{subject to}\quad
        & w_{1} + w_{4} + w_{5} + w_{6} \;\geq\; 1 && \quad \text{// covers } A\\
        & w_{2} + w_{3} \;\geq\; 1 && \quad \text{// covers } C\\
        & w_{5}+ w_{6} + w_{7}  \;\geq\; 1 && \quad \text{// covers } D\\
        & \mathrlap{w_{1}, \dots, w_{7} \;\geq\; 0}
    \end{alignat*}
    We obtain the bound by setting \(w_2=w_4=w_7=1\) and the rest of the variables to zero. However, this bound does not take into account the fact that the output size of the \(\delta V_2(A,C)\) under an update \(\delta R\) is small. Furthermore, \(\delta V_2(A,C)\) is already computed by the time we need to compute \(\delta V_3(A,C,D)\).
    
    While we cannot always compute the delta of a view in time proportional to its size since some of the views marginalize out variables, we can use previous computations and the bounds on their sizes to upper bound the cost of computing the parent views and refine our analysis. 
    
    We aim to find the best size bounds of the children of \(V_3(A,C,D)\) under the update \(\delta R\). The size of \(U(A,D)\) is bounded by \(N\) since this is an atom of \(Q(A,B,C,D)\). To bound the size of \(V_2(A,C)\) under an update \(\delta R\) we have to look at the following set of degree constraints: 
    \begin{align*}
        \DC(\leaves(V_2(A,C)),(L,L,H,H),\delta R)) = \{&(AB|\emptyset,N),(BC|\emptyset,N)
        (AB|A,N^{\epsilon}),(AB|B,N^\epsilon),\\&(BC|B,N^\epsilon),(C|\emptyset,N^{1-\epsilon}),(A|\emptyset,1),(B|\emptyset,1)\}.
    \end{align*}
    
    The symbolic bounds that we obtain are \(N^{\epsilon}\), \(N^{1-\epsilon}\). These follow from the fact that the value of variable \(A\) is fixed by the update \(\pm R(a,b)\), and we can pair this value \(a\) with at most \(N^{1-\epsilon}\) \(C\) values because \(C\) is heavy or at most \(N^{\epsilon}\) \(C\) values because \(B\) is light. 
    
    Thus, we obtain the constraints \((AC|\emptyset,N^{\epsilon}), (AC|\emptyset,N^{1-\epsilon})\) that we can add to our degree constraint set \(\C\). We observe that adding these degree constraints to \(\C\) keeps the set acyclic. Thus, we can again use the argument presented in \cite{hung_algo} that gives a bound on the runtime complexity of a join query under acyclic degree constraints to obtain the new symbolic bound of \(\bigO(N^{2\epsilon}\) on the time needed for computing \(\delta V_3(A,C)\). Thus, we conclude that overall complexity is \(\bigO(\min(N^{2-2\epsilon},N^{2\epsilon}))\).

    (I could show the LP for the second bound, but I think the example might become too big. )
\end{example}

}
\section{The Maintenance Width}
\label{sec:maintenance-width}
In this section, we introduce the \emph{maintenance width} and show how to compute it. This measure is central to our approach as it is the exponent of the update time incurred by our approach for maintaining the query output under database updates.

Consider a view \(V(\vect X)\) defined over a database that satisfies a set of degree constraints. Given an update \(\delta R_i\), we bound the time needed to compute the  delta view \(\delta V\) using a formulation~\cite{CSMA,hung_algo} as a linear optimization problem that generalizes the well-known AGM bound~\cite{AGM:2008} to incorporate the degree constraints. 

Adapting this framework to our setting presents two challenges. First, the original algorithm~\cite{CSMA,hung_algo} is designed for join queries (all variables are free); we must extend it to support views, where some variables are marginalized out. Second, our degree constraints depend on the threshold parameter \(\epsilon\), which is not fixed in advance. Consequently, we cannot solve the bounding linear program numerically. Instead, we solve it symbolically to obtain an objective value expressed as an explicit function of \(\epsilon\). We then select the optimal \(\epsilon\) that minimizes the worst-case maintenance cost across all degree configurations.

\subsection{The Polymatroid Bound under Degree Constraints}

In this section, we first revisit prior work on the polymatroid bound~\cite{hung_algo}. This bound is essential to our analysis of the maintenance time under different degree configurations and updates.

\begin{definition}[Polymatroid Bound under Degree Constraints] \cite[Eq.~48]{hung_algo}
\label{def:PBD}
Given a set \(\C=\{c_1,\ldots,c_m\}\) of degree constraints, we define the \emph{Polymatroid Bound under Degree Constraints}, denoted by \(\bound(\C)\), as the optimal value of the following linear program:

    \begin{align}
        \label{eq:bounding-LP}
        \text{minimize } & \sum_{c_i=(\vect Z|\vect Y,N^{p_{i}})\in\C} w_{i} \cdot p_{i}
        \\
        \label{eq:feasible1}
        \text{subject to } &\sum_{\substack{
        c_i=(\vect Z|\vect Y,N^{p_{i}})\in\C\\
        A\in \vect Z\setminus\vect Y
        }}w_{i}\geq 1 && \forall A\in\vars(\C)\\
        \label{eq:feasible2}
        &w_{i}\geq 0 && \forall c_i=(\vect Z|\vect Y,N^{p_{i}})\in\C
    \end{align}
\end{definition}

Prior work~\cite{hung_algo} showed that join queries can be computed worst-case optimally in the presence of acyclic degree constraints. In particular, given a join query $Q$ and an acyclic set $\C$ of constraints, where each constraint is guarded by database relations used in $Q$, then $Q$ can be evaluated in time 
$$\bigO\left(|\vars(Q)|\cdot |\C|\cdot~[N+N^{\bound(\C)}]\cdot \log N\right)~\cite{hung_algo}.$$
The \(\log N\) factor in the runtime is due to the use of B-tree indices and can be dropped 
by using hash maps to represent the relations. The additive \(N\log N\) factor is due to pre-computation. Notice that Inequalities~\ref{eq:feasible1},~\ref{eq:feasible2} are trivially satisfied by setting all variables \(w_{i}\) to 1, so \(\bound(\C)\) is always defined.

\subsection{Upper Bounding the Time to Compute a Delta View}

In our work we need to compute delta views that are not necessarily join queries. For this, we extend the prior work~\cite{hung_algo} to compute queries with arbitrary bound variables under degree constraints. We do this in several steps. First, we show how to derive possible join queries that over-approximate a given delta view. We tailor the set of constraints that are satisfied by the database to those relations and variables that are relevant to the over-approximation join queries.
We then use the algorithm from prior work~\cite{hung_algo} to compute the over-approximation join queries under specific constraints and take the over-approximation with the lowest time complexity. Below, we make this plan concrete.

\begin{definition}[Guarding Query]\label{def:guarding-query}
Let $\vect d$ be a degree configuration, $\delta T_R$ be a delta view tree for an update $\delta R_j$, $\delta V(\vect X)$ be a delta view in $\delta T_{R_j}$, and $\C=\DC(\leaves(\delta V(\vect X)),\vect d)$.
For any set $\vect Y$ of variables such that $\vect X\subseteq \vect Y\subseteq \vars(\C)$ and any acyclic set $\C'\in{\mathcal A}(\C[\vect Y])$ of degree constraints, we define the $\C'$-guarding query of $\delta V$ as the join query: 
\begin{equation*}
\label{eq:projected-ass-query}
Q_{\C'}(\vect Y) = R'_1(\vect X'_1)\cdot\ldots\cdot R'_k(\vect X'_k)
\end{equation*}
where $\set{R_1(\vect X_1),\dots,R_k(\vect X_k)}$ is the set of atoms at the leaves of $\delta V(\vect X)$ in $\delta T_{R_j}$ that guard the constraints in $\C'$ and includes $\delta R_j(\vect X_j)$, and $R'_i(\vect X'_i) = \sum_{\vect X_i\setminus \vect Y} R_i(\vect X_i)$ and $\vect X_i' = \vect X_i\cap \vect Y$ for $i\in[k]$.
\end{definition}

Ex.~\ref{ex:update-times} gives guarding queries for several delta views. 

\begin{remark}
    In Def.~\ref{def:guarding-query}, $\C'$ is a maximal acyclic subset of $\C[\vect Y]$. It contains all size, heavy, and update constraints from $\C[\vect Y]$, since these constraints are of the form $(\vect Z|\emptyset, N^{p_{\vect Z|\emptyset}})$ and cannot be part of a cycle in the constraint graph. 
    Therefore, $\C'$ may only miss some of the light constraints from  
    $\C[\vect Y]$. Yet the variables in these light constraints are also covered by a size constraint. The implication is twofold. First, the atoms that guard the constraints in $\C'$ are also those that guard the constraints in $\C[\vect Y]$. Second, $\vars(\C') = \vars(\C[\vect Y])$.
\end{remark}

\begin{example}
\rm
\label{ex:V5(A,C)}
    We discuss the computation of the delta view \(\delta V_5(A,C)\) in the delta view tree based on View Tree 1 (Fig.~\ref{fig:4-cycle-view-trees}) under the update \(\delta R\) and the degree configuration \(\vect d = (L,L,\cdot,\cdot)\), so where the join variables \(A\) and \(B\) are light and where -- for simplicity here -- we do not partition on the remaining join variables $C$ and $D$. The following degree constraints are guarded by the leaves of $\delta V_5(A,C)$: 
    \begin{align*}
        \C &= \DC(\leaves(\delta V_5),\vect d) \\
        &= \{(A|\emptyset,1),(B|\emptyset,1),(BC|B,N^\epsilon),(BC|\emptyset,N),
        (CD|\emptyset,N), (AD|A,N^\epsilon), (AD|\emptyset,N)\}.
    \end{align*}
    The \(\C\)-guarding query is \(Q_\C(A,B,C,D) = \delta R(A,B)\cdot S(B,C)\cdot T(C,D)\cdot U(A,D)\) and its output can be computed in worst-case optimal time \(\bigO(N^{\bound(\C)})\)~\cite{hung_algo}. Since the two queries \(Q_{\C}\) and \(\delta V_5\) have the same body, it follows that if we project the output of \(Q_{\C}(A,B,C,D)\) onto the variables \(\set{A,C}\) we obtain the output of \(\delta V_5\). However, \(\bound(\C)\) evaluates to \(1\) or \(2\epsilon\) (by using the first, second, and fifth constraints or by using the first, second, third, and sixth constraints to cover all variables) which gives the upper bounds \(\bigO(N)\) and \(\bigO(N^{2\epsilon})\). Depending on the value of \(\epsilon\), both of these bounds can be tight upper bounds on the time needed to compute the output of \(Q_\C\), but they are loose on the time needed to compute the output of \(\delta V_5\). As we show in later examples in this section, the output of \(\delta V_5\) can be computed in time \(\bigO(N^\epsilon)\). The reason is that the linear program for \(\bound(\C)\) needs to cover all variables, including the bound variables \(B\) and \(D\), and this increases the cost of its optimal solution. 
\end{example}

Any guarding query of a delta view {\em over-approximates} the delta view in the sense that  we can recover the output of a delta view from the output of any of its guarding queries.

\begin{lemma}\label{lemma:over-approximation}
    For any delta view $\delta V(\vect X)$ and any of its $\C[\vect Y]$-guarding queries $Q_{\C[\vect Y]}(\vect Y)$, where $\C$ is the set of constraints that are guarded by the leaves of $\delta V(\vect X)$ and $\vect X\subseteq \vect Y$, it holds: For any tuple $\vect x$ in the output of $\delta V$ over any database, there is a tuple $\vect y$ in the output of $Q_{\C[\vect Y]}$ over the same database such that $\vect x = \vect y.\vect X$.     
\end{lemma}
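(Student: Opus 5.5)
The plan is to unfold the definition of the delta view $\delta V(\vect X)$ as a query over the atoms at its leaves. Then we exhibit, for every output tuple $\vect x$, a witnessing assignment to all variables under the leaves. Its restriction to $\vect Y$ will be the required tuple $\vect y$.

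First, by induction on the structure of the delta view tree (Def.~\ref{def:viewtree}), we show that $\delta V(\vect X) = \sum_{\vect W\setminus\vect X} \prod_{R_i\in\leaves(\delta V)} R_i(\vect X_i)$, with $\delta R_j$ in place of $R_j$ and $\vect W$ the union of the leaf schemas.
\begin{itemize}
\item Join views multiply their children.
\item Projection views marginalize variables.
\item The side condition in Def.~\ref{def:viewtree} guarantees that every marginalized variable occurs only in atoms of the current subtree. Hence marginalizations can be pushed outward without capture, using distributivity of $\cdot$ over $\sum$.
\end{itemize}

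Second, fix a tuple $\vect x$ with nonzero multiplicity in $\delta V$. We obtain a full assignment $\vect w\in\Dom(\vect W)$ with $\vect w.\vect X=\vect x$ and every factor nonzero, so $R_i(\vect w.\vect X_i)\neq 0$ for every leaf atom. This step is the main obstacle. With integer multiplicities, a nonzero sum over $\vect w$ requires some summand $\prod_i R_i(\vect w.\vect X_i)$ to be nonzero; cancellation cannot make a zero-sum produce $\vect x$. The issue is that "output of $\delta V$" includes tuples with negative multiplicity (deletes). So membership must be read as nonzero multiplicity, not positive multiplicity. I would state this reading explicitly, note that multiplicities are ignored as in Ex.~\ref{ex:update-times}, and then the pigeonhole-on-summands argument goes through.

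Third, set $\vect y=\vect w.\vect Y$. This is well defined because $\vect Y\subseteq\vars(\C)\subseteq\vect W$ (the update constraints cover $\vect X_j$, and the size constraints cover the other leaf schemas). Clearly $\vect y.\vect X=\vect x$ since $\vect X\subseteq\vect Y$. For each atom $R_i$ used by $Q_{\C[\vect Y]}$ (a subset of the leaves, including $\delta R_j$), the projection $R'_i=\sum_{\vect X_i\setminus\vect Y}R_i$ contains $\vect y.\vect X'_i = \vect w.\vect X'_i$, because $\vect w.\vect X_i$ is in $R_i$. As before, we treat containment set-wise, or use the fact that base relations have positive multiplicities, with $\delta R_j$ a single tuple. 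Therefore every factor of $Q_{\C[\vect Y]}$ is nonzero at $\vect y$, and $\vect y$ is in its output, which concludes the proof.
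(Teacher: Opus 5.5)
Your proposal is correct and follows essentially the same route as the paper's proof. You unfold $\delta V$ as the marginalized product of its leaf relations, pick a nonzero summand to get a full witness assignment, and restrict it to $\vect Y$ so that every marginalized factor $R'_i$ of the guarding query is nonzero at $\vect y$. Your extra care is a refinement of details the paper leaves implicit: you justify the sum-product form by induction using the projection-view side condition, you read membership as nonzero multiplicity so that deletes are covered non-vacuously, and you allow the guarding query to use only a subset of the leaves; the paper simply asserts the sum-product form and argues with positive multiplicities.
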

\begin{proof}
    Let $\{R_1(\vect X_1),\ldots, R_k(\vect X_k)\} = \leaves(\delta V(\vect X))$. The delta view is thus defined by $$\delta V(\vect X) = \sum_{(\bigcup_{i\in[k]} \vect X_i)\setminus \vect X} R_1(\vect X_1)\cdot\ldots\cdot R_k(\vect X_k).$$    
    Let $\C$ be the set of constraints that are guarded by the database relations $R_1,\ldots,R_k$. Since $\C$ always contains the size constraints of these relations, it holds that $\vars(\C) = \bigcup_{i\in[k]} \vect X_i$. By definition of the $\C[\vect Y]$-guarding query $Q_{\C[\vect Y]}(\vect Y)$ (Def.~\ref{def:guarding-query}), we have that $\vect X \subseteq \vect Y \subseteq \vars(\C)$ and:
    $$Q_{\C[\vect Y]}(\vect Y) = R_1'(\vect X_1')\cdot\ldots\cdot R_k'(\vect X_k'), \text{ where } R_i'(\vect X_i') = \sum_{\vect X_i\setminus \vect Y} R_i(\vect X_i) \text{ and } \vect X_i' = \vect X_i\cap \vect Y \text{ for } i\in[k].$$

    The set of tuples in the output of $\delta V$ is:  
    \begin{align} \label{eq:output-deltaV}
        \set{\vect t.\vect X\mid \sum_{\vect t\in\Dom(\vars(\C))} R_1 (\vect t.\vect X_1)\cdot\ldots\cdot R_k(\vect t.\vect X_k)>0},    
    \end{align}
    whereas the set of tuples in the output of $Q_{\C[\vect Y]}$ is:
    \begin{align} \label{eq:output-guarding-query}
        \set{\vect y\in\Dom(\vect Y)\mid R'_1(\vect y.\vect X'_1)\cdot\ldots\cdot R'_k(\vect y.\vect X'_k)>0}.
    \end{align}

    If the output of $\delta V$ is empty, then the statement of the lemma follows trivially. Otherwise, consider a tuple \(\vect x\) in the output of $\delta V$ given in Eq.~\eqref{eq:output-deltaV}. Since the sum in Eq.~\eqref{eq:output-deltaV} is positive, it must contain at least one positive term. Thus, there exists a tuple \(\vect t\in\Dom(\vars(\C))\) with \(\vect t.\vect X = \vect x\) such that \(R_1(\vect t.\vect X_1),\dots, R_k(\vect t.\vect X_k)\) are all positive. It then also follows that \(R'_1(\vect t.\vect X'_1)\cdot\ldots\cdot R'_k(\vect t.\vect X'_k)>0\), since $\vect X'_i \subseteq \vect X_i$ and $R'_i$ is defined by marginalizing $\vect X_i\setminus \vect Y$ from $R_i$, for $i\in[k]$. 
    We conclude that for any tuple \(\vect x\) such that \(\delta V(\vect x)>0\) there exists at least one tuple \(\vect y\in\Dom(\vect Y)\) with \(\vect y.\vect X=\vect x\) such that \(Q_{\C[\vect Y]}(\vect y)>0\). 
\end{proof}

Guarding queries for a delta view are join queries, so we can compute them efficiently under an acyclic set of degree constraints~\cite{hung_algo}. Furthermore, we can compute the delta view in time proportional to the time need to compute any of its guarding queries.

\begin{lemma}\label{lemma:compute-time-delta-view}
    Given a database of size $N$, a delta view tree $\delta T_R$, a delta view $\delta V(\vect X)$ in $\delta T_R$, and any of the $\C$-guarding queries of $\delta V(\vect X)$, where $\C$ is an acyclic set of constraints, then $\delta V$ can be computed in time $\bigO(N^{\bound(\C)})$ given that the child views of $\delta V(\vect X)$ in $\delta T_R$ are already computed.
\end{lemma}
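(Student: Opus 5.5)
The plan is to compute $\delta V$ in three stages: evaluate the guarding query $Q_{\C}(\vect Y)$ with the algorithm of~\cite{hung_algo}, filter its output against the already-computed child views, and marginalize the result to $\vect X$. Each stage will be shown to cost $\bigO(N^{\bound(\C)})$. Here $\C\in\mathcal{A}(\C_0[\vect Y])$ for $\C_0=\DC(\leaves(\delta V(\vect X)),\vect d)$ and $\vect X\subseteq\vect Y$.

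\textbf{Step 1: building the input relations.} For each atom $R_i(\vect X_i)$ at the leaves that guards a constraint of $\C$, we need $R'_i=\sum_{\vect X_i\setminus\vect Y}R_i$. The RAM model gives index structures that enumerate such marginalized relations with constant delay (item (v)), so the $R'_i$ are available without extra cost beyond maintaining indices. They guard the projected constraints in $\C$, since guarding is preserved under projection, as noted after the definition of $\DC$.

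\textbf{Step 2: evaluating $Q_{\C}$.} Apply the worst-case optimal algorithm of~\cite{hung_algo} to $Q_{\C}$ under the acyclic set $\C$. This takes time $\bigO(N+N^{\bound(\C)})$ in data complexity, using hash maps to drop the $\log N$ factor. The additive $N$ comes from the algorithm's precomputation, which would ruin sublinear update times. I expect to handle it by observing that this precomputation is static preprocessing on relations other than $\delta R_j$, so it can be done once and maintained through the indices. By the update constraints, $\delta R_j$ has $\bigO(1)$ size, so the only per-update input is constant. The output size is also $\bigO(N^{\bound(\C)})$, since $\bound$ upper bounds the output of the join.

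\textbf{Step 3: recovering $\delta V$.} By Lemma~\ref{lemma:over-approximation}, every tuple of $\delta V$ is the $\vect X$-projection of some tuple $\vect y$ of $Q_{\C}$. But $Q_{\C}$ may contain spurious tuples, and it does not carry the right multiplicities. To recover $\delta V$, write it as the join of its child views $V_1,\dots,V_m$ (one of them a delta) followed by marginalization; the children are computed by assumption, with $\delta$-children computed and the others materialized. For each $\vect y$ in the output of $Q_{\C}$, we proceed as follows:
\begin{itemize}
\item compute the product of the child-view multiplicities over the variables in $\vect Y$;
\item for variables of the children not in $\vect Y$, use constant-delay group-by indices on the children keyed by their $\vect Y$-part.
\end{itemize}
The last point is the step I expect to be the main obstacle. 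The child views may contain variables outside $\vect Y$ that are marginalized in $\delta V$ but not in $Q_{\C}$. These must be summed over without enumerating extensions beyond $\bigO(1)$ work per $\vect y$. My intended fix is to exploit the view tree structure. Every variable of $\vect X_{\text{child}}\setminus\vect X$ is marginalized inside $\delta V$ only if $\delta V$ is a projection view. In a join view, the children's variables are all in $\vect X\subseteq\vect Y$. So there are two cases:
\begin{itemize}
\item \emph{Join view.} Semi-join filter each $\vect y.\vect X$ by hash lookups into the children and multiply the multiplicities. This takes $\bigO(1)$ per tuple.
\item \emph{Projection view.} The single child is a delta view already computed, and aggregating it onto $\vect X$ takes time linear in its size. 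I would bound that size by a guarding query of the child with the same $\vect Y$ (or $\vect Y\cup$ child variables), arguing its $\bound$ does not exceed $\bound(\C)$. Failing that, I would instead iterate over $Q_{\C}$ output and probe an index on the child keyed by $\vect X$ to get the aggregated multiplicity in $\bigO(1)$.
\end{itemize}
Summing over $\bigO(N^{\bound(\C)})$ tuples then gives the claimed bound.
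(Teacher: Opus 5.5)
Your route is the paper's: evaluate $Q_{\C}$ with the algorithm of~\cite{hung_algo}, then recover $\delta V$ from its output. You do this by semi-join reducing against the already-computed child views, multiplying their multiplicities, and marginalizing onto $\vect X$. For join views your argument is complete: every child variable lies in $\vect X\subseteq\vect Y$, so after deduplicating $\pi_{\vect X}Q_{\C}$ each candidate costs $\bigO(1)$ lookups. You also note that the additive $N$ precomputation of~\cite{hung_algo} concerns only the non-updated relations and can be kept as maintained indices. The paper's proof passes over that point.

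The gap is the projection case $\delta V(\vect X)=\sum_{\vect Z\setminus\vect X}\delta V_c(\vect Z)$. It must be split on whether $\vect Z\subseteq\vect Y$.

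\emph{Case $\vect Z\subseteq\vect Y$.} Your first idea works without any new guarding query. Since $\delta V_c$ has the same leaves as $\delta V$, $Q_{\C}$ is itself a $\C$-guarding query of $\delta V_c$. Lemma~\ref{lemma:over-approximation} then gives $|\delta V_c|\le|Q_{\C}|=\bigO(N^{\bound(\C)})$, and one pass over $\delta V_c$ suffices. This is exactly the situation in which the paper's step of projecting $Q_{\C}$'s output onto the child's variables makes sense.

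\emph{Case $\vect Z\not\subseteq\vect Y$.} Here both of your fallbacks fail. First, $|\delta V_c|$ need not be $\bigO(N^{\bound(\C)})$. Take $Q(A,B,C)=R(A)\cdot S(A,B)\cdot T(B,C)$ with views $W(B)=\sum_C T(B,C)$, $V_1(A,B)=R(A)\cdot S(A,B)\cdot W(B)$, and $V_2(A)=\sum_B V_1(A,B)$. Consider the update $\delta R=\{a\mapsto 1\}$ with $A$ light. The choice $\vect Y=\{A\}$ gives $\C=\{(A|\emptyset,N),(A|\emptyset,1)\}$ and $\bound(\C)=0$. Yet $\delta V_1$ can have $N^{\epsilon}$ tuples, and all of their multiplicities enter $\delta V_2(a)=\sum_b S(a,b)\cdot W(b)$. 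Second, an index on $\delta V_c$ keyed by $\vect X$ with aggregated multiplicities is not available for a freshly computed delta. Building it costs $\Theta(|\delta V_c|)$, which is exactly what you wanted to avoid.

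What is sound is to charge the marginalization to the computation of the child, aggregating while producing $\delta V_c$ at constant-factor overhead. This is how the paper's appendix treats projection views, whose cost is ``inherited from a child view.'' It suffices for Theorem~\ref{th:main}, because $\mw(Q)$ maximizes over all views, including $\delta V_c$. It does not, however, give the per-view bound $\bigO(N^{\bound(\C)})$ for such $\vect Y$. So the lemma should either be restricted to $\vect Y\supseteq\vect Z$ for projection views, or be restated with that accounting. The paper's own proof leaves this case unaddressed too.
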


\begin{proof}
    We compute $\delta V$ in two steps. In the first step, we compute the join query $Q_{\C}$ in time $\bigO(N^{\bound(\C)})$ using the algorithm from prior work~\cite{hung_algo}. Its output is, however, an over-approximation of the output of $\delta V$ as stated in Lemma~\ref{lemma:over-approximation}. In the second step, we first create a relation $V'$ that is the projection of $Q_{\C}$'s output onto the set of variables of the child views of $\delta V$ in $T$. We next semi-join reduce $V'$ with each of the child views of $\delta V$. The multiplicity of each tuple $\vect t$ in $V'$ becomes the product of the multiplicities of the tuples in the child views whose join make $\vect t$ in $V'$. By performing the computation of the delta views bottom-up in the delta view tree, we ensure that the correct multiplicities of the child views are computed before those of the parent views. Finally, we marginalize out all variables of $V'$ except $\vect X$ to obtain the output of $\delta V$. The second step also takes time proportional to the size of $Q_{\C}$'s output, so in $\bigO(N^{\bound(\C)})$ time.
\end{proof}

There are two immediate implications of Lemma~\ref{lemma:compute-time-delta-view}. First, it gives a maintenance strategy for a delta view tree and an update $\delta R_j$: We proceed bottom-up from $\delta R_j$, first compute its parent delta view, then the parent of the parent, and so on until the delta view at the root. Second, it gives an upper bound on the update time for a delta view: This is the minimum of the computation time over all its guarding queries.

A question remains: Why should we consider all guarding queries of a delta view in order to upper bound the compute time for the delta view? Recall there is a guarding query for each set of variables that is a superset of the set of free variables of the delta view and a subset of the set of variables at the leaves of the delta view in the delta view tree. The key observation is that a guarding query with more variables does not necessarily have a higher computation time than another guarding query with less variables.

\begin{example}\label{ex:projection1}
\rm
    Returning to Ex.~\ref{ex:V5(A,C)}, let us consider an over-approximation of \(\delta V_5(A,C)=\sum_{B,D} \delta R(A,B)\cdot S(B,C)\cdot T(C,D)\cdot U(D,A)\). Projecting the set \(\C\) of degree constraints onto \(\set{A,C}\) gives
    \[
        \C[AC] = \set{(A|\emptyset,1), (C|\emptyset, N), (A|\emptyset,N)}
    \]
    with the \(C[AC]\)-guarding query \(Q_{\C[AC]}(A,C) = \delta R'(A)\cdot S'(C)\cdot T'(C)\cdot U'(A)\). By Lemma~\ref{lemma:over-approximation}, we know that this join query is an over-approximation of the conjunctive query \(\delta V_5(A,C)\). We can bound the time needed to compute \(Q_{\C[AC]}(A,C)\) by \(\bigO(N^{\bound(\C[AC])})\) using Lemma~\ref{lemma:compute-time-delta-view}. This bound is $\bigO(N)$ and  obtained by covering the variable $A$ with the first constraint and the variable $C$ with the second constraint. While correct, this bound is  loose since it does not exploit the lightness information on $B$.
\end{example}

The previous example highlights a trade-off: projecting the set $\C$ of constraints onto a subset of its variables guarantees the correctness of the bound, but it may discard useful constraints (like those conditioned on \(B\)).  Although this requires covering more variables in the linear program, it allows us to retain more constraints in the projection. Thus, we need to look at all the subsets \(\vect X'\) with \(\vect X\subseteq \vect X'\subseteq \vars(\C)\).

\begin{example}
\rm
    We continue Ex.~\ref{ex:projection1} and now consider an over-approximation of \(\delta V_5(A,C)\) by projecting the set \(\C\) of degree constraints onto the superset \(\set{A,B,C}\). This new projection preserves a light constraint conditioned on $B$ and gives
    \begin{equation*}
    \C[ABC] = \{(A|\emptyset,1), (B|\emptyset, 1),(BC|\emptyset,N), (C|\emptyset,N),(A|\emptyset,N), (BC|B,N^{\epsilon})\}
    \end{equation*}
    By Lemma~\ref{lemma:over-approximation}, we conclude that \(Q_{\C[ABC]} = \delta R(A,B)\cdot S(B,C)\cdot T'(C)\cdot U'(A)\) is an over-approximation of \(\delta V_5(A,C)\). The time needed to compute \(Q_{\C[ABC]}\) is bounded by \(\bigO(N^{\bound(\C[ABC])})\), which is \(\bigO(N^{\epsilon})\) (by covering $A,B,C$ using the first two and the last constraints). This is tighter than \(\bigO(N)\) for all \(\epsilon < 1\). 
    
    Yet, if we would project $\C$ onto the full set \(\set{A,B,C,D}\) of variables, then we would regain the constraints on \(D\) at the price of having to cover \(D\) in the linear program. As shown in Ex.~\ref{ex:V5(A,C)}, this gives the upper bound of \(\bigO(N^{\min(1, 2\epsilon)})\), which is also worse than $\bigO(N^{\epsilon})$ for all $\epsilon \in (0, 1)$.
\end{example}

\nop{
We are now ready to state the time complexity taken by our approach to compute the delta views in a delta view tree for a join query and single-tuple update, under a given degree configuration.

\potential{WARNING: Ahmet, Dan and Eden are working so far up to this point. Everything beyond this point needs to be updated according to the changes made before.}

\ahmet{We do not use the below lemma.}

\begin{lemma}
\label{lemma:bound-update-view}
    Let $Q$ be a join query, $\vect d$ a degree configuration, $\delta R_j$ a single-tuple update, and $\delta T$ a delta view tree for $Q$ and $\delta R_j$.
    The time to compute all the delta views $\delta V_1(\vect X_1),\ldots,\delta V_k(\vect X_k)$ along the path from $\delta R_j$ to the root of $\delta T$ is $\bigO(N^w)$, where 
    \[
    w = \max_{i\in[k]}~~~~~\min_{\substack{\C\in\mathcal{A}(\DC(\leaves(\delta V_i(\vect X_i), \vect d,\delta R_j)\\\vect X_i\subseteq\vect X'_i\subseteq\vars(\C)}}~~~~\bound(\C[\vect X'_i]).
    \]
\end{lemma}
\ahmet{In the above formulation, isn't it too restrictive if we consider $\mathcal{A}(\DC(\leaves(\delta V_i(\vect X_i)$? Why don't we first iterate over the supersets $X_i'$ of $X_i$, then project $DC(...)$ onto $X_i'$, then take an acyclic subset of what we get?}
\begin{proof}
    \potential{Dan: Proof needs to be adjusted, there is some name clashing.}
    We bound the total computation cost by the cost of updating the most expensive delta view. Let \(\delta V(\vect X) = V_1(\vect X_1)\cdot\ldots\cdot V_k(\vect X_k) = \sum_{\vect Y\setminus\vect X} R_1(\vect Y_1)\cdot\ldots\cdot \delta R_j(\vect Y_j)\cdot\ldots\cdot R_l(\vect Y_l)\) be the most expensive delta view along the path from \(R_j\) to the root of \(T\), where \(\vect Y = \vect Y_1\cup\dots\cup\vect Y_l\) and \(\set{R_1(\vect Y_1),\dots,R_l(\vect Y_l)}=\leaves(V(\vect X))\). The view \(V(\vect X)\) must be a join views since the update cost of a projection view is bounded by the update cost of its child. It follows from \Cref{lemma:pbd-for-conjunctive-queries} that for any set of degree constraints \(\C\in\mathcal{\DC(\leaves(V(\vect X), \vect d,\delta R_j)}\) and any set of variables \(\vect X'\) with \(\vect X\subseteq\vect X'\subseteq Y\), the \(\C[\vect X']\)-guarding query \(Q_{\C[\vect X']}(\vect X')\) is an over approximation of \(\delta V\). Furthermore, \(Q_{\C[\vect X']}(\vect X')\) can be computed in time \(\bigO(N^{\bound(\C[\vect X'])})\) by \Cref{thm:hung-algo}. To compute the exact value of \(\delta V\), we assume inductively that all the children \(V_1(\vect X_1),\dots,V_k(\vect X_k)\) have already been computed. Then, we can compute \(\delta V\) in time \(\bigO(N^{\bound(\C[\vect X'])})\) by computing the join \(Q_{\C[\vect X']}(\vect X)\cdot V_1(\vect X_1)\cdot\ldots\cdot V_k(\vect X_k)\).
    
\end{proof}

}

%%%%%

\subsection{Symbolic Optimization}

As illustrated in Ex.~\ref{ex:projection1}, the degree constraints can depend on \(\epsilon\) (e.g., \(N^\epsilon\)), while the value of \(\epsilon\) is not fixed in advance. Consequently, in the linear program of Def.~\ref{def:PBD}, we treat the coefficients \(p_{i}\) as functions of \(\epsilon\) in order to obtain an explicit closed form expression for the optimal value as a function of $\epsilon$. All the degree constraints obtained by the function \(\DC(\cdot,\cdot)\) have the form
$c_i=(\vect Z \mid \vect Y, N^{p_{i}} = N^{f_{i}(\epsilon)})$, where
$f_{i}$ is either affine in $\epsilon$ (it is $\epsilon$ or
$1-\epsilon$) or constant (it is $0$ or $1$). For each constraint $c_i$, we view the coefficient $p_{i}$ as the function $f_{i}(\epsilon)$.

Crucially, while the \emph{objective function} in Eq.~\eqref{eq:bounding-LP} varies with \(\epsilon\), the \emph{feasible region} \(\mathcal{P}(\C)\) defined by Eq.~\eqref{eq:feasible1} and \eqref{eq:feasible2} depends only on the sets \(\C\) and \(\vect X\), and is independent of \(\epsilon\). 
It is a standard result in linear programming that, if an optimum exists, then it is attained at a vertex of the feasible polyhedron \(\mathcal{P(\C)}\) \cite{LP-vertices}. Since \(\mathcal{P(\C)}\) is defined by a fixed set of constraints, it has a finite set of vertices (and independent on the database size), which we denote by \(\mathcal{S}_{\C}\).

The vertices \(\vect w = (w_{i})_{c_i\in\C} \in \mathcal{S}_{\C}\) are constant vectors independent of \(\epsilon\). We can compute these weight vectors \(\vect w\) once and reuse them to evaluate the cost for any \(\epsilon\). 
For a specific vertex \(\vect w \in \mathcal{S}_{\C}\) and parameter \(\epsilon\), the symbolic cost is given by:
\[
  \obj(\vect w,\C)(\epsilon)
  \defeq
  \sum_{c_i\in\C} w_{i}\cdot f_{i}(\epsilon).
\]

%=\left(\vect Z| \vect Y,N^{f_{i}(\epsilon)}\right)

\nop{\ahmet{We could skip the following paragraph. I wrote some explanation following Definition~\ref{def:maintenance-width}, which might be enough.}

The symbolic bound \(\bound(\C)\) is then given by the minimum of the objective values attained at the vertices in \(\mathcal{S}_{\C}\):
\[
  \bound(\C)(\epsilon)
  \;=\;
  \min_{\vect v \in \mathcal{S}_{\C}} \obj(\vect v,\C)(\epsilon).
\]
This defines a concave piecewise affine function of \(\epsilon\). Combining this with \Cref{lemma:bound-update-view}, we obtain the following symbolic equation for \(\bound^*\):
\begin{equation}
    \label{eq:symbolic-pbd-star}
  \bound^*(\delta V(\vect X),\C)(\epsilon)
  \;=\;
  \min_{\C'\in\mathcal{A}(\C)} 
  \min_{\vect X\subseteq \vect X'\subseteq \vars(\C')}
    \min_{\vect w \in \mathcal{S}_{\C'[\vect X']}} 
      \obj(\vect w,\C'[\vect X'])(\epsilon).
\end{equation}
In particular, for a delta view tree \(\delta T\), degree configuration \(\vect d\), and update \(\delta R_j\), we can use Eq.~\ref{eq:symbolic-pbd-star} to express the time to compute all the delta views \(\delta V_1,\dots,\delta V_k\) along the path from \(\delta R_j\) to the root of \(\delta T\) as the following function of \(\epsilon\):
\[
\max_{V(\vect X)\in \set{\delta V_1,\dots,\delta V_k}}\bound^*(\delta V(\vect X),\DC(\leaves(\delta V(\vect X),\delta T), \vect d, \delta R_j))(\epsilon)
\]

\ahmet{We could skip until here.}}

We are now ready to introduce the notion of \emph{maintenance width}, which captures the maintenance cost for a given query.
In order to find the lowest maintenance cost for a query, we look at the degree configuration that induces the most expensive set of degree constraints, for which we pick the cheapest view tree. For this view tree, we pick the most expensive view to maintain under the most expensive update.\footnote{Alternatively, the maintenance width can be seen as the outcome of a game where we choose \(\epsilon\) and the view tree \(T\) to minimize cost, while an adversary chooses the data statistics \(\vect d\), the update \(\delta R_i\), and the specific view in \(T\) to maximize cost.} 

\begin{definition}[Maintenance Width]\label{def:maintenance-width}
    For a join query \(Q\), the maintenance width of \(Q\) is

    \begin{align}
    \label{eq:mw}
        \mw(Q) & \defeq
        \min_{\epsilon\in[0,1]}~
        \max_{\vect d\in\mathcal D(Q)} 
        \min_{T\in\calT(Q)}~
        \max_{\substack{V(\vect X)\in T\\
        R(\vect{Y})\in\leaves(V(\vect{X}), T)
        }}~~~ \min_{\substack{\C\subseteq
        \DC(\leaves(\delta V(\vect X), \delta T_R),\vect d)\\
        \vect X\subseteq \vect X'\subseteq \vars(\C)\\
        \C'\in\mathcal{A}(\C[\vect X'])\\
        \vect w \in \mathcal{S}_{\C'}}}
        \obj(\vect w,\C')(\epsilon).
    \end{align}    
\end{definition}

\begin{remark}
Def.~\ref{def:maintenance-width} states that for every degree configuration $\vect d$ and view tree $T$, we take the maximum compute time over all views $V$ of all its delta views $\delta V$ subject to updates at each of its leaves $\delta R$. This computes the maximum update time per view over updates at any of its leaves in $\delta T_R$ and then takes the maximum over all views in the view tree. In contrast, our maintenance approach propagates each update $\delta R$ from a leaf along the path to the root of the delta view tree $\delta T_R$ for $\delta R$ and takes the maximum time to compute the delta views that are the ancestors of $\delta R$ in $\delta T_R$. Yet both ways  to account for the maintenance time for a given view tree yield the same maximum time to update all the views of $T$ under all updates at the leaves. This justifies the equivalent formulation in the definition.    
\end{remark}

We explain how Lemma~\ref{lemma:compute-time-delta-view} implies that any join query $Q$ can be maintained with update time $\bigO(N^{\mw(Q)})$
as stated in Theorem~\ref{th:main}.
Fix an $\epsilon \in [0,1]$. Given a degree configuration $\vect d$ and a view tree $T$ for $Q$, let 
\begin{align*}
\bound^*(\vect{d}, T)(\epsilon) =&~
        \max_{V(\vect X)\in T}
        ~\max_{R(\vect{Y})\in\leaves(V(\vect{X}), T)} ~
        \bound^*(\vect d, \delta V(\vect X), \delta T_R)(\epsilon) \text{ and}\\
\bound^*(\vect d, \delta V(\vect X), \delta T_R)(\epsilon) =& 
\min_{\C\subseteq
        \DC(\leaves(\delta V(\vect X), \delta T_R),\vect d)}
        ~\min_{\vect X\subseteq \vect X'\subseteq \vars(\C)} ~\min_{\C'\in\mathcal{A}(\C[\vect X'])}
\min_{\vect w \in \mathcal{S}_{\C'}}
       \obj(\vect w,\C')(\epsilon).
 \end{align*}

Consider an atom $R(\vect{Y})$ in a view tree $T$ for $Q$ and an update 
$\delta R$ to a relation $R$ in $Q$.
To maintain $T$ under $\delta R$, we derive the delta view tree $\delta T_R$, and compute bottom-up all delta views $\delta V(\vect{X})$ along the path from $R(\vect{Y})$ to the root of $\delta T_R$.
By Lemma~\ref{lemma:compute-time-delta-view}, each such delta view 
$\delta V(\vect{X})$ can be computed in time $\bigO(N^{\bound^*(\vect d, \delta V(\vect{X}), \delta T_R)(\epsilon)})$. The view tree $T$ can then be maintained in time $\bigO(N^{\bound^*(\vect d, T)(\epsilon)})$ by taking the maximum over all views in $T$ and for each such view $V$ the maximum over all atoms $R$ at the leaves of this view of the time to compute the $\delta V$ under the update $\delta R$. By Def.~\ref{def:maintenance-width}, we conclude  that $Q$ can be maintained in time $\bigO(N^{\mw(Q)})$ under any update to its input relations.

\subsection{Computing the Maintenance Width}
\label{subsec:computing-mw}

The computability of the maintenance width relies on the observation that all domains in Def.~\ref{def:maintenance-width}, excluding the interval \([0,1]\), are finite. Let \(\mathcal{H}\) denote the set of affine functions \(f(\epsilon) = \obj(\vect w, \C)(\epsilon)\) generated by the inner minimization over \(\C\) and \(\vect w\). The maintenance width can be viewed as the value of a logical expression involving finite \(\min\) and \(\max\) operations over functions in \(\mathcal{H}\).

We rewrite the nested minimization and maximization steps into a canonical min-of-max form by iteratively using the distributivity of \(\max\) over \(\min\): \(\max(a, \min(b,c)) = \min(\max(a,b), \max(a,c))\). Thus, there exists a finite index set \(\mathcal{M}\) and, for each \(i \in \mathcal{M}\), a finite set of affine functions \(\mathcal{F}_i \subseteq \mathcal{H}\) such that:
\begin{equation}
\label{eq:min_max_canonical}
    \mw(Q) = \min_{\epsilon\in[0,1]} \left( \min_{i \in \mathcal{M}} \left( \max_{f \in \mathcal{F}_i} f(\epsilon) \right) \right).
\end{equation}

Since the \(\min\) operator is commutative, we can swap the continuous minimization over \(\epsilon\) with the discrete minimization over \(i\). This yields:
\begin{equation}
\label{eq:mw_decomposed}
    \mw(Q) = \min_{i \in \mathcal{M}} \left( \min_{\epsilon\in[0,1]}\left( \max_{f \in \mathcal{F}_i} f(\epsilon)\right) \right).
\end{equation}

Eq.~\eqref{eq:mw_decomposed} reduces the optimization problem to finding the minimum of \(|\mathcal{M}|\) independent sub-problems. Each sub-problem aims to minimize the pointwise maximum of a finite set of affine functions over the unit interval. Consequently, the objective is a piecewise linear convex function of \(\epsilon\), which can be efficiently minimized by the following linear program:\\

\begin{align*}
    \text{minimize} \quad &v \\
    \text{subject to} \quad &v \geq f(\epsilon) \quad \forall f \in \mathcal{F}_i \\
    &0 \leq \epsilon \leq 1
\end{align*}
By solving these linear programs, we obtain both the exact value of \(\mw(Q)\) and the optimal parameter \(\epsilon^*\) for which this value is obtained. Fixing \(\epsilon^*\) determines the threshold for data partitioning. We then proceed to select the optimal view tree for each degree configuration \(\vect d\) by choosing the view tree \(T\) that minimizes the maintenance width for this fixed \(\epsilon^*\). In case of ties, we deterministically select a canonical tree (e.g.,~in lexicographical order). We denote this selected tree by \(T(\vect d)\). The set of these selected trees constitutes the set of \emph{active view trees} that we use in our maintenance algorithm. In practice, this set can be significantly smaller than the set of all possible view trees. For instance, we only need six view trees for the optimal maintenance of the $4$-cycle query. Furthermore, for every update \(\delta R_j\) and view \(V(\vect X)\) we determine the optimal set \(\vect X'\) of variable to use for computing a guarding query that is an over-approximation of \(\delta V\) under the update \(\delta R_j\).

%%%%%%%%%%%%%

\section{Comparison with Prior 
Width Measures}
\label{sec:comparison}
In this section, we compare the  maintenance width with other common width measures and discuss our choice of view trees as the maintenance strategy of our approach.

\subsection{Dynamic Width}

The maintenance width generalizes the previously introduced notion of {\em dynamic width}, which defines the update time for maintaining queries under simple size constraints~\cite{DynamicWidth}.
Any query $Q$ can be maintained with $\bigO(N^{\dw(Q)})$ update time, where $\dw(Q)$ denotes the dynamic with of $Q$.
To the best of our knowledge, the dynamic width defines the best update time achieved by approaches that do not rely on heavy-light partitioning.
The F-IVM column in Table~\ref{tab:comparisons} lists the update times for several queries, these times follow the dynamic width\footnote{Although the work on F-IVM~\cite{FIVM} did not formally introduce the notion of dynamic width, the update time achieved by F-IVM for any query $Q$ is of the form $\bigO(N^{\dw(Q)})$, where $\dw(Q)$ denotes the dynamic width of $Q$. The notion of dynamic width was formally introduced in subsequent work~\cite{DynamicWidth}.}.
For any hierarchical query $Q$, we have $\mw(Q) = \dw(Q) = 0$. Hence, both our approach and F-IVM achieve $\bigO(1)$ update time.
For the bow tie query $Q$, we have 
$\mw(Q) = \dw(Q) = 1$, implying that both our approach and F-IVM achieve $\bigO(N)$ update time.
For all other queries in Table~\ref{tab:comparisons}, we have 
$\mw(Q) < \dw(Q)$, which means that our approach outperforms F-IVM.

Computing the dynamic width requires iterating over all view trees, then over the views in each view tree, and finally over the leaves under each view. For the maintenance width, we must further iterate over all  degree configurations, which accounts for the increased complexity of the definition, yet which may yield a smaller width value.
In the following, we introduce the dynamic width $\dw$ and show that $\mw(Q)$ is  upper-bounded by $\dw(Q)$ for any join query $Q$.

%This implies that the update time achieved by our approach is asymptotically upper-bounded by the best update time achieved in prior work. 
%In this section we compare the maintenance width with the dynamic width of a query~\cite{DynamicWidth}. 

We start by recalling the {\em fractional edge cover number} of a set of variables with regard to a query~\cite{AGM:2008}. Given a query $Q$ and a set $\vect{Y} \subseteq \vars(Q)$ of variables, the fractional edge cover number 
$\rho_Q^*(\vect{Y})$ of $\vect{Y}$ with regard to $Q$ is the cost of the optimal solution of the following linear program:
\begin{align*}
  \text{minimize}\quad
    & \sum_{R(\vect{X}) \in \at(Q)} w_{R(\vect{X})} && \\
  \text{subject to}\quad
    & \sum_{R(\vect{X}): B \in \vect{X}} w_{R(\vect{X})} \geq 1 && \text{ for all } B \in \vect{Y}\\
    & w_{R(\vect{X})} \geq 0 && \text{ for all } R(\vect{X}) \in \at(Q)
\end{align*}

Given a set $\mathcal L$ of atoms of a query $Q$, we denote by $Q_{\mathcal L}$ the join query whose body is the conjunction of the atoms in $\mathcal L$. We can now define the dynamic width of a join query\footnote{We introduce here a simplified version of the dynamic width restricted to join queries, whereas the original definition~\cite{DynamicWidth} is for the more general conjunctive queries. This is because our maintenance width is defined here for join queries only.}:

\nop{
Whereas the original definition minimizes over a restricted set of view trees that follow specific partial orders of the query variables, Def.~\ref{def:dynamic_width} ranges over {\em all} possible view trees. The proof of Prop.~\ref{prop:comparison_maintenance_dynamic} shows that even for this more general variant of $\dw$, it holds that $\mw(Q) \leq \dw(Q)$ for any join query $Q$.
}

\begin{definition}[Dynamic Width]\label{def:dynamic_width}
    For any join query \(Q\), the dynamic width of \(Q\) is
    \begin{equation}
        \dw(Q)\defeq 
        \min_{T\in\mathcal{T}(Q)}~
        \max_{\substack{ V(\vect{X}) \in T\\
        R(\vect{Y}) \in \leaves(V(\vect{X}),T)}}~~~
      \rho^*_{Q_{\leaves(V(\vect{X}),T)}}(\vect{X}\setminus \vect{Y}).
    \end{equation}
\end{definition}

\begin{proposition}
\label{prop:comparison_maintenance_dynamic}
    For any join query $Q$, it holds 
    $\mw(Q) \leq \dw(Q)$.
\end{proposition}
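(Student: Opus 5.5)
The idea is to show that, for every fixed $\epsilon$ and every degree configuration, the inner minimization in Def.~\ref{def:maintenance-width} can always fall back on the plain size constraints. Doing so recovers exactly the fractional edge cover bound that defines $\dw(Q)$. The heavy-light partitioning is therefore never harmful: the light and heavy constraints may be ignored.

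Concretely, I fix an arbitrary $\epsilon\in[0,1]$ and let $T^*\in\calT(Q)$ be a view tree attaining the minimum in Def.~\ref{def:dynamic_width}. For every $\vect d\in\mathcal D(Q)$ I use $T^*$ in the $\min_{T}$ of Eq.~\eqref{eq:mw}. It then suffices to show the following claim. For every view $V(\vect X)\in T^*$, every leaf $R_j(\vect Y)\in\leaves(V(\vect X),T^*)$ and every $\vect d$, the innermost minimum is at most $\rho^*_{Q_{\mathcal L}}(\vect X\setminus\vect Y)$, where $\mathcal L=\leaves(V(\vect X),T^*)$. Once the claim holds, the maximum over $V$ and $R_j$ is at most the $\dw$ term for $T^*$. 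Then $\min_T\le\dw(Q)$ for every $\vect d$, hence $\max_{\vect d}\le\dw(Q)$, and finally $\min_\epsilon\le\dw(Q)$.

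For the claim, I make the following choices.
\begin{itemize}
\item I take $\C\subseteq\DC(\mathcal L,\vect d)$ to consist only of the size constraints $(\vect X_i|\emptyset,N)$ for $R_i\in\mathcal L$, $i\neq j$, and the update constraints $(A|\emptyset,1)$ for $A\in\vect Y$. This is allowed because Eq.~\eqref{eq:mw} minimizes over all subsets $\C$.
\item I take $\vect X'=\vect X$. This satisfies $\vect X\subseteq\vars(\C)$, since every variable of $\vect X$ occurs in some leaf atom and is therefore covered either by a size constraint or, if it occurs only in $R_j$, by an update constraint.
\item The projection $\C[\vect X]$ keeps $(\vect X_i\cap\vect X|\emptyset,N)$ whenever this intersection is nonempty, and keeps $(A|\emptyset,1)$ for $A\in\vect X\cap\vect Y$. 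All conditioning sets are empty, so the constraint graph has no edges, $\C[\vect X]$ is acyclic, and the unique element of $\mathcal A(\C[\vect X])$ is $\C'=\C[\vect X]$.
\end{itemize}

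Next I build a feasible point of the LP of Def.~\ref{def:PBD} for $\C'$. Let $w^*$ be an optimal fractional edge cover of $\vect X\setminus\vect Y$ in $Q_{\mathcal L}$. I give weight $w^*_{R_i}$ to the projected size constraint of $R_i$, for $i\neq j$. I give weight $1$ to each update constraint. The weight $w^*_{R_j}$ can be dropped, because $R_j$ contains no variable of $\vect X\setminus\vect Y$; any atom whose projection was discarded contains no variable of $\vect X$ and so covers nothing either. This point is feasible:
\begin{itemize}
\item each $A\in\vect X\cap\vect Y$ is covered by its update constraint;
\item each $B\in\vect X\setminus\vect Y$ is covered by the edge-cover weights of atoms other than $R_j$.
\end{itemize}
Its objective value is $\sum_{i\neq j}w^*_{R_i}\cdot 1+\sum 1\cdot 0\le\rho^*_{Q_{\mathcal L}}(\vect X\setminus\vect Y)$.

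The remaining step, which I expect to need the most care, is passing from this feasible point to a vertex $\vect w\in\mathcal S_{\C'}$, as Eq.~\eqref{eq:mw} requires. The feasible region is a nonempty polyhedron contained in the nonnegative orthant, so it is pointed and has vertices. All objective coefficients $p_i\in\{0,\epsilon,1-\epsilon,1\}$ are nonnegative, so the objective is bounded below by $0$. Hence the minimum is attained at a vertex~\cite{LP-vertices}, and $\min_{\vect w\in\mathcal S_{\C'}}\obj(\vect w,\C')(\epsilon)$ is at most the objective of our feasible point. This proves the claim and therefore $\mw(Q)\le\dw(Q)$. Finally, I would double-check the edge cases: a variable of $\vect X$ appearing only in $R_j$ is handled by the update constraint, and an atom whose projection onto $\vect X$ is empty simply receives no constraint.
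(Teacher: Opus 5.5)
Your proof is correct and is essentially the paper's argument. Like the paper, you restrict to the size and update constraints projected onto $\vect X$ (the paper's set $\hat{\C}$) and note that the update-constraint rows can be satisfied at zero cost, which leaves the fractional edge cover LP. The paper additionally first sets $\epsilon=1$ and reduces to the all-light configuration $\vect L$; you skip these steps, rightly, because the size and update constraints belong to every $\DC(\cdot,\vect d)$ with exponents independent of $\epsilon$, and you also justify the passage from a feasible point to a vertex of $\mathcal{S}_{\C'}$, which the paper leaves implicit.
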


\begin{proof}
The proof is implied by the following chain of (in)equalities. We explain each step below.

    \begin{align*}
        \mw(Q) & \defeq
        \min_{\epsilon\in[0,1]}~
        \max_{\vect d\in\mathcal D(Q)} 
        \min_{T\in\calT(Q)}~
        \max_{\substack{V(\vect X)\in T\\
        R(\vect{Y})\in\leaves(V(\vect{X}),T)
        }}~~~ \min_{\substack{\C\subseteq
        \DC(\leaves(\delta V(\vect X), \delta T_R),\vect d)\\
        \vect X\subseteq \vect X'\subseteq \vars(\C)\\
        \C'\in\mathcal{A}(\C[\vect X'])\\
        \vect w \in \mathcal{S}_{\C'}}}
        \obj(\vect w,\C')(\epsilon) \\
        &\overset{(1)}{\leq}
        \max_{\vect d\in\mathcal D(Q)} 
        \min_{T\in\calT(Q)}~
        \max_{\substack{V(\vect X)\in T\\
        R(\vect{Y})\in\leaves(V(\vect{X}), T)
        }}~~~ \min_{\substack{\C\subseteq
        \DC(\leaves(\delta V(\vect X), \delta T_R),\vect d)\\
        \vect X\subseteq \vect X'\subseteq \vars(\C)\\
        \C'\in\mathcal{A}(\C[\vect X'])\\
        \vect w \in \mathcal{S}_{\C'}}}
        \obj(\vect w,\C')(1) \\
        &\overset{(2)}{=} 
        \min_{T\in\calT(Q)}~
        \max_{\substack{V(\vect X)\in T\\
        R(\vect{Y})\in\leaves(V(\vect{X}),T)
        }}~~~ \min_{\substack{\C\subseteq
        \DC(\leaves(\delta V(\vect X), \delta T_R),\vect L)\\
        \vect X\subseteq \vect X'\subseteq \vars(\C)\\
        \C'\in\mathcal{A}(\C[\vect X'])\\
        \vect w \in \mathcal{S}_{\C'}}}
        \obj(\vect w,\C')(1) \\
        &\overset{(3)}{\leq}
        \min_{T\in\calT(Q)}~
        \max_{\substack{V(\vect X)\in T\\
        R(\vect{Y})\in\leaves(V(\vect{X}),T)
        }}~~~ \min_{
        \vect w \in \mathcal{S}_{\hat{\C}}}
        \obj(\vect w,\hat{\C})(1) \\
        &\overset{(4)}{\leq} \min_{T\in\calT(Q)}~
        \max_{\substack{
        V(\vect X)\in T\\
        R(\vect{Y})\in\leaves(V(\vect X),T)}} \rho^*_{Q_{\leaves(V(\vect X),T)}}(\vect X \setminus \vect Y) \\
        &\defeq\ \dw(Q), 
    \end{align*}
    where $\vect L = (L)_{|\mathcal{J}_Q|}$ is the degree constraint  where all join variables are light and   
    $\hat{\calC} = \overline{\calC}[\vect{X}]$
with 
$\overline{\calC} = \{(\vect{Z}|\emptyset, N) | S(\vect{Z})$ 
    $\in$ $\leaves(\delta V(\vect{X}), \delta T_R) \wedge S \neq R\}$ $\cup$ $\{(B|\emptyset,1)|B \in \vect Y)\}$.

    Inequality~(1) is obtained  by fixing $\epsilon$ to $1$. 
    Equality~(2) is implied by the following observation.
    Consider a degree configuration $\vect d\in \mathcal D(Q)$
    and a vector $\vect{w} \in \mathcal{S}_{\C'}$ in the definition of $\mw(Q)$. Assuming that $\epsilon = 1$, any degree constraint resulting from a variable $A$ that is heavy in $\vect{d}$ is of the form $(A \mid \emptyset, N^{1-\epsilon} = N^0)$. This implies that for any vector $\vect{w} \in \mathcal{S}_{\C'}$, the additive factor added to  $\obj(\vect w,\C')(1)$ by such a constraint is of the form $w_i \cdot (1-\epsilon) = 0$. Hence, instead of maximizing $\obj(\vect w,\C')(1)$ over all possible degree configurations, it suffices to restrict to the configuration $\vect L = (L)_{|\mathcal{J}_Q|}$, where all variables are light. 
    
    Inequality (3) holds because $\overline{\calC} \subseteq \DC(\leaves(\delta V(\vect X), \delta T_R),$ $\vect L)$ and both $\overline{\calC}$ and
    $\hat{\calC} = \overline{\calC}[\vect{X}]$ are acyclic,
    since 
    the graph associated with $\overline{\calC}$ does not have any edge.

    For Inequality (4), consider a view $\delta V(\vect{X})$ in a delta view tree for an update $\delta R(\vect{Y})$. Assume that $\hat{C} = \{c_1, \ldots , c_m\} \cup 
  \{c^B\}_{B \in \vect{X} \cap \vect{Y}}$, where each 
  $c_i$ is of the form $(\vect{Z}|\emptyset, N)$ 
  and each $c^B$ is of the form $(B|\emptyset, 1)$.
  The linear program 
    determining 
    $\min_{\vect w \in \mathcal{S}_{\hat{\C}}}
        \obj(\vect w,\hat{\C})(1)$ is as follows:

\begin{align}
        \text{minimize} & \sum_{c_i = (\vect Z|\emptyset,N)} w_i \cdot 1 + \sum_{B \in \vect{X} \cap \vect{Y}} w^B \cdot 0 \nonumber\\
        \text{subject to} & \sum_{\substack{
        c_i = (\vect Z|\emptyset,N) \\
        A\in \vect Z
        }}w_i \geq 1 && \forall A\in \vect X \setminus \vect{Y}  \nonumber\\
        & \sum_{\substack{
        c_i = (\vect Z|\emptyset,N) \\
        B\in \vect Z
        }} w_i + w^B\geq 1 && \forall B\in \vect{X} \cap \vect{Y} \label{eq:ignore}\\
        & w_{i}\geq 0 && \forall i \in [m] \nonumber \\
                & w^{B}\geq 0&& \forall B \in \vect{X} \cap \vect{Y} \nonumber
    \end{align}

The above program simplifies to the linear program 
        determining $\rho^*_{Q_{\leaves(V(\vect X),T)}}(\vect{X} \setminus \vect{Y})$ due to the following two observations. Firstly, 
                 each constraint in $\hat{\calC}$ is of the form  $(\vect{Z}|\emptyset, N)$    or of the form
    $(B|\emptyset,1)$ with $B \in \vect{X} \cap \vect{Y}$.
       % (b) there is a one-to-one mapping between
       % the constraints of the from $(\vect{Z}|\emptyset, N)$
      %  and the atoms of the form  $S(\vect{Z})$, which implies a one-to-one mapping between the weight vectors of the two linear programs; 
    Secondly, the weights $w^B$
    associated with constraints of the form $c^B = (B|\emptyset, 1)$ %with $Y \in \vect{X} \cap \vect{Y}$ 
        do not have any effect on the
        objective function of the above linear program, since they are multiplied with $0$ in the definition of the objective function.
        Hence, the constraints in Line~\eqref{eq:ignore} can be easily satisfied by setting such a weight $w^B$ to 1. This implies that constraints in Line~\eqref{eq:ignore} can be omitted.
 \end{proof}

\nop{
 In some cases, the maintenance width is strictly less than the dynamic width. Consider the 3-path query given by
\[Q_{3p}(X_1, X_2, X_3, X_4) = R_1(X_1,X_2) \cdot R_2(X_2,X_3) \cdot R_3(X_3, X_4)\]
It is shown in Appendix~\ref{sec:path_queries} that $\mw(Q_{3p}) = \frac{1}{2}\log(N)$. However, $\delta(Q_{3p}) = \log(N)$. This is because all view trees in $\mathcal{T}(Q)$ will contain a join view $V(\vect{X})$ formed by either (1) joining $R_1$ (or a projection of it) with $R_2$, (2) symmetrically joining $R_3$ (or a projection of it) with $R_2$, or (3) joining $R_1$ and $R_3$ (or projections of them) with $R_2$. In all three cases, $\{X_2, X_3\} \subseteq \vect X$. In case (1), $\delta R_1$ implies $X_3$ must be covered in the fractional edge cover LP. In cases (2) and (3), $\delta R_3$ implies $X_2$ must be covered. In all three cases, the sum of the weights must be at least 1, and so $\delta(Q_{3p}) \geq \log(N)$. Using View Tree 1 in Fig.~\ref{fig:3-path-view-trees} for maintenance achieves the update time $\delta(Q_{3p}) = \log(N)$, which is asymptotically larger than the update time given by the maintenance width.
}

\subsection{Fractional Hypertree Width and Submodular Width}
The maintenance width does not come with a simple syntactic check. Indeed, to find this width for a given query $Q$, one needs to iterate over all view trees and degree configurations for $Q$ and solve a linear program to cost the update of each view in a view tree triggered by an update to any input relation. Yet this is conceptually not different from well-established width measures, such as the fractional hypertree width~\cite{FHTW} or the submodular width~\cite{PANDA}. These widths are  defined by iterating over all hypertree decompositions~\cite{TreeDecompositions} of $Q$ and by solving a linear program to cost each bag of a hypertree decomposition. Furthermore, the submodular width also requires adaptive computation by data partitioning. There are two differences here: (i) The data partitioning for the submodular width is fine-grained as it yields (poly-logarithmically many in $N$) database parts of uniform degrees, whereas for the maintenance width it is coarse-grained as it yields (constantly many in $N$) database parts with either heavy or light degrees. (ii) The data partitioning for the maintenance width is on the input relations only, whereas for the submodular width it can also be on the intermediate results (so also on the materialized views in our setting). 

There are two aspects of our maintenance width which are distinct from the aforementioned widths. (i) Due to our dynamic setting, we need to consider the cost of view updates in addition to the cost of computing the view only. 
%A dynamic variant of the fractional hypertree width, which is designed to capture this complication, has been proposed in the literature~\cite{IVMeps:PODS:2020}. 
(ii) We do not know the concrete cost of each view update until we fix the threshold $\epsilon$, which can only be done after constructing the function in $\epsilon$ that defines the width. This is novel to our setting.

\subsection{View Trees vs. Hypertree Decompositions}
Each view tree of a query $Q$ can be mapped to a hypertree decomposition of $Q$ (possibly with redundant bags), where each view (relation) becomes a bag consisting of the view variables. Also, from each hypertree decomposition we can construct a view tree, with one view for each bag of the decomposition, possibly additional projection views, and one leaf per relation in $Q$. So both view trees and hypertree decompositions allow us to explore the same space of structural decompositions of $Q$, albeit the view trees are more refined in that they use redundant information in the form of materialized projection views to allow for a faster propagation of updates in the view trees.

%%%%%%%%%%%%%%%%
\nop{
\andrei{Bellow probably has to go since the discussion is a lot more complex now. The reason is that the following statement is no longer true:``Since \Cref{eq:feasible1} implies that the linear program becomes more constrained as \(\vect X\) grows, its optimum can only increase when \(\vect X\) contains more variables.''}

The materialized projection views are essential to achieve our claimed update time. For a query \(Q\), we can use the algorithm of \Cref{lemma:bound-update-view} to compute the delta query for any update to one of its relations. This corresponds to using a view tree that consists of one join view whose children are the query atoms. \Cref{fig:bad-4-cycle} shows such a view tree for the 4-cycle query. The root view \(V(A,B,C,D)\) contains all variables of the query. In contrast, every view tree in \Cref{fig:4-cycle-view-trees} uses only views over strict subsets of \(\{A,B,C,D\}\). Since \Cref{eq:feasible1} implies that the linear program becomes more constrained as \(\vect X\) grows, its optimum can only increase when \(\vect X\) contains more variables. Hence maintaining any view in the view trees of \Cref{fig:4-cycle-view-trees} is cheaper than maintaining \(V(A,B,C,D)\) in \Cref{fig:bad-4-cycle}. In general, view trees allow us to project out variables that are no longer needed and thereby reduce the cost of maintaining intermediate views. More importantly, we can compute a view update by only looking up in the materialized sibling projection views.

\begin{figure}[t]
  \centering
  \begin{tikzpicture}[
    level distance=10mm,
    level 1/.style={sibling distance=18mm},
    every node/.style={font=\small}
  ]
    \node {$V(A,B,C,D)$}
      child { node {$R(A,B)$} }
      child { node {$S(B,C)$} }
      child { node {$T(C,D)$} }
      child { node {$U(D,A)$} };
  \end{tikzpicture}
  \caption{View tree with root view \(V(A,B,C,D)\) and leaf atoms \(R(A,B)\), \(S(B,C)\), \(T(C,D)\), and \(U(D,A)\).}
  \Description{A rooted tree with root labeled V(A,B,C,D) and four children labeled R(A,B), S(B,C), T(C,D), and U(D,A).}
  \label{fig:bad-4-cycle}
\end{figure}

Furthermore, we can significantly reduce the overall maintenance cost by maintaining a collection of distinct view trees and selecting the optimal one for each specific degree configuration. This strategy yields benefits through two distinct mechanisms:

\begin{enumerate}
    \item \textbf{Optimizing Variable Sets:} The bound in Lemma~\ref{lemma:bound-update-view} depends heavily on the set of variables $\vect X$ covered by a view. Different view trees decompose the query into views over different variable sets. Thus, for a specific degree configuration \(\vect d\), we can select a tree whose views minimize the polymatroid bound $\bound^*(\vect X, \DC(\vect d, \cdot))$.
    
    \item \textbf{Pruning Update Paths:} Even if two trees contain views over identical variable sets, their hierarchical structure dictates which views must be updated. For an update $\delta R_i$, only views on the path from the leaf $R_i$ to the root require maintenance. By switching trees, we can effectively "disconnect" expensive views from costly updates.
\end{enumerate}

\begin{example}
    \label{ex:multiple-trees}
    Consider the 4-cycle query and the view trees in \Cref{fig:4-cycle-view-trees}.
    
    \noindent\textbf{Mechanism 1 (Variable Sets):} Compare View Tree 1 and View Tree 2. Tree 1 relies on views \(V_1(A,B,C)\) and \(V_2(C,D,A)\), whereas Tree 2 uses \(V_1(A,B,D)\) and \(V_2(B,C,D)\). For degree configurations with both \(B\) and \(D\) heavy, the sets \(ABD\) and \(BCD\) might be cheaper to cover than the set \(ABC\) by leveraging both degree constraints \((B,\emptyset,N^{1-\epsilon})\), \((D,\emptyset,N^{1-\epsilon})\).
    
    \noindent\textbf{Mechanism 2 (Update Paths):} Compare View Tree 1 and View Tree 3. Both trees contain views over the same variable sets $\{A,B,C\}$ and $\{A,C,D\}$. However, their structural differences lead to different maintenance costs depending on the specific update.
    \begin{itemize}
        \item \textbf{Case 1 \(((H,H,H,H),\delta R)\):} Let \(\C = \DC((H,H,H,H),\delta R)\). Then, \(2^{\bound^*(ABC,\C)} = N^{1-\epsilon}\) and \(2^{\bound^*(ACD,\C)} = N^{2-2\epsilon}\). Observe that Tree 1 benefits from not having to update \(V_2(A,C,D)\) under this update. Thus, for the degree configuration \((H,H,H,H)\) Tree 3 cannot handle updates to any relations in time \(\bigO(N^{1-\epsilon})\), but Tree 1 can. 
        
        \item \textbf{Case 2 \(((H,L,L,H),\delta U)\):}
        Let \(\C = \DC((H,L,L,H),\delta U)\). Then, we compute \(2^{\bound^*(ABC,\C)} = N^{\epsilon}\) and \(2^{\bound^*(ACD,\C)} = N\). Again, we observe that in Tree 3, the leaf $U$ is not a descendant of the view $V_3(A,C,D)$. Therefore, the update $\delta U$ does not trigger a re-computation of this view, reducing the cost to zero. Thus, for degree configuration Tree 3 is better for handling updates to any of the leaf relations.
    \end{itemize}
\end{example}
}
\section{Major and Minor Rebalancing of Data Partitioning}
\label{sec:rebalancing}
In the previous sections we showed that for any join query $Q$, the view trees constructed by our approach can be maintained in \(\bigO(N^{\mw(Q)})\) time under {\em one} single-tuple update, where $\mw(Q)$ denotes the maintenance width of $Q$. 
In this section, we extend this analysis to {\em sequences} of single-tuple updates.
We show that, given a {\em sequence} of single-tuple updates, the {\em amortized} single-tuple update time remains 
\(\bigO(N^{\mw(Q)})\). 
Our proof is based on an adaptation of the amortization technique previously developed for the triangle query~\cite{TriangleQuery}. For clarity, we outline the differences from that prior work and  state the central ideas of the argument. 

Each update may affect both the size of the database and the degrees of data values. Whenever the database size exceeds specified bounds, we recompute the view trees for all degree configurations by taking the new database size into account. We refer to this operation as {\em major rebalancing}.
Similarly, if a light value becomes heavy, or vice-versa, as a result of an update, we move the affected tuples between the corresponding relation fragments and update the view trees evaluated over these fragments. We call this operation {\em minor rebalancing}. The cost of both types of rebalancing can be amortized over the number of updates between two consecutive rebalancing steps, yielding an amortized rebalancing cost of \(\bigO(N^{\mw(Q)})\) per single-tuple update.

One difference between the maintenance strategy for the triangle query in prior work~\cite{TriangleQuery} and the approach proposed here is the definition of light and heavy values. In the prior work, lightness and heaviness were defined with respect to each individual relation; that is, an 
$X$-value $x$ is considered light in a relation 
$R$ if $|\sigma_{X = x}R| \leq N^\epsilon$, and heavy otherwise. In contrast, in this work we define lightness and heaviness globally, with respect to \emph{all} relations in the database (see Sec.~\ref{sec:prelims}). This distinction does not change the analysis of the amortized rebalancing time. The detailed analysis of the major and minor rebalancing steps is given in the extended version of this article.

\section{Constant-Delay Enumeration of the Query Output}
\label{sec:enumeration}

For any join query, our approach constructs  a set of view trees and maintains them under single-tuple updates to the input relations. As discussed in this section, we can also enumerate the query output from these view trees with constant delay.

Consider a join query $Q(\vect{X}) = R_1(\vect{X}_1) \cdot \ldots \cdot R_k(\vect{X}_k)$ with degree configurations $\mathcal D (Q)$. 
Let $\out$ denote the set of output tuples of $Q$, and $\out_{\vect d}$ the set of output tuples of $Q$ under the degree configuration $\vect d \in \mathcal D(Q)$. 
\nop{Each degree configuration defines a database fragment such that: (i) the union of all fragments makes up the original database; and (ii) the query outputs $\out_{\vect d}$ for $\vect d\in\mathcal D(Q)$ are disjoint and their union forms the entire output $\out$ of the query.}
The set $\out$ is the disjoint union of the sets $\out_{\vect d}$.
Given a constant-delay enumeration procedure for each tuple set $\out_{\vect d}$, we can therefore enumerate the tuples in $\out$  with constant delay by invoking the procedures one  after the other. 
For each tuple $\vect{t} \in \out$, its multiplicity is
$\sum_{i \in [k]}R_i(\vect{t}.\vect{X}_i)$,
which can be computed in constant time.

We next discuss such a constant-delay enumeration procedure for a single set $\out_{\vect d}$, for any $\vect d \in \mathcal D(Q)$. Recall that for each degree configuration $\vect d \in \mathcal D(Q)$, our approach maintains a view tree $T^{\vect d}$. 
It then suffices to enumerate $\out_{\vect d}$ from $T^{\vect d}$ with constant delay.  Prior work shows how from any view tree, the tuples in the join of the views in the tree can be enumerated with constant delay~\cite[Prop. 11]{dynamicrelation}. The extended version of this article gives further details, illustrates the enumeration procedure for the 4-cycle query, and explains how to slightly change the approach to maintain the count version of the query with the same update time and constant-delay enumeration.
\section{Conclusion}
\label{sec:conclusion}

In this paper we introduced an approach to adaptive maintenance of join queries under database updates (inserts and deletes) that exploits the  constraints that come with partitioning the database using the heavy and light degrees of the values of the join variables. Our approach matches the best known update times stated in the literature, while also generalizing to arbitrary join queries. 

There are several promising directions of future work, we list below a few:
\begin{itemize}
    \item Our complexity results hold for join queries only, but they can be generalized in a standard way to arbitrary conjunctive queries by taking view trees that correspond to free-connex hypertree decompositions of such queries. This free-connex restriction is necessary to ensure constant delay enumeration of the query output. The machinery developed in this paper to upper bound the sizes and compute times for delta views already works for arbitrary conjunctive queries.
    
    \item We can extend our approach to richer constraints beyond degree constraints: Using $\ell_p$-norms on the degree sequences of join columns, we can obtain tighter upper bounds on the size of the query output and on the runtime to compute it~\cite{LpBound:PODS:2024,LpBound:SIGMODRec:2025}. Adopting such statistics to the dynamic setting requires to maintain them efficiently under updates.

    \item We currently partition the data on each join variable. It remains open whether partitioning on tuples of variables and subsets of such tuples can improve the update time. For instance, for a query with body $R(A,B,C)\cdot S(A,B,D) \cdot T(A,E)\cdot U(B,F)$ we currently only partition on $A$ and $B$ separately, although we could also partition on the tuple $A,B$. Our framework extends immediately to this more general setting.

    \item We can lower the update time by using fast matrix multiplication: The $\bigO(N^{2/3})$ barrier for the update time of the 4-cycle query can be broken using a non-combinatorial IVM approach~\cite{FMM4Cycle}. It is unclear however how to generalize this non-combinatorial approach to arbitrary queries.
    
    \item When using a maintenance approach based on single view tree, inserts-only updates can require a lower update time than the general case with both inserts and deletes~\cite{InsertsVSDeletes}. It is open whether this restriction can also lower the update time in our approach.
\end{itemize}

\bibliographystyle{plainurl}
\bibliography{bibliography}

%\newpage

\appendix
\section{Additional Examples}
In this section, we provide additional (and extended) examples of our maintenance approach and show how it can recover all results of existing IVM approaches (where the update time is amortized, and the enumeration delay is constant). Conditional on the OMv conjecture~\cite{OMV:STOC:2015,QHierarchical} or on the conjectured optimality of the submodular width for static query evaluation~\cite{InsertsVSDeletes}, the update times shown here cannot be improved for any query by a polynomial factor, except for the update time of the bow tie query, which exhibits a gap of $\bigO(N^{1/4})$. We also highlight the adaptability of our approach, which can be used to maintain \emph{any} join query. To this end, we demonstrate our approach on query patterns that have not been specifically considered in prior IVM works and provide results on their update times.

\subsection{Extended Example: 4-Cycle Query}

In Sec.~\ref{sec:overview}, we claimed that our approach maintains the 4-cycle query
\[
    Q_{\square}(A,B,C,D) = R(A,B) \cdot S(B,C) \cdot T(C,D) \cdot U(D,A)
\]
with amortized update time $\bigO(N^{2/3})$. The update time cannot be improved by a polynomial factor, conditional on the conjectured optimality of the submodular width for static query evaluation~\cite{InsertsVSDeletes}. In this section, we explain how we obtain the update times given in Table~\ref{4-cycle-update-times}.

As $Q_\square$ has four join variables, there are 16 degree configurations. For each degree configuration, we pick an optimal view tree to use for maintenance. Fig.~\ref{fig:4-cycle-view-trees} shows the six view trees we use and Table~\ref{4-cycle-update-times} shows the view tree chosen for each degree configuration and the update time as a function of $\epsilon$. The update time is the maximum compute time of any delta view in the delta view tree for an update to any relation. We illustrate in detail how to find the update times of the three degree configurations below:
\[\{(L,L,L,L), (L,L,H,H), (L,H,H,H)\}\]
Other degree configurations witness identical update times that are derived similarly.

{\em Note: In our analysis we do not explicitly state the compute time of views corresponding to marginalizations and intersections (except the first time), as their compute times are inherited from a child view whose compute time has already been shown. }

\subsubsection{(L, L, L, L)}
For the degree configuration $\vect d = (L,L,L,L)$ and View Tree 1 in Fig.~\ref{fig:4-cycle-view-trees}, we simulate an update to each of the four relations $R$, $S$, $T$, and $U$, find the compute time of each delta view, and take the maximum. We begin with $\delta R$. The first delta view to be evaluated is $\delta V_1 (A,B,C) = \delta R(A,B) \cdot S(B,C)$. The constraints that holds at the leaves of $\delta V_1$ in the delta view tree are:
\[\C_1 = \DC(\leaves(\delta V_1), \vect d) = \{(BC|\emptyset,N), (BC|B,N^\epsilon), (BC|C, N^\epsilon), (A|\emptyset,1), (B|\emptyset,1)\}.\]
As the free variables of $\delta V_1$ are precisely $\vars(\C_1)$, $\C_1$ is the only constraint set to consider when bounding the compute time of $\delta V_1$, and the $\C_1$-guarding query is $Q_{\C_1} = \delta R(A,B) \cdot S(B,C) = \delta V_1$. $\bound(\C_1) = \epsilon$ using the second, fourth, and fifth constraints, which yields the compute time $\bigO(N^\epsilon)$ for $\delta V_1$. The next delta view to be evaluated is $\delta V_3(A, C) = \sum_B \delta V_1(A, B,C)$. The constraint set that holds at the leaves of $\delta V_2$ is $\C_1$. We consider $\C_1$ projected onto different supersets of the free variables of $\delta V_3$, namely $\C_1[AC] = \{(C|\emptyset,N), (A|\emptyset,1)\}$ and $\C_1[ABC] = \C_1$. The join query $Q_{\C_1}$ that over-approximates $\delta V_3$ gives the best upper bound of $\bigO(N^{\epsilon})$ on the compute time of $\delta V_3$. Next, we have $\delta V_5(A, C) = \delta V_3(A,C) \cdot V_4(C,A)$. The constraints that holds at the leaves of $\delta V_5$ are:
\begin{align*}
    \C_5 = \DC(\leaves(\delta V_5), \vect d) = \{&(BC|\emptyset,N), (CD|\emptyset,N), (DA|\emptyset,N), (BC|B,N^\epsilon), (BC|C,N^\epsilon),\\
    &(CD|C, N^\epsilon), (CD|D,N^\epsilon), (DA|A,N^\epsilon) (DA|D,N^\epsilon),\\
    &(A|\emptyset,1), (B|\emptyset,1)\}
\end{align*}
We consider the constraint sets $\C_5[AC]$, $\C_5[ABC]$, $\C_5[ACD]$, and $\C_5[ABCD] = \C_5$. The constraint set
\[\C_5[ABC] = \{(BC|\emptyset,N), (C|\emptyset,N),(A|\emptyset,N),(BC|B,N^\epsilon),(BC|C,N^\epsilon),(A,\emptyset,1), (B|\emptyset,1)\}\]
has the guarding query $Q_{\C_5[ABC]} = \delta R(A,B) \cdot S(B,C) \cdot T'(C) \cdot U'(A)$, where $T'$ is the projection of $T$ onto $C$ and $U'$ is the projection of $U$ onto $A$. $Q_{\C_5[ABC]}$ over-approximates $\delta V_5$ and gives the best upper bound of $\bigO(N^\epsilon)$ on the compute time by using the fourth, sixth, and seventh constraints.

Thus, the update time for $\delta R$ is $\bigO(N^\epsilon)$. Because all join variables are light, updates to the remaining relations are symmetric in the respective delta view trees and obtain the same update time. Thus, the update time for the degree configuration $(L, L, L, L)$ and View Tree 1 is $\bigO(N^\epsilon)$.

\subsubsection{(L, L, H, H)}
Next, we show the update time for $(L, L, H, H)$ using View Tree 4 in Fig.~\ref{fig:4-cycle-view-trees}. Consider an update $\delta R$. The first delta view to be evaluated is $\delta V_1(A, B, C) = \delta R(A, B) \cdot S(B, C)$. The constraints that holds at the leaves of $\delta V_1$ are:
\[\C_1 = \DC(\leaves(\delta V_1), \vect d) = \{(BC|\emptyset,N), (BC|B,N^\epsilon), (C|\emptyset,N^{1-\epsilon}), (A|\emptyset,1), (B|\emptyset,1)\}\]
We can use the second, fourth, and fifth constraints to yield the compute time $\bigO(N^\epsilon)$, or we can use the third, fourth, and fifth constraints to yield the compute time $\bigO(N^{1 - \epsilon})$. Thus, the compute time for $\delta V_1$ is $\bigO(N^{\min(\epsilon,1-\epsilon)})$. Next, we have $\delta V_3(A, C, D) = \delta V_2(A, C) \cdot U(D, A)$. The constraints that hold at the leaves of $\delta V_3$ are:
\begin{align*}
    \C_3 = \DC(\leaves(\delta V_3), \vect d) = \{&(BC|\emptyset,N), (DA|\emptyset,N), (BC|B,N^\epsilon), (DA|A,N^\epsilon),\\
    & (C|\emptyset,N^{1-\epsilon}), (D|\emptyset,N^{1-\epsilon}), (A|\emptyset,1), (B|\emptyset,1)\}.
\end{align*}
We consider the constraint sets $\C_3[ACD]$ and $\C_3[ABCD] = \C_3$. The $\C_3$-guarding query that over-approximates $\delta V_3$ has compute time $\bigO(N^{2\epsilon})$ (selecting the third, fourth, seventh, and eighth constraints) or $\bigO(N^{2 - 2\epsilon})$ (selecting the fifth, sixth, seventh, and eighth constraints). The projection $\C_3[ACD]$ does not give a better compute time. Thus, the compute time of $\delta V_3$ is $\bigO(N^{\min(2\epsilon, 2-2\epsilon)})$.

Now we consider an update $\delta S$. The first delta view to be evaluated is $\delta V_1(A,B,C) = R(A,B) \cdot \delta S(B,C)$, and the constraints that hold at the leaves of $\delta V_1$ are:
\[\C_1 = \DC(\leaves(\delta V_1), \vect d) = (AB|\emptyset,N), (AB|A,N^\epsilon), (AB|B,N^\epsilon), (B|\emptyset,1), (C|\emptyset,1).\]
Selecting the second, fourth, and fifth constraints gives the compute time $\bigO(N^\epsilon)$ for $\delta V_1$. $\delta V_3(A, C, D) = \delta V_2(A, C) \cdot U(D, A)$ and the constraints that hold at the leaves of $\delta V_3$ are:
\begin{align*}
    \C_3 = \DC(\leaves(\delta V_3), \vect d) = \{&(AB|\emptyset,N), (DA|\emptyset,N), (AB|A,N^\epsilon), (AB|B,N^\epsilon),\\
    &(DA|A,N^\epsilon), (D|\emptyset,N^{1-\epsilon}), (B|\emptyset,1), (C|\emptyset,1)\}.
\end{align*}
We consider the constraint sets $\C_3[ACD]$ and $\C_3[ABCD] = \C_3$. The $\C_3$-guarding query that over-approximates $\delta V_3$ presents two ways to bound the compute time: $\bigO(N)$ (selecting the second, seventh, and eighth constraints), or $\bigO(N^{2\epsilon})$ (selecting the fourth, fifth, sixth, and eighth constraints). $\C_3[ACD]$ does not give a better compute time. Thus, the compute time of $\delta V_3$ is $\bigO(N^{\min(2\epsilon,1)})$, which is also update time for $\delta S$.

Now we consider an update $\delta U$. The first delta view to be evaluated is $\delta V_3$, and we have $\delta V_3(A, C, D) = V_2(A, C) \cdot \delta U(D, A)$. The constraints that hold at the leaves of $\delta V_3$ are:
\begin{align*}
    \C_3 = \DC(\leaves(\delta V_3), \vect d) = \{&(AB|\emptyset,N), (BC|\emptyset,N), (AB|A,N^\epsilon), (AB|B,N^\epsilon),\\
    &(BC|B,N^\epsilon), (C|\emptyset,N^{1-\epsilon}), (D|\emptyset,1), (A|\emptyset,1)\}.
\end{align*}
We consider the constraint sets
\[\C_3[ACD] = \{(A|\emptyset,N), (C|\emptyset,N), (C|\emptyset,N^{1-\epsilon}), (D|\emptyset,1), (A|\emptyset,1)\}\]
and $\C_3[ABCD] = \C_3$. The $\C_3[ACD]$-guarding query that over-approximates $\delta V_3$ has compute time $\bigO(N^{1-\epsilon})$ (selecting the third, fourth, and fifth constraints). The $\C_3$-guarding query has compute time $\bigO(N^{2\epsilon})$ (selecting the third, fifth, seventh, and eighth constraints).  Thus, the compute time of $\delta V_3$ is $\bigO(N^{\text{min}(1 - \epsilon, 2\epsilon)})$, which is also the overall update time for $\delta U$.

Finally, and update $\delta T$ has constant update time because
\begin{align*}
    \C_5[CD] = \DC(\leaves(\delta V_5), \vect d)[CD] = \{&(C|\emptyset,N), (D|\emptyset,N),(C|\emptyset,N^{1-\epsilon}),\\
    & (D|\emptyset,N^{1-\epsilon}), (C|\emptyset,1), (D|\emptyset,1)\}
\end{align*}
and the $\C_5[CD]$-guarding query that over-approximates $\delta V_5$ has compute time $\bigO(1)$ (selecting the fifth and sixth constraints).

Overall, the update time for degree configuration $(L, L, H, H)$ and view tree four is given by $O(N^{f(\epsilon)})$, where $f(\epsilon) = \text{max}(\text{min}(2\epsilon, 2-2\epsilon), \text{min}(2\epsilon, 1))$, as shown in 
Table~\ref{4-cycle-update-times}.

\subsubsection{(L, H, H, H)}
Finally, we show the update time for the degree configuration $(L, H, H, H)$ using View Tree 2 in Fig.~\ref{fig:4-cycle-view-trees}. 

Consider an update $\delta R$. We evaluate
$\delta V_1(A, B, D) = \delta R(A, B) \cdot U(D, A)$.
The constraints that hold at the leaves of $\delta V_1$ in the delta view tree are:
$$\C_1 = \DC(\leaves(\delta V_1), \vect d) = \{(DA|\emptyset,N), (DA|A,N^\epsilon), (D|\emptyset,N^{1-\epsilon}), (A|\emptyset,1), (B|\emptyset,1)\}.$$
The $\C_1$-guarding query is precisely $\delta V_1$, and the compute time is $\bigO(N^\epsilon)$ (selecting the second, fourth, and fifth constraints) or $\bigO(N^{1-\epsilon})$ (selecting the third, fourth, and fifth constraints). Thus, the compute time of $\delta V_1$ is $\bigO(N^{\min(\epsilon, 1-\epsilon)})$, which is also the overall update time for $\delta R$.

An update $\delta U$ is symmetric to $\delta R$ and yields the same update time. 

For an update $\delta S$, we evaluate $\delta V_2(B, C, D) = \delta S(B, C) \cdot T(C, D)$. The constraints that hold at the leaves of $\delta V_2$ in the delta view tree are:
\[\C_2 = \DC(\leaves(\delta V_2), \vect d) = \{(CD|\emptyset,N), (C|\emptyset,N^{1-\epsilon}), (D|\emptyset,N^{1-\epsilon}), (B|\emptyset,1), (C|\emptyset,1)\}.\]
The $\C_2$-guarding query is precisely $\delta V_2$, and the compute time is $\bigO(N^{1-\epsilon})$ (selecting the third, fourth, and fifth constraints), which is also the overall update time for $\delta S$. 

An update $\delta T$ is symmetric to $\delta S$ and achieves the same update time. After simplifying, we have that the update time for degree configuration $(L, H, H, H)$ using View Tree 2 is $\bigO(N^{1 - \epsilon})$.

\subsubsection{Maintenance Width}

We have shown how to derive all three unique update times shown in Table~$\ref{4-cycle-update-times}$. The update times of the remaining degree configurations can be obtained through similar analysis. The maintenance width $\mw(Q_\square)$ of the 4-cycle query is then the minimum over all expressions in the table for the base $N$ logarithm of the update time. This width can be computed as the minimization over $\epsilon$ of the maximum optimal solution of linear programs.
    \begin{align*}
    \mw(Q_\square) &=  \min_\epsilon \max(\epsilon, 1-\epsilon, \max(\min(2\epsilon, 2-2\epsilon), \min(2\epsilon, 1))) \\
    &=  \min_\epsilon \max(\epsilon, 1-\epsilon, \min(2\epsilon, 2-2\epsilon), \min(2\epsilon, 1)) \\
    &\overset{*}{=} \min_\epsilon \min(
    \max(\epsilon, 1-\epsilon,2\epsilon,2\epsilon), 
    \max(\epsilon, 1-\epsilon,2\epsilon,1),\\
    &\hspace*{4.75em}
    \max(\epsilon, 1-\epsilon,2-2\epsilon,2\epsilon), 
    \max(\epsilon, 1-\epsilon,2-2\epsilon,1))\\
    &\overset{+}{=} \min (\min_\epsilon \max(\epsilon, 1-\epsilon,2\epsilon,2\epsilon) \text{ s.t. } 0\leq \epsilon \leq 1\\
    &\hspace*{3em} \min_\epsilon\max(\epsilon, 1-\epsilon,2-2\epsilon,1) \text{ s.t. } 0\leq \epsilon \leq 1\\
    &\hspace*{3em} \min_\epsilon\max(\epsilon, 1-\epsilon,2-2\epsilon,2\epsilon) \text{ s.t. } 0\leq \epsilon \leq 1\\
    &\hspace*{3em} \min_\epsilon\max(\epsilon, 1-\epsilon,2-2\epsilon,1) \text{ s.t. } 0\leq \epsilon \leq 1 )
\end{align*}
The equality (*) holds due to the distributivity of $\max$ over $\min$, while the equality (+) is due to the commutativity of the two $\min$ functions. We then have to take the minimum of the optimal solutions of four optimization problems, which can be encoded as linear programs. We show the equivalent linear program for the first optimization problem above:
\begin{align*}
    &\min_\epsilon\ \ q \ \ \text{ s.t. }
    \hspace*{2em} q \geq \epsilon
    \hspace*{2em} q \geq 1-\epsilon
    \hspace*{2em} q \geq 2\epsilon
    \hspace*{2em} 0 \leq \epsilon \leq 1.    
\end{align*}
We can observe that the optimal solution is $2/3$ and obtained for $\epsilon=1/3$. The other optimization problems give larger solutions. The update time of $Q_\square$ is then $\bigO(N^{2/3})$.

\subsection{Loomis-Whitney Queries}
\emph{Loomis-Whitney queries} generalize the triangle query from a clique of $k = 3$ to higher degrees \cite{LoomisWhitneyQueries}. The Loomis-Whitney $k$ query of degree $k \geq 3$ (denoted LW-$k$) is defined as
\[Q(X_1, ..., X_k) = \prod_{i \in [k]} R_i(\{X_1, ..., X_k\} \setminus \{X_i\})\]
We show that our approach achieves update time $\bigO(N^{1/2})$ for the LW-4 query and then extend the argument to LW-$k$. These update times cannot be improved by a polynomial factor, conditional on the OMv conjecture~\cite{OMV:STOC:2015,QHierarchical} and on the conjectured optimality of the submodular width for static query evaluation~\cite{InsertsVSDeletes}.

\subsubsection{Loomis-Whitney 4}
LW-4 is given by
\[Q_{\text{LW-}4}(A,B,C,D) = R(B,C,D) \cdot S(A,C,D) \cdot T(A,B,D) \cdot U(A,B,C)\]
The four view trees used to maintain $Q_{\text{LW-}4}$ are shown in Fig.~\ref{fig:lw-4-trees}. Table~\ref{tab:comparisons-lw} shows each degree configuration, the view tree used to maintain it, and the update time. The maintenance width $\mw(Q_{\text{LW-}4})$ is then the minimum over all expressions in the table for the base $N$ logarithm of the update time. This width can be computed as the minimization of optimal solutions of linear programs.
    \begin{align*}
    \\mw(Q_{\text{LW-}4}) &=  \min_\epsilon \max(\epsilon, 1-\epsilon, \min(\epsilon, 1 - \epsilon)) \\
    &=  \min_\epsilon \max(\epsilon, 1-\epsilon)
\end{align*}
We can observe that the optimal solution is $1/2$ and obtained for $\epsilon=1/2$. By Theorem~\ref{th:main}, the update time of $Q_{\text{LW-}4}$ is $\bigO(N^{1/2})$.

\begin{figure}
    \centering
    \setlength{\extrarowheight}{4pt}
    \setlength{\tabcolsep}{8pt} % default is 6pt
    \begin{tabular}{cccc}
        \begin{tikzpicture}[
            scale=0.75,
            every node/.style={
            circle,
            draw=none,
            minimum size=2.5em,
            inner sep=0pt,
            font=\scriptsize
            },
            every edge/.style={>=Stealth, line width=0.7, draw=black},
            ]
        
            \node at (0,0) {$S(A,C,D)$};
            \node at (2,0) {$T(A,B,D)$};
            
            \node at (0,1) {$V_1(A,B,C,D)$}; 
            \node at (2.4,1) {$U(A,B,C)$};
               
            \node at (0,2) {$V_2(A,B,C,D)$};
            
            \node at (0,3) {$V_3(B,C,D)$};
            \node at (2,3) {$R(B,C,D)$};

            \node at (0,4) {$V_4(B,C,D)$};
    
            \draw (0,0.25) -- (0,0.75);
            \draw (2,0.25) -- (0.15,0.75);
    
            \draw (0,1.25) -- (0,1.75);
            \draw (2,1.25) -- (0.15,1.75);

            \draw (0,2.25) -- (0,2.75);

            \draw (0,3.25) -- (0,3.75);
            \draw (2,3.25) -- (0.15,3.75);
    
            \node at (1,-0.7) {View Tree 1};
        \end{tikzpicture}
        &
        \begin{tikzpicture}[
            scale=0.75,
            every node/.style={
            circle,
            draw=none,
            minimum size=2.5em,
            inner sep=0pt,
            font=\scriptsize
            },
            every edge/.style={>=Stealth, line width=0.7, draw=black},
            ]
        
            \node at (0,0) {$R(B,C,D)$};
            \node at (2,0) {$T(A,B,D)$};
            
            \node at (0,1) {$V_1(A,B,C,D)$}; 
            \node at (2.4,1) {$U(A,B,C)$};
               
            \node at (0,2) {$V_2(A,B,C,D)$};
            
            \node at (0,3) {$V_3(A,C,D)$};
            \node at (2,3) {$S(A,C,D)$};

            \node at (0,4) {$V_4(A,C,D)$};
    
            \draw (0,0.25) -- (0,0.75);
            \draw (2,0.25) -- (0.15,0.75);
    
            \draw (0,1.25) -- (0,1.75);
            \draw (2,1.25) -- (0.15,1.75);

            \draw (0,2.25) -- (0,2.75);

            \draw (0,3.25) -- (0,3.75);
            \draw (2,3.25) -- (0.15,3.75);
    
            \node at (1,-0.7) {View Tree 2};
        \end{tikzpicture}
        &
        \begin{tikzpicture}[
            scale=0.75,
            every node/.style={
            circle,
            draw=none,
            minimum size=2.5em,
            inner sep=0pt,
            font=\scriptsize
            },
            every edge/.style={>=Stealth, line width=0.7, draw=black},
            ]
        
            \node at (0,0) {$R(B,C,D)$};
            \node at (2,0) {$S(A,C,D)$};
            
            \node at (0,1) {$V_1(A,B,C,D)$}; 
            \node at (2.4,1) {$U(A,B,C)$};
               
            \node at (0,2) {$V_2(A,B,C,D)$};
            
            \node at (0,3) {$V_3(A,B,D)$};
            \node at (2,3) {$T(A,B,D)$};

            \node at (0,4) {$V_4(A,B,D)$};
    
            \draw (0,0.25) -- (0,0.75);
            \draw (2,0.25) -- (0.15,0.75);
    
            \draw (0,1.25) -- (0,1.75);
            \draw (2,1.25) -- (0.15,1.75);

            \draw (0,2.25) -- (0,2.75);

            \draw (0,3.25) -- (0,3.75);
            \draw (2,3.25) -- (0.15,3.75);
    
            \node at (1,-0.7) {View Tree 3};
        \end{tikzpicture}
        &
        \begin{tikzpicture}[
            scale=0.75,
            every node/.style={
            circle,
            draw=none,
            minimum size=2.5em,
            inner sep=0pt,
            font=\scriptsize
            },
            every edge/.style={>=Stealth, line width=0.7, draw=black},
            ]
        
            \node at (0,0) {$R(B,C,D)$};
            \node at (2,0) {$S(A,C,D)$};
            
            \node at (0,1) {$V_1(A,B,C,D)$}; 
            \node at (2.4,1) {$T(A,B,D)$};
               
            \node at (0,2) {$V_2(A,B,C,D)$};
            
            \node at (0,3) {$V_3(A,B,C)$};
            \node at (2,3) {$U(A,B,C)$};

            \node at (0,4) {$V_4(A,B,C)$};
    
            \draw (0,0.25) -- (0,0.75);
            \draw (2,0.25) -- (0.15,0.75);
    
            \draw (0,1.25) -- (0,1.75);
            \draw (2,1.25) -- (0.15,1.75);

            \draw (0,2.25) -- (0,2.75);

            \draw (0,3.25) -- (0,3.75);
            \draw (2,3.25) -- (0.15,3.75);
    
            \node at (1,-0.7) {View Tree 4};
        \end{tikzpicture}
    \end{tabular}
    \vspace*{-2em}
    % \Description{Four view trees for the Loomis-Whitney query.}
    \caption{The four view trees used to maintain the LW-4 query.}
    \label{fig:lw-4-trees}
\end{figure}

\begin{table}[t]
\renewcommand{\arraystretch}{1.2}
\centering
\begin{tabular}{|c|c|r||c|c|r|}
\hline
\textbf{Configuration} & \textbf{View} & \textbf{\multirow{2}{*}{$\log_N$ UpdateTime}} & \textbf{Configuration} & \textbf{View} & \textbf{\multirow{2}{*}{$\log_N$ UpdateTime}} \\
$(A, B, C, D)$ & \textbf{Tree} & & $(A, B, C, D)$ & \textbf{Tree} & \\
\hline
$L, L, *, *$ & $1$ & $\epsilon$ & $H, L, H, L$ & $2$ & $\epsilon$ \\
\hline
$L, H, L, *$ & $1$ & $\epsilon$ & $H, L, H, H$ & $2$ & $f(\epsilon)$  \\
\hline
$L,H,H,L$ & $1$ & $\epsilon$ & $H, H, L, L$ & $3$ & $\epsilon$ \\
\hline
$L, H, H, H$ & $1$ & $f(\epsilon)$ & $H, H, L, H$ & $3$ & $f(\epsilon)$ \\
\hline
$H,L,L,*$ & $2$ & $\epsilon$ & $H, H, H, L$ & $4$ & $f(\epsilon)$ \\
\hline
& & & $H, H, H, H$ & $4$ & $1-\epsilon$ \\
\hline
\end{tabular}
\caption{The (base $N$) logarithm of the update time for each degree configuration and a specific view tree from Fig.~\ref{fig:lw-4-trees}. Note that $f(\epsilon) = \min(\epsilon, 1 - \epsilon)$; $(*)$ in the degree configuration indicates that the join variable can be either heavy or light.}
\label{tab:comparisons-lw}
\end{table}

\subsubsection{Loomis-Whitney $k$}

In this section, we prove our approach cannot be improved by a polynomial factor for LW-$k$ queries conditional on the aforementioned conjectures.

\begin{proposition}\label{prop:lw-update-time}
    If $Q$ is a Loomis-Whitney $k$ query where $k \geq 3$, then $Q$ admits $\bigO(N^{1/2})$ (amortized) update time and (non-amortized) $\bigO(1)$ enumeration delay using our approach.
\end{proposition}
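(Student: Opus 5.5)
We will show that $\mw(Q)\leq 1/2$ for the LW-$k$ query $Q$ and then invoke Theorem~\ref{th:main}; constant enumeration delay also comes from Theorem~\ref{th:main}, via Sec.~\ref{sec:enumeration}. Since $\mw(Q)$ is a $\min$ over $\epsilon$, it suffices to fix $\epsilon = 1/2$ and show three things: for every degree configuration $\vect d$ there is a view tree $T$ such that every delta view $\delta V$ of $T$, under an update to any leaf below $V$, admits some guarding query with $\bound \leq \max(\epsilon,1-\epsilon)$. This mirrors Fig.~\ref{fig:lw-4-trees} and Table~\ref{tab:comparisons-lw}.

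\emph{The view trees.} For each $i\in[k]$ let $T_i$ be the left-deep tree used for LW-4. It joins the relations $R_j$ with $j\neq i$ one after another, giving join views over all of $X_1,\ldots,X_k$. It then projects onto $\{X_1,\ldots,X_k\}\setminus\{X_i\}$ and, at the root, joins the result with $R_i$. For a configuration $\vect d$ we choose $T_i$ as follows: if some $X_i$ is light in $\vect d$, take such an $i$; otherwise (all variables heavy) take $i$ arbitrary.

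\emph{Costing the deltas.} Consider an update $\delta R_j$. Its update constraints fix every variable except $X_j$.

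First, take $j\neq i$. Every delta view on the path from $\delta R_j$ to the root has a guarding query over all $k$ variables (taking $\vect X'=\vars(\C)$). Its cost is that of bounding the number of $X_j$-values, because every other variable is covered by a weight-$0$ update constraint. There are two cases.
\begin{itemize}
\item If $X_j$ is heavy, the heavy constraint $(X_j|\emptyset,N^{1-\epsilon})$ is guarded by some leaf $R_l$ with $l\neq j$ containing $X_j$. This gives cost $1-\epsilon$.
\item If $X_j$ is light, then by our choice $X_i$ is light and fixed by the update. Since $k\geq 3$, there is an index $l\notin\{i,j\}$, and $R_l$ contains both $X_i$ and $X_j$. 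Moreover $R_l$ is a leaf below every join view in $T_i$ (only $R_i$ sits at the root, and any join view on the path already contains all $R_{l'}$ with $l'\neq i$ once $\delta R_j$ has been joined; for the first join view we instead use the sibling leaf). $R_l$ therefore guards the light constraint $(\vect X_l|X_i,N^{\epsilon})$, which covers $X_j$ at cost $\epsilon$.
\end{itemize}
The projection view and the root join reuse the same guarding query, so they cost no more. In the all-heavy configuration only the first case arises.

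Second, take $j=i$. The update touches only the root join $\delta R_i\cdot V$. All of its variables are fixed by update constraints, so its cost is $0$: it is a constant-time lookup in the sibling projection view.

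\emph{Remaining formal checks.} Two conditions from Def.~\ref{def:maintenance-width} must hold for the chosen constraint sets. First, each must be an acyclic subset of the projected $\DC$ set. This holds because we use only update constraints, size or heavy constraints, and at most one light constraint; such a set has a constraint graph that is a star out of $X_i$, hence acyclic, and it extends to a maximal acyclic subset without increasing the optimum. Second, the cover LP must be feasible with the claimed vertex $\vect w$, which follows directly from the covering argument above. The step I expect to need the most care is the light case: one must verify that the witnessing relation $R_l$ really lies in $\leaves(\delta V)$ for \emph{every} view on the path, including the lowest join view in the left-deep order. If it fails there, I would reorder the joins of $T_i$ so that a relation containing both $X_i$ and $X_j$ is joined first with $R_j$. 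Granting this, every configuration costs at most $\max(\epsilon,1-\epsilon)=1/2$, so $\mw(Q)\leq 1/2$ and the proposition follows.
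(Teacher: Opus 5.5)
Your proposal is correct and follows the paper's proof. You fix $\epsilon=1/2$ and use the same left-deep tree, with the relation missing a light variable at the root (or an arbitrary relation in the all-heavy configuration). You then bound each delta view by the number of values of the single unfixed variable $X_j$, using either a heavy constraint or a light constraint conditioned on the fixed light variable $X_i$. The only difference is that you phrase this more formally through guarding queries and the covering LP.

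The step you flag at the end needs no reordering. Every join view on the path from $\delta R_j$ has a leaf other than $\delta R_j$. Since $R_i$ appears only at the root, that leaf contains both $X_i$ and $X_j$. Your parenthetical claim that such a view already contains all $R_{l'}$ with $l'\neq i$ is false in a left-deep tree, but it is not needed.
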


\begin{proof}
    Consider the Loomis-Whitney $k$ query $Q$, where $k \geq 3$, and set the partitioning threshold to $\epsilon = 1/2$. Then there are $2^k$ unique degree configurations which fall into two cases: Either the degree configuration indicates at least one light variable, or all variables are heavy.
    
    Consider the first case and wlog, let the light variable be $X_k$, which appears in the schema of all relations except $R_k$. We construct the view tree for this degree configuration by joining $R_1(X_2, ..., X_k)$ and $R_2(X_1, X_3, ..., X_k)$ to create the view $V_3(X_1, ..., X_k)$. The view $V_{i+1}(X_1, ..., X_k)$ is created by joining $V_{i}(X_1, ..., X_k)$ and $R_{i}(X_1, ..., X_{i-1}, X_{i+1}, ..., X_k)$ for all $i \in [3, k-1]$. We create $V_{k+1}$ by projecting $V_{k}$ onto $\vars(R_k)$, and we create $V_{k+2}$ by intersecting $V_{k+1}$ and $R_k$. This view tree is shown in Fig.~\ref{fig:lw-k-tree}. We now show the update time is $\bigO(N^{1/2})$. Consider an update $\delta R_1$. The first delta view to be computed is \[\delta V_3(X_1,...,X_k) = \delta R_1(X_2,...,X_k) \cdot R_2(X_1,X_3,...,X_k).\]
    The number of $X_1$-values for a given $X_k$-value is at most $N^{1/2}$, so the compute time is $\bigO(N^{1/2})$. Each $V_4, ..., V_k$ is simply the semi-join reduction of $V_{i-1}$ with $R_{i-1}$ and can be inductively shown to have the same compute time. $V_{k+1}$ is a projection and $V_{k+2}$ is an intersection and achieve the same compute time. An update to $R_2$ is symmetric and achieves the same update time. Now consider an update $\delta R_3$. 
    \[\delta V_4(X_1,...,X_k) = V_3(X_1,...,X_k) \cdot \delta R_3(X_1,X_2,X_4,...,X_k)\]
    and the number of $X_3$-values for a given $X_k$-value is at most $N^{1/2}$, and so $\delta V_4$ has compute time $\bigO(N^{1/2})$. As before, the semi-joins $V_5, ..., V_k$, the projection $V_{k+1}$, and the intersection $V_{k+2}$ have the same compute time. Updates to $R_4, ..., R_{k-1}$ are symmetric and have the same update time. Now consider an update $\delta R_k$. Because $V_{k+2}$ is simply an intersection of $V_{k+1}$ and a single tuple, the compute time is $\bigO(1)$. Thus, the update time is $\bigO(N^{1/2})$ for such degree configurations.

    Now consider the case where all variables are heavy. Construct the same view tree as in the first case, and choose $X_k$ arbitrarily. Consider an update $\delta R_1$. There are at most $N^{1/2}$ different $X_1$-values, so the compute time is $\bigO(N^{1/2})$. As before, the semi-joins $V_4, ..., V_k$, the projection $V_{k+1}$, and the intersection $V_{k+2}$ have the same compute time. Thus, the update time of $\delta R_1$ is $\bigO(N^{1/2})$. An update to $R_2$ is symmetric and achieves the same update time. Now consider an update $\delta R_3$. There are at most $N^{1/2}$ different $X_1$-values, and so $\delta V_4$ and has compute time $\bigO(N^{1/2})$. The remaining delta views have the same compute time. Updates to $R_4, ..., R_{k-1}$ are symmetric and have the same update time. Now consider an update $\delta R_k$. As before, $V_{k+2}$ is simply an intersection of $V_{k+1}$ and a single tuple, so the compute time is $\bigO(1)$. Thus, the update time is $\bigO(N^{1/2})$ for this degree configuration.

    Our approach recovers this result, and $Q$ admits update time $\bigO(N^{1/2})$.
\end{proof}

\begin{figure}
    \centering
    \begin{tikzpicture}[
        scale=0.75,
        every node/.style={
        circle,
        draw=none,
        minimum size=2.5em,
        inner sep=0pt,
        font=\scriptsize
        },
        every edge/.style={>=Stealth, line width=0.7, draw=black},
        ]
    
        \node at (0,0) {$R_1(\{X_1,...,X_k\} \setminus \{X_1\})$};
        \node at (4.6,0) {$R_2(\{X_1,...,X_k\} \setminus \{X_2\})$};
        
        \node at (0,1) {$V_3(X_1,...,X_k)$}; 
        \node at (4,1) {$R_3(\{X_1,...,X_k\} \setminus \{X_3\})$};
           
        \node at (0,2) {$V_4(X_1,...,X_k)$};

        \node at (0, 2.6) {$\vdots$};
        
        \node at (0,3) {$V_{k-1}(X_1,...,X_k)$};
        \node at (4.8,3) {$R_{k-1}(\{X_1,...,X_k\} \setminus \{X_{k-1}\})$};

        \node at (0,4) {$V_k(X_1,...,X_k)$};
        
        \node at (0,5) {$V_{k+1}(\{X_1,...,X_k\} \setminus \{X_{k}\})$};
        \node at (5,5) {$R_k(\{X_1,...,X_k\} \setminus \{X_k\})$};

        \node at (0,6) {$V_{k+2}(\{X_1,...,X_k\} \setminus \{X_{k}\})$};

        \draw (0,0.25) -- (0,0.75);
        \draw (2.5,0.25) -- (0.15,0.75);

        \draw (0,1.25) -- (0,1.75);
        \draw (2.5,1.25) -- (0.15,1.75);

        \draw (0,3.25) -- (0,3.75);
        \draw (2.5,3.25) -- (0.15,3.75);

        \draw (0,4.25) -- (0,4.75);

        \draw (0,5.25) -- (0,5.75);
        \draw (3,5.25) -- (0.15,5.75);
    \end{tikzpicture}
    \vspace*{-3.5em}
    % \Description{The view tree constructed in the proof of Proposition~\ref{prop:lw-update-time}.}
    \caption{The view tree constructed in the proof of Proposition~\ref{prop:lw-update-time}. LW-$k$ achieves optimal update time using only left-deep view trees of this structure, where we permute the input relations, and after each join, project on the variables which appear in subsequent joins.}
    \label{fig:lw-k-tree}
\end{figure}

It's clear that the view trees used to maintain LW-4 (shown in Fig.~\ref{fig:lw-4-trees}) follow the same view tree construction given in the proof of Proposition~\ref{prop:lw-update-time} (shown in Fig.~\ref{fig:lw-k-tree}). That is, we use left-deep view trees in which we permute the input relations, and after each join, project out the variables which do not appear in subsequent joins.

\subsection{Hierarchical Queries}

Hierarchical queries are a sub-class of acyclic queries. A query is called \emph{hierarchical} if for any two variables $X$ and $Y$, it holds that $\at(X) \subseteq \at(Y)$, $\at(Y) \subseteq \at(X)$, or $\at(X) \cap \at(Y) = \emptyset$~\cite{ProbabilisticDatabases}. We show that our approach obtains the optimal amortized update time for the class of hierarchical join queries (so all variables are free).

\begin{proposition}\label{prop:hierarchical-update-time}
    If $Q$ is a hierarchical query, then $Q$ admits $\bigO(1)$ (amortized) update time and (non-amortized) $\bigO(1)$ enumeration delay using our approach.
\end{proposition}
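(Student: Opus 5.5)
We plan to show that $\mw(Q)=0$ for every hierarchical join query $Q$ and then invoke Theorem~\ref{th:main}, which gives amortized update time $\bigO(N^{\mw(Q)})=\bigO(1)$ and constant enumeration delay. The simplest route is Proposition~\ref{prop:comparison_maintenance_dynamic}: $\mw(Q)\le\dw(Q)$. So it suffices to exhibit one view tree $T\in\calT(Q)$ such that for every view $V(\vect X)\in T$ and every leaf atom $R(\vect Y)$ below it, $\rho^*_{Q_{\leaves(V(\vect X),T)}}(\vect X\setminus\vect Y)=0$. Since $\rho^*$ of a nonempty set is at least $1$, this means $\vect X\subseteq\vect Y$ for every leaf $R(\vect Y)$ below $V(\vect X)$. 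Since $\mw(Q)\ge 0$ trivially (all exponents are nonnegative), we get $\mw(Q)=0$ exactly.

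The view tree follows the variable order (forest) of the hierarchical query. First assume $Q$ is connected; disconnected components are handled by a root join view over the empty schema, or by a Cartesian product of component roots with empty-schema projections, both of which satisfy the condition. Because the sets $\at(X)$ form a laminar family, we build a rooted forest on $\vars(Q)$ in which $X$ is an ancestor of $Y$ iff $\at(Y)\subsetneq\at(X)$, with equal-atom-set variables grouped into one node. Each atom's variables then form a root-to-node path. We build $T$ bottom-up. For a node $u$ of this forest, let $\mathrm{anc}(u)$ denote the variables on the path from the root to $u$ inclusive. The view for $u$ is a join view over the projections $\sum_{\text{below}}$ of its child subtrees, together with the atoms whose lowest variable lies at $u$, and its schema is $\mathrm{anc}(u)$. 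Above the join view we place a projection view onto $\mathrm{anc}(\mathrm{parent}(u))$. Every atom below $u$ contains $\mathrm{anc}(u)$ by the hierarchical property, so every view $V(\vect X)$ has $\vect X\subseteq\vect Y$ for all leaves $R(\vect Y)$ beneath it. We must also check Def.~\ref{def:viewtree}: the variables marginalized at a projection are those at $u$, and every atom containing them lies in $u$'s subtree by laminarity.

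The main obstacle is getting this construction exactly right, in particular the grouping of variables with identical atom sets and the empty-schema root for disconnected queries. We also need to verify the definitional detail that each atom containing a marginalized variable lies in the subtree. Once these are settled, the dynamic-width LP has an empty set of variables to cover, so every term is $0$. Then Proposition~\ref{prop:comparison_maintenance_dynamic} and Theorem~\ref{th:main} finish the proof. The non-amortized delay comes from Sec.~\ref{sec:enumeration}. The amortization of rebalancing is inherited from Theorem~\ref{th:main}; it is trivial here, since the chosen tree does not depend on $\epsilon$.
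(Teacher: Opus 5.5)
Your proof is correct. It uses essentially the same view tree as the paper. The paper peels off one variable at a time, with a join view $V_i$ collecting everything that contains $X_i$ followed by a projection $V_i'$ that removes $X_i$; your grouping of variables with identical atom sets is an immaterial variation.

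Where you differ is in how the constant bound is certified:
\begin{itemize}
\item \emph{The paper's route} is operational. Sibling views in this tree have the same schema, so the delta of a join view comes from constant-time lookups of the update tuple in the siblings. The delta of a projection view is a projection of that single tuple. The paper then asserts that the approach will pick this tree.
\item \emph{Your route} notes that every view's schema is contained in every atom below it. So each term of Def.~\ref{def:dynamic_width} is $\rho^*(\emptyset)=0$, and you obtain $\mw(Q)=0$ from Prop.~\ref{prop:comparison_maintenance_dynamic}. Theorem~\ref{th:main} then supplies the amortized update time and the constant delay.
\end{itemize}

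What each buys: your route actually proves the claim $\mw(Q)=\dw(Q)=0$, which Sec.~\ref{sec:comparison} only states. It is also what really justifies the paper's final sentence, since the approach selects view trees by the LP-based cost in Eq.~\eqref{eq:mw}, not by an operational cost. The paper's route is more elementary and self-contained. It does not pass through Lemma~\ref{lemma:compute-time-delta-view} and the worst-case optimal join algorithm behind it, and it makes plain that every delta is a single tuple. You could also bypass $\dw$ altogether: in Eq.~\eqref{eq:mw}, take $\C$ to be the update constraints alone and $\vect X'=\vect X$. Since $\vect X\subseteq\vect Y$, this already yields objective value $0$.

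Two small points.
\begin{itemize}
\item In the disconnected case, make sure each component's root is projected to the empty schema before the components are joined. A root over the union of the component schemas would violate $\vect X\subseteq\vect Y$; for $R(A)\cdot S(B)$, its delta under $\delta R$ contains every $B$-value.
\item Rebalancing is not trivial just because the tree is the same for every configuration. The approach still keeps one copy per degree configuration over the corresponding fragment, so minor and major rebalancing still occur. They amortize to $\bigO(1)$ by the argument of Sec.~\ref{sec:rebalancing} with $\mw(Q)=0$, which is exactly what inheriting Theorem~\ref{th:main} gives you.
\end{itemize}
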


\begin{proof}[Proof of Proposition~\ref{prop:hierarchical-update-time}]
    For any hierarchical query, there is a view tree that contains for each atom $R(\vect X)$ with $\vect X = \{X_1, \ldots, X_n \}$ a root-to-leaf path of the form
    \[V_1'(\vect X_1') \leftarrow V_1(\vect X_1) \leftarrow \ldots \leftarrow  V_n'(\vect X_n') \leftarrow V_n(\vect X_n) \leftarrow R(\vect X)\]
    such that for each $i \in [n]$, it holds: (1) $V_i(\vect X_i)$ is a join view such that all views and atoms containing $X_i$ are in the subtree rooted at $V_i(\vect X_i)$; (2) $V_i'(\vect X_i')$ is a projection view that projects away $X_i$ from $V_i(\vect X_i)$. Given the structural properties of hierarchical queries, this implies that any two sibling views in the view tree must be over the same set of variables. For any single-tuple update to relation $R$, each of the views $V_i$ and $V_i'$ on the path from the atom $R(\vect X)$ to the root of the view tree  can be updated as follows: if $V_i$ is a join view, we do constant-time look-ups in the child views of $V_i$ (corresponding to an intersection); if  $V_i'$ is a projection view, we project the update tuple onto the variables of $V_i'$, which takes only constant time. The constant-delay enumeration is a general property of the view trees considered in this paper.

    Our approach will explore the space of view trees and for each degree configuration will pick the above view tree as it has the smallest update time.
\end{proof}

\subsection{Path Queries}\label{sec:path_queries}
The \emph{$k$-path} query is defined as
\[Q(X_1, ..., X_{k+1}) = \prod_{i \in [k]} R_i(X_i, X_{i+1})\]
and so there are $k-1$ join variables. In this section, we illustrate our approach on the 3- and 4-path queries. The update time our approach achieves cannot be improved by a polynomial factor for either query, conditional on the OMv conjecture~\cite{OMV:STOC:2015,QHierarchical} and on the conjectured optimality of the submodular width for static query evaluation~\cite{InsertsVSDeletes}. 

\subsubsection{3-Path Query}
Consider the $3$-path query
\[Q_{\threepath}(X_1, X_2, X_3, X_4) = R_1(X_1, X_2) \cdot R_2(X_2, X_3) \cdot R_3(X_3, X_4)\]
Our approach uses two different view trees, shown in Fig.~\ref{fig:3-path-view-trees}. Table~\ref{tab:comparisons-3path} shows each degree configuration, the view tree used for maintenance, and the corresponding update time. The maintenance width $\mw(Q_{\threepath})$ is then the minimum over all expressions in the table for the base $N$ logarithm of the update time. This width can be computed as the minimization of optimal solutions of linear programs.
    \begin{align*}
    \mw(Q_{\threepath}) &=  \min_\epsilon \max(\epsilon, 1-\epsilon, \min(\epsilon, 1-\epsilon)) \\
    &=  \min_\epsilon \max(\epsilon, 1-\epsilon) \\
\end{align*}
We can observe that the optimal solution is $1/2$ and obtained for $\epsilon=1/2$. By Theorem~\ref{th:main}, the update time of $Q_{\threepath}$ is $\bigO(N^{1/2})$.

\begin{figure}
    \centering
    \setlength{\extrarowheight}{4pt}
    \setlength{\tabcolsep}{8pt} % default is 6pt
    \begin{tabular}{cc}
        % Figure 1
        \begin{tikzpicture}[
            scale=0.75,
            every node/.style={
            circle,
            draw=none,
            minimum size=2.5em,
            inner sep=0pt,
            font=\scriptsize
            },
            every edge/.style={>=Stealth, line width=0.7, draw=black},
            ]
        
            \node at (0,0) {$R_1(X_1,X_2)$};

            \node at (0,1) {$V_1(X_2)$};
            \node at (1.5,1) {$R_2(X_2,X_3)$};
               
            \node at (0.75,2) {$V_3(X_2,X_3)$};
            \node at (2.75,2) {$R_3(X_3,X_4)$};

            \node at (0.75,3) {$V_4(X_3)$};
            \node at (2.75,3) {$V_5(X_3)$};
    
            \node at (0.75,4) {$V_6(X_3)$};

            \draw (0,0.25) -- (0,0.75);
    
            \draw (0,1.25) -- (0.6,1.75);
            \draw (1.5,1.25) -- (0.9,1.75);

            \draw (0.75,2.25) -- (0.75,2.75);
            \draw (2.75,2.25) -- (2.75,2.75);
    
            \draw (0.75,3.25) -- (0.75,3.75);
            \draw (2.75,3.25) -- (0.9,3.75);
    
            \node at (1.5,-0.7) {View Tree 1};
        \end{tikzpicture}
        &
        % Figure 2
        \begin{tikzpicture}[
            scale=0.75,
            every node/.style={
            circle,
            draw=none,
            minimum size=2.5em,
            inner sep=0pt,
            font=\scriptsize
            },
            every edge/.style={>=Stealth, line width=0.7, draw=black},
            ]
        
            \node at (0,0) {$R_3(X_3,X_4)$};

            \node at (0,1) {$V_1(X_3)$};
            \node at (1.5,1) {$R_2(X_2,X_3)$};
               
            \node at (0.75,2) {$V_3(X_2,X_3)$};
            \node at (2.75,2) {$R_1(X_1,X_2)$};

            \node at (0.75,3) {$V_4(X_2)$};
            \node at (2.75,3) {$V_5(X_2)$};
    
            \node at (0.75,4) {$V_6(X_2)$};

            \draw (0,0.25) -- (0,0.75);
    
            \draw (0,1.25) -- (0.6,1.75);
            \draw (1.5,1.25) -- (0.9,1.75);

            \draw (0.75,2.25) -- (0.75,2.75);
            \draw (2.75,2.25) -- (2.75,2.75);
    
            \draw (0.75,3.25) -- (0.75,3.75);
            \draw (2.75,3.25) -- (0.9,3.75);
    
            \node at (1.5,-0.7) {View Tree 2};
        \end{tikzpicture}
    \end{tabular}
    \vspace*{-2em}
    \caption{The two view trees used to maintain the 3-path query.}
    \label{fig:3-path-view-trees}
    % \Description{Two view trees for the 3-path query.}
\end{figure}

\begin{table}[t]
\renewcommand{\arraystretch}{1.2}
\centering
\begin{tabular}{|c|c|r||c|c|r|}
\hline
\textbf{Configuration} & \textbf{View} & \textbf{\multirow{2}{*}{$\log_N$ UpdateTime}} & \textbf{Configuration} & \textbf{View} & \textbf{\multirow{2}{*}{$\log_N$ UpdateTime}} \\
$(X_2,X_3)$ & \textbf{Tree} & & $(X_2,X_3)$ & \textbf{Tree} & \\
\hline
$L, L$ & $1$ & $\epsilon$ & $H, L$ & $2$ & $\min(\epsilon, 1-\epsilon)$ \\
\hline
$L, H$ & $1$ & $\min(\epsilon,1-\epsilon)$ & $H, H$ & $1$ & $1-\epsilon$ \\
\hline
\end{tabular}
\caption{The (base $N$) logarithm of the update time for each degree configuration and a specific view tree from Fig.~\ref{fig:3-path-view-trees}. $(*)$ in the degree configuration indicates that the join variable can be either heavy or light.}
\label{tab:comparisons-3path}
\end{table}

\subsubsection{4-Path Query}
Consider the $4$-path query
\[Q_{\fourpath}(X_1, X_2, X_3, X_4, X_5) = R_1(X_1, X_2) \cdot R_2(X_2, X_3) \cdot R_3(X_3, X_4) \cdot R_4(X_4, X_5)\]
Our approach uses three different view trees, shown in Fig.~\ref{fig:4-path-view-trees}. Table \ref{tab:comparisons-4path} shows each degree configuration, the view tree used for maintenance, and the corresponding update time. The maintenance width $\mw(Q_{\fourpath})$ is then the minimum over all expressions in the table for the base $N$ logarithm of the update time. This width can be computed as the minimization of optimal solutions of linear programs.
    \begin{align*}
    \mw(Q_{\fourpath}) &=  \min_\epsilon \max(\epsilon, 1-\epsilon, \min(\epsilon, 1 - \epsilon), \max(\epsilon, \min(2\epsilon, 1 - \epsilon))) \\
    &=  \min_\epsilon \max(\epsilon, 1-\epsilon, \min(2\epsilon, 1 - \epsilon)) \\
    &\overset{*}{=} \min_\epsilon \min(\max(\epsilon, 1-\epsilon, 2\epsilon), \max(\epsilon, 1-\epsilon, 1 - \epsilon)) \\
    &\overset{+}{=} \min (\min_\epsilon \max(\epsilon, 1-\epsilon,2\epsilon) \text{ s.t. } 0\leq \epsilon \leq 1\\
    &\hspace*{3.25em} \min_\epsilon\max(\epsilon,1-\epsilon,1-\epsilon) \text{ s.t. } 0\leq \epsilon \leq 1)\\
\end{align*}
The equality (*) holds due to the distributivity of $\max$ over $\min$, while the equality (+) is due to the commutativity of the two $\min$ functions. We then have to take the minimum of the optimal solutions of two optimization problems, which can be encoded as linear programs. We can observe that the optimal solution is $1/2$ and obtained for $\epsilon=1/2$. By Theorem~\ref{th:main}, the update time of $Q_{\fourpath}$ is $\bigO(N^{1/2})$.
\begin{figure}
    \centering
    \setlength{\extrarowheight}{4pt}
    \setlength{\tabcolsep}{8pt} % default is 6pt
    \begin{tabular}{ccc}
        % Figure 1
        \begin{tikzpicture}[
            scale=0.75,
            every node/.style={
            circle,
            draw=none,
            minimum size=2.5em,
            inner sep=0pt,
            font=\scriptsize
            },
            every edge/.style={>=Stealth, line width=0.7, draw=black},
            ]
        
            \node at (-0.65,0) {$R_1(X_1,X_2)$}; 
            \node at (3,0) {$R_4(X_4,X_5)$};

            \node at (-0.65,1) {$V_1(X_2)$};
            \node at (1.2,1) {$R_2(X_2,X_3)$};
            \node at (4.8,1) {$R_3(X_3,X_4)$};
            \node at (3,1) {$V_2(X_4)$};
               
            \node at (0.75,2) {$V_3(X_2,X_3)$};
            \node at (3.75,2) {$V_4(X_3,X_4)$};

            \node at (0.75,3) {$V_5(X_3)$};
            \node at (3.75,3) {$V_6(X_3)$};
    
            \node at (2.25,4) {$V_7(X_3)$};

            \draw (-0.65,0.25) -- (-0.65,0.75);
            \draw (3,0.25) -- (3,0.75);
    
            \draw (-0.35,1.25) -- (0.6,1.75);
            \draw (1.5,1.25) -- (0.9,1.75); 
            \draw (3,1.25) -- (3.6,1.75);
            \draw (4.5,1.25) -- (3.9,1.75);

            \draw (0.75,2.25) -- (0.75,2.75);
            \draw (3.75,2.25) -- (3.75,2.75);
    
            \draw (0.75,3.25) -- (2.1,3.75);
            \draw (3.75,3.25) -- (2.4,3.75);
    
            \node at (2.25,-0.7) {View Tree 1};
        \end{tikzpicture}
        &
        % Figure 2
        \begin{tikzpicture}[
            scale=0.75,
            every node/.style={
            circle,
            draw=none,
            minimum size=2.5em,
            inner sep=0pt,
            font=\scriptsize
            },
            every edge/.style={>=Stealth, line width=0.7, draw=black},
            ]
        
            \node at (1.25,0) {$R_2(X_2,X_3)$};
            \node at (3.45,0) {$R_3(X_3,X_4)$};
            
            \node at (4.8,1) {$R_1(X_1,X_2)$}; 
            \node at (2.25,1) {$V_1(X_2,X_3,X_4)$}; 
               
            \node at (2.25,2) {$V_2(X_2,X_4)$};
            \node at (4.9,2) {$V_3(X_2)$};
    
            \node at (2.25,3) {$V_4(X_2,X_4)$};
            \node at (4.7,3) {$R_4(X_4,X_5)$};

            \node at (2.25,4) {$V_5(X_4)$};
            \node at (4.5,4) {$V_6(X_4)$};

            \node at (2.25,5) {$V_7(X_4)$};

            \draw (1.25,0.25) -- (2.1,0.75);
            \draw (3.25,0.25) -- (2.4,0.75);

            \draw (2.25,1.25) -- (2.25,1.75);
            \draw (4.5,1.25) -- (4.5,1.75);

            \draw (2.25,2.25) -- (2.25,2.75);
            \draw (4.5,2.25) -- (2.4,2.75);

            \draw (2.25,3.25) -- (2.25,3.75);
            \draw (4.5,3.25) -- (4.5,3.75);

            \draw (2.25,4.25) -- (2.25,4.75);
            \draw (4.5,4.25) -- (2.4,4.75);
    
            \node at (3.25,-0.7) {View Tree 2};
        \end{tikzpicture}
        &
        % Figure 3
        \begin{tikzpicture}[
            scale=0.75,
            every node/.style={
            circle,
            draw=none,
            minimum size=2.5em,
            inner sep=0pt,
            font=\scriptsize
            },
            every edge/.style={>=Stealth, line width=0.7, draw=black},
            ]
        
            \node at (1.25,0) {$R_2(X_2,X_3)$};
            \node at (3.45,0) {$R_3(X_3,X_4)$};
            
            \node at (4.8,1) {$R_4(X_4,X_5)$}; 
            \node at (2.25,1) {$V_1(X_2,X_3,X_4)$}; 
               
            \node at (2.25,2) {$V_2(X_2,X_4)$};
            \node at (4.8,2) {$V_3(X_4)$};
    
            \node at (2.25,3) {$V_4(X_2,X_4)$};
            \node at (4.5,3) {$R_1(X_1,X_2)$};

            \node at (2.25,4) {$V_5(X_2)$};
            \node at (4.5,4) {$V_6(X_2)$};

            \node at (2.25,5) {$V_7(X_2)$};

            \draw (1.25,0.25) -- (2.1,0.75);
            \draw (3.25,0.25) -- (2.4,0.75);

            \draw (2.25,1.25) -- (2.25,1.75);
            \draw (4.5,1.25) -- (4.5,1.75);

            \draw (2.25,2.25) -- (2.25,2.75);
            \draw (4.5,2.25) -- (2.4,2.75);

            \draw (2.25,3.25) -- (2.25,3.75);
            \draw (4.5,3.25) -- (4.5,3.75);

            \draw (2.25,4.25) -- (2.25,4.75);
            \draw (4.5,4.25) -- (2.4,4.75);
    
            \node at (3.25,-0.7) {View Tree 3};
        \end{tikzpicture}
    \end{tabular}
    \vspace*{-2em}
    \caption{The three view trees used to maintain the 4-path query.}
    \label{fig:4-path-view-trees}
    % \Description{Three view trees for the 4-path query.}
\end{figure}

\begin{table}[t]
\renewcommand{\arraystretch}{1.2}
\centering
\begin{tabular}{|c|c|r||c|c|r|}
\hline
\textbf{Configuration} & \textbf{View} & \textbf{\multirow{2}{*}{$\log_N$ UpdateTime}} & \textbf{Configuration} & \textbf{View} & \textbf{\multirow{2}{*}{$\log_N$ UpdateTime}} \\
$(X_2, X_3, X_4)$ & \textbf{Tree} & & $(X_2, X_3, X_4)$ & \textbf{Tree} & \\
\hline
$L, L, L$ & $1$ & $\epsilon$ & $H, L, L$ & $3$ & $\max(\epsilon, \min(2\epsilon, 1 - \epsilon))$  \\
\hline
$L, L, H$ & $2$ & $\max(\epsilon, \min(2\epsilon, 1 - \epsilon))$ & $H, L, H$ & $2$ & $1 - \epsilon$ \\
\hline
$L, H, L$ & $1$ & $\min(\epsilon, 1 - \epsilon)$ & $H, H, *$ & $1$ & $1- \epsilon$ \\
\hline
$L, H, H$ & $1$ & $1 - \epsilon$ & & & \\
\hline
\end{tabular}
\caption{The (base $N$) logarithm of the update time for each degree configuration and a specific view tree from Fig.~\ref{fig:4-path-view-trees}. $(*)$ in the degree configuration indicates that the join variable can be either heavy or light.}
\label{tab:comparisons-4path}
\end{table}

\subsection{Bow Tie Query}
Prior works in IVM often focus on join queries with relatively few join variables. Our approach not only recovers these results, but can be used to maintain \emph{any} join query with arbitrarily many join variables. In this section, we demonstrate our approach on the \emph{bow tie} query, which admits update time $\bigO(N)$ and has five join variables. 
\[Q_{\bowtie} = R_1(X_1,X_2) \cdot R_2(X_2,X_3) \cdot R_3(X_3,X_1) \cdot R_4(X_3,X_4) \cdot R_5(X_4,X_5) \cdot R_6(X_5,X_3)\]
The bow tie can be visualized as two triangle queries that share a join variable, shown in Fig.~\ref{fig:query-structures}.

\begin{figure}
    \centering
    \setlength{\extrarowheight}{4pt}
    \setlength{\tabcolsep}{10pt} % default is 6pt
    \begin{tabular}{cccc}
    \begin{tikzpicture}[scale=0.7, transform shape, line width=0.5pt]

        % Coordinates
        \node[draw,circle,inner sep=1] (a) at (0,0) {$X_1$};
        \node[draw,circle,inner sep=1] (b) at (0,1.5) {$X_2$};
        \node[draw,circle,inner sep=1] (c) at (1,0.75) {$X_3$};
        \node[draw,circle,inner sep=1] (d) at (2,0) {$X_4$};
        \node[draw,circle,inner sep=1] (e) at (2,1.5) {$X_5$};
        
        % Left triangle
        \draw (a) -- (b) -- (c) -- (a);
        
        % Right triangle
        \draw (c) -- (d) -- (e) -- (c);

        \node at (1,-1) {Bow Tie};
    \end{tikzpicture}
    &
    \begin{tikzpicture}[scale=0.7, transform shape, line width=0.5pt]

        % Coordinates
        \node[draw,circle,inner sep=1] (a) at (0,0) {$X_1$};
        \node[draw,circle,inner sep=1] (b) at (0,1.5) {$X_2$};
        \node[draw,circle,inner sep=1] (c) at (1.5,1.5) {$X_3$};
        \node[draw,circle,inner sep=1] (d) at (1.5,0) {$X_4$};
        
        % Left triangle
        \draw (a) -- (b) -- (c) -- (d) -- (a);
        \draw (a) -- (c);

        \node at (0.75,-1) {Diamond};

    \end{tikzpicture}
    &
    \begin{tikzpicture}[scale=0.7, transform shape, line width=0.5pt]
    
        \node[draw,circle,inner sep=1] (c) at (1,0.75) {$X_3$};
        \node[draw,circle,inner sep=1] (d) at (2,0) {$X_1$};
        \node[draw,circle,inner sep=1] (e) at (2,1.5) {$X_2$};
        \node[draw,circle,inner sep=1] (f) at (-0.5,0.75) {$X_4$};
        
        \draw (c) -- (d) -- (e) -- (c) -- (f);

        \node[draw=none] at (0.75,-1) {Paw};
    \end{tikzpicture}
    &
    \begin{tikzpicture}[scale=0.7, transform shape, line width=0.5pt]
    
        \node[draw,circle,inner sep=1] (c) at (0,0.75) {$X_3$};
        \node[draw,circle,inner sep=1] (d) at (1,0) {$X_1$};
        \node[draw,circle,inner sep=1] (e) at (1,1.5) {$X_2$};
        \node[draw,circle,inner sep=1] (f) at (2,0.0) {$X_4$};
        \node[draw,circle,inner sep=1] (g) at (2,1.5) {$X_5$};
        
        \draw (c) -- (d) -- (e) -- (c);
        \draw (d) -- (f);
        \draw (e) -- (g);

        \node[draw=none] at (1,-1) {Big Paw};
    \end{tikzpicture}
    \end{tabular}
    % \Description{The structures of some of the less common queries described in the examples.}
    \caption{The structures of some of the less common queries described in the examples.}
    \label{fig:query-structures}
\end{figure}

Consider a degree configuration $(d_1, d_2, d_3, d_4, d_5)$ of $Q_{\bowtie}$. To construct the corresponding view tree $T$, we construct view trees $T_1$ and $T_2$ for the two triangle queries defined by vertices $X_1, X_2, X_3$ and $X_3, X_4, X_5$, which have degree configurations $(d_1, d_2, d_3)$ and $(d_3, d_4, d_5)$, respectively. When the root views of $T_1$ and $T_2$ have only $X_3$ in their schema, then given any update, we can intersect these two views to form $T$ and incur no extra cost. Their intersection is the root view of $T$. Then the update time of $Q_{\bowtie}$ for this degree configuration becomes the maximum of the update time of $T_1$ (given an update to $R_1$, $R_2$, or $R_3$) and $T_2$ (given an update to $R_4$, $R_5$, or $R_6$). The construction of $T$ for one degree configuration is shown in Fig.~\ref{fig:bowtie-tree-construction}.

\begin{figure}
    \centering
    \begin{tikzpicture}[
        scale=0.75,
        every node/.style={
        circle,
        draw=none,
        minimum size=2.5em,
        inner sep=0pt,
        font=\scriptsize
        },
        every edge/.style={>=Stealth, line width=0.7, draw=black},
        ]
    
        \begin{scope}[local bounding box=Tone]

            \node at (0,0) {$R_1(X_1, X_2)$};
            \node at (2.3,0) {$R_2(X_2, X_3)$};
                
            \node at (1,1) {$V_1(X_1,X_2,X_3)$};
                   
            \node at (1,2) {$V_2(X_1,X_3)$};
            \node at (3.2,2) {$R_3(X_3,X_1)$};
        
            \node at (2,3) {$V_3(X_1,X_3)$};
        
            \node at (2,4) {$V_4(X_3)$};
        
            \draw (0,0.25) -- (0.85,0.75);
            \draw (2,0.25) -- (1.15,0.75);
        
            \draw (1,1.25) -- (1,1.75);
        
            \draw (1,2.25) -- (1.85,2.75);
            \draw (3,2.25) -- (2.15,2.75);
        
            \draw (2,3.25) -- (2,3.75);
        \end{scope}
        
        \draw[dotted, thick]
            ([xshift=-3mm,yshift=3.5mm]Tone.south west) rectangle
            ([xshift=2mm,yshift=-2mm]Tone.north east);
        
        \begin{scope}[xshift=5.5cm,yshift=1cm,local bounding box=Tone2]
            \node at (0,0) {$R_4(X_3, X_4)$};
            \node at (2.3,0) {$R_6(X_5, X_3)$};
            
            \node at (1,1) {$V_5(X_3,X_5,X_6)$};
            \node at (3.5,1) {$R_5(X_4,X_5)$};
    
            \node at (2,2) {$V_6(X_3,X_4,X_5)$};

            \node at (2,3) {$V_7(X_3)$};
    
            \draw (0,0.25) -- (0.85,0.75);
            \draw (2,0.25) -- (1.15,0.75);
    
            \draw (1,1.25) -- (1.85,1.75);
            \draw (3,1.25) -- (2.15,1.75);

            \draw (2,2.25) -- (2,2.75);
        \end{scope}

        \draw[dotted, thick]
            ([xshift=-3mm,yshift=3.5mm]Tone2.south west) rectangle
            ([xshift=2mm,yshift=-2mm]Tone2.north east);

        \node at (4.75,5.5) {$V_{\text{root}}(X_3)$};
        \draw (2,4.25) -- (4.6,5.25);
        \draw (7.5,4.25) -- (4.9,5.25);

        \node[color=red] at (-0.5,3.75) {$T_1$};
        \node[color=red] at (5,3.75) {$T_2$};
    \end{tikzpicture}
    \vspace*{-1.5em}
    % \Description{The construction for a bow tie view tree.}
    \caption{The view tree $T$ for the bow tie query with degree configuration $(H, L, L, H, H)$. The boxes indicates subtrees $T_1$ and $T_2$, which correspond to view trees 1 and 6, respectively, in fig.~\ref{fig:bowtie-view-trees}. $T_1$ and $T_2$ are view trees for the triangle query under degree configurations $(H, L, L)$ and $(L, H, H)$, respectively. This can be verified in Table~\ref{tab:comparisons-bowtie}.}
    \label{fig:bowtie-tree-construction}
\end{figure}

The triangle query admits $\bigO(N)$ update time when the root view must contain only $X_3$ in its schema. All eight degree configurations of this triangle query can be maintained using three view trees. In Fig.~\ref{fig:bowtie-view-trees}, we show these three view trees for both triangles of $Q_{\bowtie}$. $T$ is formed by intersecting the root view of $T_1$ (view tree 1, 2, or 3), with the root view of $T_2$ (view tree 4, 5, or 6). The choice of $T_1$ and $T_2$ for a given degree configuration is shown in Table~\ref{tab:comparisons-bowtie}. The maintenance width $\mw(Q_{\bowtie})$ is then the minimum over all expressions in the table for the base $N$ logarithm of the update time. This width can be computed as the minimization of optimal solutions of linear programs.
    \begin{align*}
    \mw(Q_{\bowtie}) &=  \min_\epsilon \max(\epsilon, 1-\epsilon, \min(\epsilon, 1 - \epsilon), 1) \\
    &=  \min_\epsilon \max(\epsilon, 1-\epsilon, 1)
\end{align*}
We can observe that any choice of $\epsilon$ results in the optimal solution $1$, so we pick $\epsilon = 0$. By Theorem~\ref{th:main}, the update time of $Q_{\bowtie}$ is $\bigO(N)$.

\begin{figure}
    \centering
    \setlength{\extrarowheight}{4pt}
    \setlength{\tabcolsep}{10pt} % default is 6pt
    \begin{tabular}{c}
    %------------------ Top row: Figures 1–3 ------------------%
    \begin{tabular}{ccc}
        % Figure 1
        \begin{tikzpicture}[
            scale=0.75,
            every node/.style={
            circle,
            draw=none,
            minimum size=2.5em,
            inner sep=0pt,
            font=\scriptsize
            },
            every edge/.style={>=Stealth, line width=0.7, draw=black},
            ]
        
            \node at (-0.2,0) {$R_1(X_1, X_2)$};
            \node at (2,0) {$R_2(X_2, X_3)$};
            
            \node at (1,1) {$V_1(X_1,X_2,X_3)$};
               
            \node at (1,2) {$V_2(X_1,X_3)$};
            \node at (3.3,2) {$R_3(X_3,X_1)$};
    
            \node at (2,3) {$V_3(X_1,X_3)$};

            \node at (2,4) {$V_4(X_3)$};
    
            \draw (0,0.25) -- (0.85,0.75);
            \draw (2,0.25) -- (1.15,0.75);
    
            \draw (1,1.25) -- (1,1.75);
    
            \draw (1,2.25) -- (1.85,2.75);
            \draw (3,2.25) -- (2.15,2.75);

            \draw (2,3.25) -- (2,3.75);
    
            \node at (2,-0.7) {View Tree 1};
        \end{tikzpicture}
        &
        % Figure 2
        \begin{tikzpicture}[
            scale=0.75,
            every node/.style={
            circle,
            draw=none,
            minimum size=2.5em,
            inner sep=0pt,
            font=\scriptsize
            },
            every edge/.style={>=Stealth, line width=0.7, draw=black},
            ]
        
            \node at (0,0) {$R_1(X_1, X_2)$};
            \node at (2.3,0) {$R_3(X_3, X_1)$};
            
            \node at (1,1) {$V_1(X_1,X_2,X_3)$};
               
            \node at (1,2) {$V_2(X_2,X_3)$};
            \node at (3.2,2) {$R_2(X_2,X_3)$};
    
            \node at (2,3) {$V_3(X_2,X_3)$};

            \node at (2,4) {$V_4(X_3)$};
    
            \draw (0,0.25) -- (0.85,0.75);
            \draw (2,0.25) -- (1.15,0.75);
    
            \draw (1,1.25) -- (1,1.75);
    
            \draw (1,2.25) -- (1.85,2.75);
            \draw (3,2.25) -- (2.15,2.75);

            \draw (2,3.25) -- (2,3.75);
    
            \node at (2,-0.7) {View Tree 2};
        \end{tikzpicture}
        &
        % Figure 3
        \begin{tikzpicture}[
            scale=0.75,
            every node/.style={
            circle,
            draw=none,
            minimum size=2.5em,
            inner sep=0pt,
            font=\scriptsize
            },
            every edge/.style={>=Stealth, line width=0.7, draw=black},
            ]
        
            \node at (0,0) {$R_2(X_2, X_3)$};
            \node at (2.3,0) {$R_3(X_3, X_1)$};
            
            \node at (0.8,1) {$V_1(X_1,X_2,X_3)$};
            \node at (3.5,1) {$R_2(X_2,X_3)$};
    
            \node at (2,2) {$V_3(X_1,X_2,X_3)$};

            \node at (2,3) {$V_4(X_3)$};
    
            \draw (0,0.25) -- (0.85,0.75);
            \draw (2,0.25) -- (1.15,0.75);
    
            \draw (1,1.25) -- (1.85,1.75);
            \draw (3,1.25) -- (2.15,1.75);

            \draw (2,2.25) -- (2,2.75);
    
            \node at (2,-0.7) {View Tree 3};
        \end{tikzpicture}
    \end{tabular}
    \\[-0.5em]
    %------------------ Bottom row: Figures 4–6 ------------------%
    \begin{tabular}{ccc}
        % Figure 4
        \begin{tikzpicture}[
            scale=0.75,
            every node/.style={
            circle,
            draw=none,
            minimum size=2.5em,
            inner sep=0pt,
            font=\scriptsize
            },
            every edge/.style={>=Stealth, line width=0.7, draw=black},
            ]
        
            \node at (0,0) {$R_4(X_3, X_4)$};
            \node at (2.3,0) {$R_5(X_4, X_5)$};
            
            \node at (1,1) {$V_5(X_3,X_4,X_5)$};
               
            \node at (1,2) {$V_6(X_3,X_5)$};
            \node at (3.2,2) {$R_6(X_5,X_3)$};
    
            \node at (2,3) {$V_7(X_3,X_5)$};

            \node at (2,4) {$V_8(X_3)$};
    
            \draw (0,0.25) -- (0.85,0.75);
            \draw (2,0.25) -- (1.15,0.75);
    
            \draw (1,1.25) -- (1,1.75);
    
            \draw (1,2.25) -- (1.85,2.75);
            \draw (3,2.25) -- (2.15,2.75);

            \draw (2,3.25) -- (2,3.75);
    
            \node at (2,-0.7) {View Tree 4};
        \end{tikzpicture}
        &
        % Figure 2
        \begin{tikzpicture}[
            scale=0.75,
            every node/.style={
            circle,
            draw=none,
            minimum size=2.5em,
            inner sep=0pt,
            font=\scriptsize
            },
            every edge/.style={>=Stealth, line width=0.7, draw=black},
            ]
        
            \node at (0,0) {$R_5(X_4, X_5)$};
            \node at (2.3,0) {$R_6(X_5, X_3)$};
            
            \node at (1,1) {$V_5(X_4,X_5,X_6)$};
               
            \node at (1,2) {$V_6(X_3,X_4)$};
            \node at (3.2,2) {$R_4(X_3,X_4)$};
    
            \node at (2,3) {$V_7(X_3,X_4)$};

            \node at (2,4) {$V_8(X_3)$};
    
            \draw (0,0.25) -- (0.85,0.75);
            \draw (2,0.25) -- (1.15,0.75);
    
            \draw (1,1.25) -- (1,1.75);
    
            \draw (1,2.25) -- (1.85,2.75);
            \draw (3,2.25) -- (2.15,2.75);

            \draw (2,3.25) -- (2,3.75);
    
            \node at (2,-0.7) {View Tree 5};
        \end{tikzpicture}
        &
        % Figure 6
        \begin{tikzpicture}[
            scale=0.75,
            every node/.style={
            circle,
            draw=none,
            minimum size=2.5em,
            inner sep=0pt,
            font=\scriptsize
            },
            every edge/.style={>=Stealth, line width=0.7, draw=black},
            ]
        
            \node at (0,0) {$R_4(X_3, X_4)$};
            \node at (2.3,0) {$R_6(X_5, X_3)$};
            
            \node at (0.8,1) {$V_5(X_3,X_5,X_6)$};
            \node at (3.4,1) {$R_5(X_4,X_5)$};
    
            \node at (2,2) {$V_6(X_3,X_4,X_5)$};

            \node at (2,3) {$V_7(X_3)$};
    
            \draw (0,0.25) -- (0.85,0.75);
            \draw (2,0.25) -- (1.15,0.75);
    
            \draw (1,1.25) -- (1.85,1.75);
            \draw (3,1.25) -- (2.15,1.75);

            \draw (2,2.25) -- (2,2.75);
    
            \node at (2,-0.7) {View Tree 6};
        \end{tikzpicture}
    \end{tabular}
    \end{tabular}
    \vspace*{-2em}
    \caption{Each view tree used to maintain the bow tie query can be constructed by creating a view $V_{\text{root}}(X_3)$ and attaching from this root two view trees: one from the set $\{1,2,3\}$ and one from the set $\{4, 5, 6\}$.}
    \label{fig:bowtie-view-trees}
    % \Description{Six view trees used for the construction of the each bow tie query view tree.}
\end{figure}

\begin{table}[t]
\renewcommand{\arraystretch}{1.2}
\centering
\begin{tabular}{|c|c|r||c|c|r|}
\hline
\textbf{Configuration} & \textbf{View} & \textbf{\multirow{2}{*}{$\log_N$ UpdateTime}} & \textbf{Configuration} & \textbf{View} & \textbf{\multirow{2}{*}{$\log_N$ UpdateTime}} \\
$(X_1, X_2, X_3)$ & \textbf{Tree $T_1$} & & $(X_3, X_4, X_5)$ & \textbf{Tree $T_2$} & \\
\hline
$L, L, *$ & $1$ & $\epsilon$ & $L, L, *$ & $4$ & $\epsilon$ \\
\hline
$L, H, L$ & $2$ & $\epsilon$ & $L, H, L$ & $5$ & $\epsilon$ \\
\hline
$L, H, H$ & $2$ & $f(\epsilon)$ & $L, H, H$ & $6$ & $1$ \\
\hline
$H, L, L$ & $1$ & $\epsilon$ & $H, L, L$ & $4$ & $\epsilon$  \\
\hline
$H, L, H$ & $1$ & $f(\epsilon)$ & $H, L, H$ & $4$ & $f(\epsilon)$  \\
\hline
$H, H, L$ & $3$ & $1$ & $H, H, L$ & $5$ & $f(\epsilon)$ \\
\hline
$H, H, H$ & $1$ & $1 - \epsilon$ & $H, H, H$ & $4$ & $1 - \epsilon$  \\
\hline
\end{tabular}
\caption{The (base $N$) logarithm of the update time for each degree configuration and a specific view tree from Fig.~\ref{fig:bowtie-view-trees}. Note that $f(\epsilon) = \min(\epsilon, 1 - \epsilon)$; $(*)$ in the degree configuration indicates that the join variable can be either heavy or light. The bow tie query update time under each degree configuration can be found by looking up the update time of each of its triangles and taking the maximum.}
\label{tab:comparisons-bowtie}
\end{table}

\subsection{Diamond Query}
Now we consider the diamond query
\[Q_{\diamondchord}= R_1(X_1, X_2) \cdot R_2(X_2,X_3) \cdot R_3(X_3,X_4) \cdot R_4(X_4,X_1) \cdot R_5(X_1,X_3)\]
which can be visualized as the 4-cycle with a single chord, shown in Fig.~\ref{fig:query-structures}. $Q_{\diamondchord}$ uses seven view trees shown in Fig.~\ref{fig:diamond-view-trees}. Table~\ref{tab:comparisons-diamond} shows for each degree configuration, the view tree used for maintenance, and the corresponding update time. The maintenance width $\mw(Q_{\diamondchord})$ is then the minimum over all expressions in the table for the base $N$ logarithm of the update time. This width can be computed as the minimization of optimal solutions of linear programs.
    \begin{align*}
    \mw(Q_{\diamondchord}) &=  \min_\epsilon \max(\epsilon, 2 \epsilon, 1 - \epsilon, \min(\epsilon, 1 - \epsilon), \max(\min(2\epsilon, 2-2\epsilon), \min(2\epsilon, 1)), \\
    &\hspace*{4.75em} \max(1-\epsilon,\min(2\epsilon,2-2\epsilon))) \\
    &=  \min_\epsilon \max(2\epsilon, 1-\epsilon, \min(2\epsilon, 2 - 2 \epsilon), \min(2\epsilon, 1)) \\
    &\overset{*}{=} \min_\epsilon \min(
    \max(2\epsilon, 1-\epsilon,2\epsilon,2\epsilon), 
    \max(2\epsilon, 1-\epsilon,2\epsilon,1),\\
    &\hspace*{4.75em}
    \max(2\epsilon, 1-\epsilon,2-2\epsilon,2\epsilon), 
    \max(2\epsilon, 1-\epsilon,2-2\epsilon,1))\\
    &\overset{+}{=} \min (\min_\epsilon \max(2\epsilon, 1-\epsilon,2\epsilon,2\epsilon) \text{ s.t. } 0\leq \epsilon \leq 1\\
    &\hspace*{3.25em} \min_\epsilon\max(2\epsilon, 1-\epsilon,2-2\epsilon,1) \text{ s.t. } 0\leq \epsilon \leq 1\\
    &\hspace*{3.25em} \min_\epsilon\max(2\epsilon, 1-\epsilon,2-2\epsilon,2\epsilon) \text{ s.t. } 0\leq \epsilon \leq 1\\
    &\hspace*{3.25em} \min_\epsilon\max(2\epsilon, 1-\epsilon,2-2\epsilon,1) \text{ s.t. } 0\leq \epsilon \leq 1 )
\end{align*}
The equality (*) holds due to the distributivity of $\max$ over $\min$, while the equality (+) is due to the commutativity of the two $\min$ functions. We then have to take the minimum of the optimal solutions of four optimization problems, which can be encoded as linear programs. The optimal solution is $2/3$ and obtained for $\epsilon=1/3$. By Theorem~\ref{th:main}, the update time of $Q_{\diamondchord}$ is $\bigO(N^{2/3})$. The update time cannot be improved by a polynomial factor, conditional on the conjectured optimality of the submodular width for static query evaluation~\cite{InsertsVSDeletes}.

\input{diamond-trees}

\textbf{\begin{table}[t]
\renewcommand{\arraystretch}{1.2}
\centering
\begin{tabular}{|c|c|r||c|c|r|}
\hline
\textbf{Configuration} & \textbf{View} & \textbf{\multirow{2}{*}{$\log_N$ UpdateTime}} & \textbf{Configuration} & \textbf{View} & \textbf{\multirow{2}{*}{$\log_N$ UpdateTime}} \\
$(X_1, X_2, X_3,X_4)$ & \textbf{Tree} & & $(X_1,X_2,X_3,X_4)$ & \textbf{Tree} & \\
\hline
$L, L, *, L$ & $1$ & $\epsilon$ & $H, L, L, L$ & $1$ & $\epsilon$ \\
\hline
$L, L, L, H$ & $4$ & $2\epsilon$ & $H, L, L, H$ & $4$ & $f(\epsilon) $ \\
\hline
$L, L, H, H$ & $5$ & $f(\epsilon)$ & $H, L, H, L$ & $1$ & $\min(\epsilon, 1 - \epsilon)$ \\
\hline
$L, H, L, L$ & $6$ & $2\epsilon$ & $H, L, H, H$ & $1$ & $1 - \epsilon$ \\
\hline
$L, H, L, H$ & $3$ & $f(\epsilon)$ & $H, H, L, L$ & $7$ & $f(\epsilon)$ \\
\hline
$L, H, H, L$ & $6$ & $f(\epsilon)$ & $H, H, L, H$ & $2$ & $g(\epsilon)$ \\
\hline
$L, H, H, H$ & $3$ & $g(\epsilon)$ & $H, H, H, *$ & $1$ & $1 - \epsilon$ \\
\hline
\end{tabular}
\caption{The (base $N$) logarithm of the update time for each degree configuration and a specific view tree from Fig.~\ref{fig:diamond-view-trees}. Note that $f(\epsilon) = \max(\min(2\epsilon, 2 - 2\epsilon), \min(2\epsilon,1))$; $g(\epsilon) = \max(1-\epsilon, \min(2\epsilon, 2 - 2\epsilon))$; $(*)$ in the degree configuration indicates that the join variable can be either heavy or light.}
\label{tab:comparisons-diamond}
\end{table}}

\subsection{Paw Query}

Consider the paw query 
\[Q_{\paw}(X_1,X_2,X_3,X_4) = R_1(X_1,X_2) \cdot R_2(X_2,X_3) \cdot R_3(X_3,X_1) \cdot R_4(X_3,X_4)\]
which is the triangle query with an additional edge to a new vertex, shown in Fig.~\ref{fig:query-structures}. $Q_{\paw}$ can be maintained using the three view trees shown in Fig.~\ref{fig:paw-view-trees}. Table~\ref{tab:comparisons-paw} shows for each degree configuration, the view tree used for maintenance and the corresponding update time. The maintenance width $\mw(Q_{\paw})$ is then the minimum over all expressions in the table for the base $N$ logarithm of the update time. This width can be computed as the minimization of optimal solutions of linear programs.
\begin{align*}
    \mw(Q_{\paw}) &=  \min_\epsilon \max(\epsilon, 1 - \epsilon, \min(\epsilon, 1 - \epsilon), \min(2\epsilon,2-2\epsilon)), \\
    &=  \min_\epsilon \max(\epsilon, 1 - \epsilon, \min(2\epsilon, 2 - 2\epsilon)) \\
    &\overset{*}{=} \min_\epsilon \min(\max(\epsilon,1-\epsilon,2\epsilon), \max(\epsilon, 1-\epsilon, 2-2\epsilon))\\
    &\overset{+}{=} \min (\min_\epsilon \max(\epsilon, 1 - \epsilon, 2\epsilon) \text{ s.t. } 0\leq \epsilon \leq 1,\\
    &\hspace*{3.25em} \min_\epsilon\max(\epsilon, 1 - \epsilon, 2-2\epsilon) \text{ s.t. } 0\leq \epsilon \leq 1 )
\end{align*}
The equality (*) holds due to the distributivity of $\max$ over $\min$, while the equality (+) is due to the commutativity of the two $\min$ functions. We then have to take the minimum of the optimal solutions of two optimization problems, which can be encoded as linear programs. The optimal solution is $2/3$ and obtained for $\epsilon=1/3$. By Theorem~\ref{th:main}, the update time of $Q_{\paw}$ is $\bigO(N^{2/3})$. The update time cannot be improved by a polynomial factor, conditioned on the conjectured optimality of the submodular width for static query evaluation~\cite{InsertsVSDeletes}.

\begin{figure}
    \centering
    \setlength{\extrarowheight}{4pt}
    \setlength{\tabcolsep}{8pt} % default is 6pt
    \begin{tabular}{ccc}
        % Figure 1
        \begin{tikzpicture}[
            scale=0.75,
            every node/.style={
            circle,
            draw=none,
            minimum size=2.5em,
            inner sep=0pt,
            font=\scriptsize
            },
            every edge/.style={>=Stealth, line width=0.7, draw=black},
            ]
        
            \node at (-0.2,0) {$R_1(X_1,X_2)$}; 
            \node at (2,0) {$R_2(X_2,X_3)$};

            \node at (1,1) {$V_1(X_1,X_2,X_3)$};
               
            \node at (1,2) {$V_2(X_1,X_3)$};
            \node at (3.2,2) {$R_3(X_3,X_1)$};

            \node at (1,3) {$V_3(X_1,X_3)$};
            \node at (3.2,3) {$R_4(X_3,X_4)$};
    
            \node at (1,4) {$V_4(X_3)$};
            \node at (3,4) {$V_5(X_3)$};

            \node at (1,5) {$V_6(X_3)$};

            \draw (0,0.25) -- (0.85,0.75);
            \draw (2,0.25) -- (1.15,0.75);
    
            \draw (1,1.25) -- (1,1.75);

            \draw (1,2.25) -- (1,2.75);
            \draw (3,2.25) -- (1.15,2.75);
    
            \draw (1,3.25) -- (1,3.75);
            \draw (3,3.25) -- (3,3.75);

            \draw (1,4.25) -- (1,4.75);
            \draw (3,4.25) -- (1.15,4.75);
    
            \node at (1.5,-0.7) {View Tree 1};
        \end{tikzpicture}
        &
        % Figure 2
        \begin{tikzpicture}[
            scale=0.75,
            every node/.style={
            circle,
            draw=none,
            minimum size=2.5em,
            inner sep=0pt,
            font=\scriptsize
            },
            every edge/.style={>=Stealth, line width=0.7, draw=black},
            ]
        
            \node at (-0.2,0) {$R_1(X_1,X_2)$}; 
            \node at (2.1,0) {$R_3(X_3,X_1)$};

            \node at (1,1) {$V_1(X_1,X_2,X_3)$};
               
            \node at (1,2) {$V_2(X_2,X_3)$};
            \node at (3.1,2) {$R_2(X_2,X_3)$};

            \node at (1,3) {$V_3(X_2,X_3)$};
            \node at (3.2,3) {$R_4(X_3,X_4)$};
    
            \node at (1,4) {$V_4(X_3)$};
            \node at (3,4) {$V_5(X_3)$};

            \node at (1,5) {$V_6(X_3)$};

            \draw (0,0.25) -- (0.85,0.75);
            \draw (2,0.25) -- (1.15,0.75);
    
            \draw (1,1.25) -- (1,1.75);

            \draw (1,2.25) -- (1,2.75);
            \draw (3,2.25) -- (1.15,2.75);
    
            \draw (1,3.25) -- (1,3.75);
            \draw (3,3.25) -- (3,3.75);

            \draw (1,4.25) -- (1,4.75);
            \draw (3,4.25) -- (1.15,4.75);
    
            \node at (1.5,-0.7) {View Tree 2};
        \end{tikzpicture}
        &
        % Figure 3
        \begin{tikzpicture}[
            scale=0.75,
            every node/.style={
            circle,
            draw=none,
            minimum size=2.5em,
            inner sep=0pt,
            font=\scriptsize
            },
            every edge/.style={>=Stealth, line width=0.7, draw=black},
            ]
        
            \node at (0,0) {$R_2(X_2,X_3)$}; 
            \node at (2,0) {$R_3(X_3,X_1)$};
            \node at (4.2,0) {$R_4(X_3,X_4)$};

            \node at (1,1) {$V_1(X_1,X_2,X_3)$};
            \node at (3,1) {$V_2(X_3)$};
            
            \node at (1,2) {$V_3(X_1,X_2,X_3)$};
            
            \node at (1,3) {$V_4(X_1,X_2)$};
            \node at (3.1,3) {$R_1(X_1,X_2)$};
    
            \node at (1,4) {$V_5(X_1,X_3)$};

            \draw (0,0.25) -- (0.85,0.75);
            \draw (2,0.25) -- (1.15,0.75);
            \draw (4,0.25) -- (3.4,0.75);
    
            \draw (1,1.25) -- (1,1.75);
            \draw (3,1.25) -- (1.15,1.75);

            \draw (1,2.25) -- (1,2.75);
    
            \draw (1,3.25) -- (1,3.75);
            \draw (3,3.25) -- (1.15,3.75);
    
            \node at (2,-0.7) {View Tree 3};
        \end{tikzpicture}
    \end{tabular}
    \vspace*{-2em}
    \caption{The three view trees used to maintain the paw query.}
    \label{fig:paw-view-trees}
    % \Description{Three view trees for the paw query.}
\end{figure}

\begin{table}[t]
\renewcommand{\arraystretch}{1.2}
\centering
\begin{tabular}{|c|c|r||c|c|r|}
\hline
\textbf{Configuration} & \textbf{View} & \textbf{\multirow{2}{*}{$\log_N$ UpdateTime}} & \textbf{Configuration} & \textbf{View} & \textbf{\multirow{2}{*}{$\log_N$ UpdateTime}} \\
$(X_1, X_2, X_3)$ & \textbf{Tree} & & $(X_1, X_2, X_3)$ & \textbf{Tree} & \\
\hline
$L, L, *$ & $1$ & $\epsilon$ & $H, L, L$ & $1$ & $\epsilon$  \\
\hline
$L, H, L$ & $2$ & $\epsilon$ & $H, L, H$ & $1$ & $\min(\epsilon, 1 - \epsilon)$  \\
\hline
$L, H, H$ & $2$ & $\min(\epsilon, 1 - \epsilon)$ & $H, H, L$ & $3$ & $\min(2\epsilon, 2 - 2\epsilon)$ \\
\hline
& & & $H, H, H$ & $1$ & $1 - \epsilon$ \\
\hline
\end{tabular}
\caption{The (base $N$) logarithm of the update time for each degree configuration and a specific view tree from Fig.~\ref{fig:paw-view-trees}. $(*)$ in the degree configuration indicates that the join variable can be either heavy or light.}
\label{tab:comparisons-paw}
\end{table}

\subsection{Big Paw Query}

Consider the big paw query 
\[Q_{\bigpaw}(X_1,X_2,X_3,X_4,X_5) = R_1(X_1,X_2) \cdot R_2(X_2,X_3) \cdot R_3(X_3,X_1) \cdot R_4(X_1,X_4) \cdot R_5(X_2,X_5)\]
which is the triangle query with two additional edges, each to a new vertex, shown in Fig.~\ref{fig:query-structures}. $Q_{\bigpaw}$ can be maintained using the three view trees shown in Fig.~\ref{fig:big-paw-view-trees}. Table~\ref{tab:comparisons-big-paw} shows for each degree configuration, the view tree used for maintenance and the corresponding update time. The maintenance width $\mw(Q_{\bigpaw})$ is then the minimum over all expressions in the table for the base $N$ logarithm of the update time. This width can be computed as the minimization of optimal solutions of linear programs.
\begin{align*}
    \mw(Q_{\bigpaw}) &=  \min_\epsilon \max(2\epsilon, 1 - \epsilon, \min(2\epsilon, 1), \min(2\epsilon,2-2\epsilon)), \\
    &\overset{*}{=} \min_\epsilon \min(\max(2\epsilon,1-\epsilon,2\epsilon,2\epsilon), \max(2\epsilon, 1-\epsilon, 2\epsilon, 2-2\epsilon)\\
    &\hspace*{4.75em}
    \max(2\epsilon, 1-\epsilon,1,2\epsilon), 
    \max(2\epsilon, 1-\epsilon,1,2-2\epsilon))\\
    &\overset{+}{=} \min (\min_\epsilon \max(2\epsilon, 1 - \epsilon) \text{ s.t. } 0\leq \epsilon \leq 1,\\
    &\hspace*{3.25em} \min_\epsilon\max(2\epsilon,1-\epsilon,2-2\epsilon) \text{ s.t. } 0\leq \epsilon \leq 1,\\
    &\hspace*{3.25em} \min_\epsilon\max(2\epsilon, 1 - \epsilon, 1,2\epsilon) \text{ s.t. } 0\leq \epsilon \leq 1,\\
    &\hspace*{3.25em} \min_\epsilon\max(2\epsilon, 1 - \epsilon, 1, 2-2\epsilon) \text{ s.t. } 0\leq \epsilon \leq 1)
\end{align*}
The equality (*) holds due to the distributivity of $\max$ over $\min$, while the equality (+) is due to the commutativity of the two $\min$ functions. We then have to take the minimum of the optimal solutions of four optimization problems, which can be encoded as linear programs. The optimal solution is $2/3$ and obtained for $\epsilon=1/3$. By Theorem~\ref{th:main}, the update time of $Q_{\bigpaw}$ is $\bigO(N^{2/3})$. The update time cannot be improved by a polynomial factor, conditioned on the conjectured optimality of the submodular width for static query evaluation~\cite{InsertsVSDeletes}.

\begin{figure}
    \centering
    \setlength{\extrarowheight}{4pt}
    \setlength{\tabcolsep}{10pt} % default is 6pt
    \begin{tabular}{ccc}
        % Figure 1
        \begin{tikzpicture}[
            scale=0.75,
            every node/.style={
            circle,
            draw=none,
            minimum size=2.5em,
            inner sep=0pt,
            font=\scriptsize
            },
            every edge/.style={>=Stealth, line width=0.7, draw=black},
            ]
         
            \node at (3,0) {$R_4(X_1,X_4)$};

            \node at (3,1) {$V_1(X_1)$};
            \node at (0,1) {$R_1(X_1,X_2)$};
               
            \node at (1,2) {$V_2(X_1,X_2)$};
            \node at (3,2) {$R_3(X_3,X_1)$};

            \node at (1,3) {$V_3(X_1,X_2,X_3)$};

            \node at (1,4) {$V_4(X_2,X_3)$};
            \node at (3,4) {$R_2(X_2,X_3)$};

            \node at (1,5) {$V_5(X_2,X_3)$};
            \node at (3,5) {$R_5(X_2,X_5)$};

            \node at (1,6) {$V_6(X_2)$};
            \node at (3,6) {$V_7(X_2)$};

            \node at (1,7) {$V_8(X_2)$};

            \draw (3,0.25) -- (3,0.75);

            \draw (0,1.25) -- (0.85,1.75);
            \draw (3,1.25) -- (1.15,1.75);

            \draw (1,2.25) -- (1,2.75);
            \draw (3,2.25) -- (1.15,2.75);
    
            \draw (1,3.25) -- (1,3.75);

            \draw (1,4.25) -- (1,4.75);
            \draw (3,4.25) -- (1.15,4.75);

            \draw (1,5.25) -- (1,5.75);
            \draw (3,5.25) -- (3,5.75);

            \draw (1,6.25) -- (1,6.75);
            \draw (3,6.25) -- (1.15,6.75);
    
            \node at (1.5,-0.7) {View Tree 1};
        \end{tikzpicture}
        &
        % Figure 2
        \begin{tikzpicture}[
            scale=0.75,
            every node/.style={
            circle,
            draw=none,
            minimum size=2.5em,
            inner sep=0pt,
            font=\scriptsize
            },
            every edge/.style={>=Stealth, line width=0.7, draw=black},
            ]
         
            \node at (3,0) {$R_5(X_2,X_5)$};

            \node at (3,1) {$V_1(X_2)$};
            \node at (0,1) {$R_2(X_2,X_3)$};
               
            \node at (1,2) {$V_2(X_2,X_3)$};
            \node at (3,2) {$R_1(X_1,X_2)$};

            \node at (1,3) {$V_3(X_1,X_2,X_3)$};

            \node at (1,4) {$V_4(X_1,X_3)$};
            \node at (3,4) {$R_3(X_3,X_1)$};

            \node at (1,5) {$V_5(X_1,X_3)$};
            \node at (3,5) {$R_4(X_1,X_5)$};

            \node at (1,6) {$V_6(X_1)$};
            \node at (3,6) {$V_7(X_1)$};

            \node at (1,7) {$V_8(X_1)$};

            \draw (3,0.25) -- (3,0.75);

            \draw (0,1.25) -- (0.85,1.75);
            \draw (3,1.25) -- (1.15,1.75);

            \draw (1,2.25) -- (1,2.75);
            \draw (3,2.25) -- (1.15,2.75);
    
            \draw (1,3.25) -- (1,3.75);

            \draw (1,4.25) -- (1,4.75);
            \draw (3,4.25) -- (1.15,4.75);

            \draw (1,5.25) -- (1,5.75);
            \draw (3,5.25) -- (3,5.75);

            \draw (1,6.25) -- (1,6.75);
            \draw (3,6.25) -- (1.15,6.75);
    
            \node at (1.5,-0.7) {View Tree 2};
        \end{tikzpicture}
        &
        % Figure 3
        \begin{tikzpicture}[
            scale=0.75,
            every node/.style={
            circle,
            draw=none,
            minimum size=2.5em,
            inner sep=0pt,
            font=\scriptsize
            },
            every edge/.style={>=Stealth, line width=0.7, draw=black},
            ]
        
            \node at (0,0) {$R_2(X_2,X_3)$}; 
            \node at (2,0) {$R_3(X_3,X_1)$};

            \node at (1,1) {$V_1(X_1,X_2,X_3)$};

            \node at (-1,2) {$R_4(X_1,X_4)$};
            \node at (1,2) {$V_4(X_1,X_2)$};
            \node at (3,2) {$R_1(X_1,X_2)$};

            \node at (-1,3) {$V_5(X_1)$};
            \node at (1,3) {$V_6(X_1,X_2)$};
    
            \node at (1,4) {$V_7(X_1,X_2)$};
            \node at (3,4) {$R_5(X_2,X_5)$};

            \node at (1,5) {$V_8(X_2)$};
            \node at (3,5) {$V_9(X_2)$};

            \node at (1,6) {$V_{10}(X_2)$};

            \draw (0,0.25) -- (0.85,0.75);
            \draw (2,0.25) -- (1.15,0.75);

            \draw (1,1.25) -- (1,1.75);

            \draw (-1,2.25) -- (-1,2.75);
            \draw (1,2.25) -- (1,2.75);
            \draw (3,2.25) -- (1.15,2.75);

            \draw (1,2.25) -- (1,2.75);

            \draw (-1,3.25) -- (0.85,3.75);
            \draw (1,3.25) -- (1,3.75);

            \draw (1,4.25) -- (1,4.75);
            \draw (3,4.25) -- (3,4.75);

            \draw (1,5.25) -- (1,5.75);
            \draw (3,5.25) -- (1.15,5.75);

            \node at (1,-0.7) {View Tree 3};
        \end{tikzpicture}
    \end{tabular}
    \vspace*{-2em}
    \caption{The three view trees used to maintain the big paw query.}
    \label{fig:big-paw-view-trees}
    % \Description{Three view trees for the big paw query.}
\end{figure}

\begin{table}[t]
\renewcommand{\arraystretch}{1.2}
\centering
\begin{tabular}{|c|c|r||c|c|r|}
\hline
\textbf{Configuration} & \textbf{View} & \textbf{\multirow{2}{*}{$\log_N$ UpdateTime}} & \textbf{Configuration} & \textbf{View} & \textbf{\multirow{2}{*}{$\log_N$ UpdateTime}} \\
$(X_1, X_2, X_3)$ & \textbf{Tree} & & $(X_1, X_2, X_3)$ & \textbf{Tree} & \\
\hline
$L, L, L$ & $1$ & $2\epsilon$ & $H, L, L$ & $2$ & $\min(2\epsilon,1)$  \\
\hline
$L, L, H$ & $1$ & $\min(2\epsilon,1)$ & $H, L, H$ & $2$ & $\min(2\epsilon, 2 - 2\epsilon)$  \\
\hline
$L, H, L$ & $1$ & $\min(2\epsilon,1)$ & $H, H, *$ & $3$ & $1-\epsilon$ \\
\hline
$L, H, H$ & $1$ & $\min(2\epsilon, 2-2\epsilon)$ & & &  \\
\hline
\end{tabular}
\caption{The (base $N$) logarithm of the update time for each degree configuration and a specific view tree from Fig.~\ref{fig:paw-view-trees}. $(*)$ in the degree configuration indicates that the join variable can be either heavy or light.}
\label{tab:comparisons-big-paw}
\end{table}

\section{Proof of Theorem~\ref{th:main}}
\label{sec:main-proof}

In this section we prove:
\ThMain*
    Given a join query \(Q\), we analyze the time complexity of each of the three stages of our algorithm: preprocessing, maintenance, and enumeration.

    \paragraph{Preprocessing.}
    We compute the maintenance width \(\mw(Q)\) and the optimal parameter \(\epsilon^*\) as detailed in Sec.~\ref{subsec:computing-mw}. Next, we partition the active domain into heavy and light values based on the thresholds defined by \(\epsilon^*\). We then compute the set of active view trees and assign the optimal tree \(T^{\vect d}\) to each degree configuration \(\vect d\).

    To compute the views of our active view trees, we start with an empty database and then insert one by one each tuple from the initial database of size $N$. Since the time to process a single update is \(\bigO(N^{\mw(Q)})\), the overall time to compute the views is \(\bigO(N^{1+\mw(Q)})\)). The number of views is only dependent on the query, so independent of the database size. 

    \paragraph{Maintenance.}
    Consider a single tuple update \(\delta R = \{\vect t\mapsto \pm 1\}\). The maintenance procedure follows two steps:
    \begin{enumerate}
        \item \textbf{Selecting the configuration and delta view tree:} We inspect the values in the tuple \(\vect t\). If a value $a$ in $\vect t$ is encountered for the first (i.e., it is not in the active domain), we initialize it as a light value. Otherwise, we use its degree in the data to decide whether it is heavy or light.        
        We then determine the degree configuration \(\vect d\) corresponding to the degrees of values in \(\vect t\) based on the current heavy-light threshold and select the corresponding view tree \(T^{\vect d}\).
        
        \item \textbf{View Updates:} We compute \(\delta T^{\vect d}_R\) for the update \(\delta R\). By Lemma~\ref{lemma:compute-time-delta-view} and the definition of the maintenance width, the cost of this operation is bounded by \(\bigO(N^{\mw(Q)})\) time in data complexity. We the use the computed \(\delta T^{\vect d}_R\) to update the view tree \(T^{\vect d}\). 
    \end{enumerate}
    This bound on update time holds strictly when the degree constraints for \(\vect d\) remain satisfied. However, a sequence of updates may alter value frequencies, violating the constraints. In such cases, a minor or major rebalancing step is triggered. As detailed in Sec.~\ref{sec:rebalancing}, the cost of these rebalancing steps can be amortized over the update sequence, yielding the same (now amortized) update time.

    \paragraph{Enumeration.}
    Upon each enumeration request, we enumerate the distinct tuples in the query output, along with their multiplicities, from the active view trees with constant delay, as described in Sec.~\ref{sec:enumeration}.

\nop{
\dan{the below is moved from the main overview section as it does not belong there.}

Let $Q$ be a connected conjunctive query and let $\mathcal T$ be a view tree for $Q$. Since $Q$ is connected, $\mathcal T$ consists of a single rooted tree, which we also denote by $V$. We associate with $V$ a tree decomposition $(T,\chi)$ of $Q$ as follows. For every inner node $v$ of $V$ with label $V_v(\vect X_v)$ we set $\chi(v) := \vect X_v$. For every leaf $v$ of $V$ with label $R(\vect X)$ we set $\chi(v) := \vect X$. It is immediate from the definition of view trees that $(T,\chi)$ is a tree decomposition of the hypergraph of $Q$. In general this tree decomposition can be redundant, since it may contain bags that are contained in other bags, in particular the bags corresponding to the atoms of $Q$. We therefore apply the standard pruning step that repeatedly removes nodes whose bag is contained in the bag of a neighbor and reconnects the incident edges. This process terminates and yields a non redundant tree decomposition, which we denote by $\operatorname{nr}(T,\chi)$. See \Cref{fig:viewtree-td-example} for an example of a view tree for the 4-cycle query converted to a TD. We now show that every non-redundant tree decomposition arises in this way from a suitable view tree.

\begin{figure}[t]
  \centering

  \begin{minipage}{0.6\linewidth}
    \centering
    \begin{tikzpicture}[
      level distance = 10mm,
      level 1/.style = {sibling distance=42mm},
      level 2/.style = {sibling distance=18mm},
      every node/.style = {draw, rounded corners, inner sep=2pt, font=\small},
      view/.style = {},
      rel/.style  = {}
    ]

    % viewtree for Q
    \node[view] (root) {$V_3(AC)$}
      child { % left child: ABC
        node[view] (abc) {$V_1(ABC)$}
          child { node[rel] (r) {$R(A,B)$} }   % left leaf under ABC
          child { node[rel] (s) {$S(B,C)$} }   % right leaf under ABC
      }
      child { % right child: ACD
        node[view] (acd) {$V_2(ACD)$}
          child { node[rel] (t) {$T(C,D)$} }   % left leaf under ACD
          child { node[rel] (u) {$U(D,A)$} }   % right leaf under ACD
      };

    \end{tikzpicture}
  \end{minipage}
  \hfill
  \begin{minipage}{0.38\linewidth}
    \centering
    \begin{tikzpicture}[
      every node/.style = {draw, rounded corners, inner sep=2pt, font=\small},
      bag/.style  = {},
      node distance=12mm
    ]

    % non redundant tree decomposition
    \node[bag] (bagabc) {$ABC$};
    \node[bag, below=of bagabc] (bagacd) {$ACD$};

    \draw (bagabc) -- (bagacd);

    \end{tikzpicture}
  \end{minipage}

  \caption{View tree for \(Q(A,B,C,D) = R(A,B), S(B,C), T(C,D), U(D,A)\) (left) and the corresponding non redundant tree decomposition with bags \(ABC\) and \(ACD\) (right).}
  \label{fig:viewtree-td-example}
\end{figure}

\begin{lemma}
Let $Q$ be a connected conjunctive query and let $(T,\chi)$ be a non-redundant tree decomposition of $Q$. Then there exists a view tree $\mathcal T$ of $Q$ such that the non-redundant tree decomposition obtained from $\mathcal T$ by the construction above is isomorphic to $(T,\chi)$.
\end{lemma}

%%%%%%%%

As maintenance strategies, we use view trees that are obtained by rewriting the query~\cite{} and materializing views representing intermediate results.
}

\section{Missing Details in Section~\ref{sec:rebalancing}}
\label{app:rebalancing}
In this section, we give a detailed analysis of
the major and minor rebalancing steps explained in Section~\ref{sec:rebalancing}.
In the following, we fix a join query $Q$.

\paragraph{Relaxing the Partition Threshold}
We relax the partition threshold that determines whether a value is classified as light or heavy. The purpose of this relaxation is to prevent sequences of alternating inserts and deletes
from causing a value to change between the heavy and light categories after every single update, thereby triggering a rebalancing step after each update.
Consider a fixed $\epsilon \in [0,1]$.
Given a database of size $N$, let $M \in \N$ be chosen such that $\frac{1}{4} M \leq N < M$. For any join variable $Y$, we partition the $Y$-values in the database into disjoint sets $Light(Y)$ and $Heavy(Y)$ such that (1) for all $y \in Light(Y)$, it holds
$\sum_{R_i(\vect{X_i})\in\at(Y)} |\sigma_{Y=y} R_i| \leq \frac{3}{2} M^\epsilon$ and 
(2) for all $y \in Heavy(Y)$, it holds 
$\sum_{R_i(\vect{X_i})\in\at(Y)} |\sigma_{Y=y} R_i| > \frac{1}{2}M^\epsilon$.
It follows that 
a $Y$-value with $\sum_{R_i(\vect{X_i})\in\at(Y)} |\sigma_{Y=y} R_i| = M^\epsilon$ can be
either in $Light(Y)$ or in $Heavy(Y)$, but it cannot be in both. 
Since $N = \Theta(M)$, all degree constraints stated after Definition~\ref{def:degree_constraint} in Section~\ref{sec:prelims} are satisfied in asymptotic terms:
each atom \(R_i(\vect X_i)\) with join variable $A\in\vect X_i$ guards the degree constraints \((\vect X_i\mid\emptyset, N)\); \((\vect X_i\mid A, \bigO(N^\epsilon))\) in case $A$ is light; and \((A\mid \emptyset, \bigO(N^{1-\epsilon}))\) in case $A$ is heavy. A single-tuple update \(\delta R_i\) implies the degree constraints \((Y\mid \emptyset, 1)\) for all \(Y\in\vect X_i\).

\paragraph{Database States}
%Given a fixed $\varepsilon \in [0,1]$, 
A database state is a triple $S = (M,\vect{\calP}, \vect{\calF}, \vect{\calT})$, where:
$M$ is the threshold base  with 
$\frac{1}{4} M \leq N < M$ with $N$ being the current database size; 
$\vect{\calP}$ consists of the value sets $Heavy(Y)$ and $Light(Y)$
for each join variable $Y$;
$\vect{\calF}$ consists of the database fragments, i.e., it contains for each degree configuration, a corresponding database; and
and $\vect{\calT}$ consists of set of view trees maintained by our approach. In the initial database state, i.e, before processing any update, the threshold base $M$ is set to $2N +1$.   

\paragraph{Major Rebalancing}
If an update causes the database size to drop below 
$\lfloor \frac{1}{4} M \rfloor$, we set 
$M: = \lfloor \frac{1}{2}M\rfloor -1$. 
If, on the other hand, the update causes the database size to reach
$M$, we set $M: = 2M$. 
In either case, we recompute the value partitions in $\vect{\calP}$,  the database fragments in $\vect{\calF}$,
 and the view trees in $\vect{\calT}$, using the partition threshold $M^\varepsilon$.
We refer to this step as major rebalancing.
The time required to compute $\vect{\calP}$ and $\vect{\calF}$ is $\bigO(N)$. The time needed to compute the view trees in $\vect{\calT}$ is given by the preprocessing time  
$\bigO(N^{1+\mw(Q)})$, as stated in Theorem~\ref{th:main}. 
After a major rebalancing step, the database size satisfies
$N = \frac{1}{2}M$ (after doubling), or $N = \frac{1}{2}M -\frac{1}{2}$ or $|N| = \frac{1}{2}M -1$ (after halving). 
To violate the size invariant $\lfloor \frac{1}{4} M\rfloor \leq N < M$ and trigger another major rebalancing step, at least $\frac{1}{4} M = \Omega(M)$ updates are needed. 
By amortizing the $\bigO(N^{1+\mw(Q)}N)$ cost of a major rebalancing step over these preceding $\Omega(M)$ updates and observing that $N = \Theta(M)$, 
we obtain that the amortized cost of major rebalancing per single-tuple update is $\bigO(N^{\mw(Q)})$. 
%Observe that although 
%$\mw(Q)$ is defined in terms of 
%$N$ rather than $M$, the stated upper bound still holds since
%$N = \Theta(M)$.

\paragraph{Minor Rebalancing}
After each single-tuple update $\delta R = \{\vect{x} \mapsto m\}$, we check for each $X$-value $x$ in $\vect{x}$ whether it needs to be moved from 
$Light(X)$ to $Heavy(X)$ or vice-versa. 
Assume that before the update, we have $x \in Light(X)$
and $\sum_{R_i(\vect{X_i})\in\at(X)} |\sigma_{X=x} R_i| =  \lfloor \frac{3}{2} M^{\varepsilon}\rfloor$. Assume that after update, we obtain $\sum_{R_i(\vect{X_i})\in\at(X)} |\sigma_{X=x} R_i| =  \lfloor \frac{3}{2} M^{\varepsilon}\rfloor + 1$. In this case, we move 
$x$ from $Light(X)$ to $Heavy(X)$ and move all tuples 
that contain the value $x$ and are in a database fragment corresponding to a degree configuration where $X$ is light to the fragment where $X$ is heavy.
Additionally, we update the view trees in $\vect{\calT}$ evaluated over the two database fragments that have been changed. 
%We delete the tuples in $\vect{S}$ from the view tree $T_c$ and insert them to the view tree $T_{c'}$. 
If the $X$-value moves from $Heavy(X)$ to $Light(X)$, the computation is analogous. 
We refer to this step as minor rebalancing.
If the value $x$ moves from $Heavy(X)$ to $Light(X)$, at most $\frac{1}{2}M^\varepsilon$ tuples need to be moved between the fragments. If the $x$ moves from $Light(X)$ to $Heavy(X)$, at most $\frac{3}{2}M^\varepsilon + 1$ need to be moved between the fragments. Using the update mechanism of our approach to insert and delete tuples from database fragments, we observe that moving a tuple from one fragment to another takes 
$\bigO(N^{\mw(Q)})$ time, as stated in Theorem~\ref{th:main}. Hence the overall time to do minor rebalancing is $\bigO(N^{\mw(Q)} M^\varepsilon)$. We amortize this minor rebalancing time over over $\Omega(M^\varepsilon)$ updates required between two minor rebalancing steps. This implies that the amortized minor rebalancing time per single-tuple update is $\bigO(N^{\mw(Q)})$.

\nop{
\smallskip
The above analysis implies:
\begin{proposition}
\label{prop:amortized}
    For any join query $Q$ and sequence of single-tuple updates, the amortized major and minor rebalancing time is \(\bigO(2^{\mw(Q)})\) per single-tuple update, where $\mw(Q)$ is the maintenance width of $Q$.
\end{proposition}
}
\section{Missing Details in Section~\ref{sec:enumeration}}
\label{app:enumeration}
In this section, we prove:
\begin{proposition}
    \label{prop:const_delay_enum}    
    For any join query $Q$ and 
    %\eden{is $\T$ a single view tree which is a set of trees?} 
    view tree set $\vect{\calT}$ constructed by our approach for $Q$, it holds that the output of $Q$ can be enumerated from $\vect{\calT}$ with constant delay.
\end{proposition}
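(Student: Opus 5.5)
The proof splits into two parts. The first decomposes the output by degree configuration. The second enumerates each part from its view tree.

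\emph{Decomposition.} For every database state, the relation restrictions induced by the value partitions in $\vect{\calP}$ partition each relation disjointly. This follows from Def.~\ref{def:relation-part}, since every join-variable value is in exactly one of $Light(Y)$ and $Heavy(Y)$. Every output tuple $\vect t$ of $Q$ assigns each join variable $Y_i$ a value that is either light or heavy. So $\vect t$ determines a unique configuration $\vect d\in\mathcal D(Q)$, and every atom's tuple $\vect t.\vect X_i$ lies in the $\vect d$-restriction of $R_i$. Hence $\out=\biguplus_{\vect d}\out_{\vect d}$, where $\out_{\vect d}$ is the output of $Q$ over the fragment in $\vect{\calF}$ for $\vect d$. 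Because the union is disjoint, we can run the procedures for the $2^{|\mathcal J_Q|}$ configurations one after another. There is no duplicate elimination, and the query has constant size, so the switching cost between procedures is $\bigO(1)$. A configuration with empty output must be skipped in constant time. For this I would test whether the root view of $T^{\vect d}$ is empty, which is constant time with the assumed data structures.

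\emph{Enumeration from one view tree.} Fix $\vect d$ and its active tree $T^{\vect d}$. After each update and any rebalancing, all views in $T^{\vect d}$ equal their definitions over the current fragment. This holds by Lemma~\ref{lemma:compute-time-delta-view} applied bottom-up, and by the fact that rebalancing re-inserts moved tuples through the same mechanism. Then I would invoke the known result that the join of the views in a view tree can be enumerated with constant delay~\cite[Prop.~11]{dynamicrelation}. The key invariant is calibration, as in Ex.~\ref{ex:fivm-maintenance}. A projection view's tuples are exactly the projections of its child's tuples. A join view's tuples are exactly the consistent combinations of its children's tuples. So every tuple in a view extends to at least one tuple in each child. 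Enumeration then proceeds top-down. We iterate over the root view. For each current partial assignment, we descend into each child and use index (iii) to list the tuples consistent with the already-fixed variables. Calibration guarantees that every such lookup is nonempty. The number of nested iterators is bounded by the tree size, which is constant. Hence each next tuple, and the end of the enumeration, is reached in $\bigO(1)$ time.

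\emph{Multiplicities and the main obstacle.} The output multiplicity of $\vect t$ is the product $\prod_i R_i(\vect t.\vect X_i)$, which takes $k$ constant-time lookups. The main obstacle is showing that the top-down descent never dead-ends. A join view's children can share variables that are not in the parent, because the parent may be a projection above the join. Values fixed in one subtree then constrain siblings. To handle this, I would argue that enumerating from a join view $V(\vect X)$ fixes all of $\vect X$ before descending. Each child's variables are contained in $\vect X$, so siblings are conditionally independent given $\vect X$. The view-tree condition on projection views (Def.~\ref{def:viewtree}) ensures that every marginalized variable occurs only in the subtree below the projection. So a variable eliminated above a subtree never reappears elsewhere, and the sibling subtrees can be enumerated independently as a Cartesian product of constant-delay iterators.
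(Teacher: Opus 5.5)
Your proposal is correct and follows essentially the paper's argument. Like the paper, you split $\out$ disjointly by degree configuration, then enumerate each $\out_{\vect d}$ top-down from its view tree, using the projection-view condition of Def.~\ref{def:viewtree} to ensure each variable is fixed exactly once. The paper states your locality observation as the unique-owner property (Prop.~\ref{prop:unique_owner}) and flattens your recursive Cartesian product of sibling iterators into a single nested loop over the owning views in top-down order. Your explicit calibration argument (every lookup is nonempty, so the descent never dead-ends) spells out a step the paper's proof leaves implicit, and your product formula for multiplicities is the correct one.
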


Before giving the proof of Proposition~\ref{prop:const_delay_enum}, we illustrate our enumeration strategy for the 4-cycle query:

\begin{example}
\label{ex:enumeration_4-cycle}
Consider the view trees in Fig.~\ref{fig:4-cycle-view-trees} used to maintain the 4-cycle query. 
We illustrate how the tuples in the join of the views in 
View Trees 1 and 3 can be enumerated from these view trees with constant delay. Our enumeration strategy works for any degree configuration. The enumeration strategy for the other view trees is analogous. 

In View Tree 1, we use the view $V_5$ to retrieve distinct $(A,C)$-values and the views $V_1$ and $V_2$ to retrieve distinct $B$- and respectively $D$-values. 
%We visit the views in the order $V_5$, $V_1$, $V_2$. 
To construct the first output tuple, we retrieve one $(A,C)$-value $(a,c)$ from $V_5$, one $B$-value $b$ from $V_1(a,B,c)$, and one $D$-value $d$ from 
$V_2(c,D,a)$. Afterwards, we report the tuple $(a,b,c,d)$. Then, we iterate over the remaining $D$-values in 
$V_2(c,D,a)$ and report for each such value $d'$, 
the tuple $(a,b,c,d')$. After all values in $V_2(c,D,a)$
are exhausted, we retrieve the next $B$-value $b'$ in $V_1(a,B, c)$ and then iterate again over all $D$-values in $V_2(c,D,a)$. For each such value $d$, we report the tuple
$(a,c,b',d)$.
After the view $V_1$ is exhausted, we retrieve the next 
$(A,C)$-value $(a',c')$ in $V_5$ and repeat the enumeration process in the context of $(a',c')$. We are done, when the enumeration is completed in the context of the last $(A,C)$-value in $V_5$.

In View Tree 3, we use the view $V_5$ to retrieve distinct $(A,D)$-values, and the views $V_3$ and $V_1$ to retrieve distinct $C$- and respectively $B$-values. 
%We visit the views in the order $V_5$, $V_3$, $V_1$. 
The enumeration is analogous to the case of View Tree 1.  
\end{example}

We denote by $\vars(\calT)$ the set of variables in a view tree $\calT$.
Given a view tree $\calT$ and a view $V(\vect{X})$ in $\calT$, we say that the view $V$ {\em owns} a variable $X \in \vars(\calT)$ if $X \in \vect{X}$ and each view or atom $V'(\vect{X}')$ with $X \in \vect{X}'$ appears in the subtree rooted at $V$. It follows from the construction of view trees that
each variable is owned by a unique view:

\begin{restatable}{proposition}{PropUniqueOwner}
\label{prop:unique_owner}
For any view tree $\calT$ and variable $X \in \vars(\calT)$, it holds that $\calT$ contains a unique view that owns $X$.
\end{restatable}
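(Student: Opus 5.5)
The plan is to prove existence and uniqueness separately, using only the structure of view trees in Def.~\ref{def:viewtree}: leaves are in one-to-one correspondence with atoms, every inner node is a projection or a join view, and a projection view may only marginalize variables whose atoms all lie below it. Throughout, fix a variable $X \in \vars(\calT)$ and let $\mathcal{N}_X$ be the set of nodes of $\calT$, views or atoms, whose schema contains $X$.

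\emph{Existence.} The first step is to show that $\mathcal{N}_X$ induces a connected subtree of $\calT$, that is, $X$ has the running-intersection property. Take two nodes $u,v\in\mathcal{N}_X$. If $u$ is an ancestor of $v$, follow the path upward from $v$. A join view contains all variables of its children, since it is their natural join. So $X$ can leave the schema only at a projection view, say $V'(\vect Y)=\sum_{\vect X''\setminus\vect Y}V(\vect X'')$ with $X\in\vect X''\setminus\vect Y$. The side condition of the definition then forces every atom containing $X$ to lie below $V$. The same holds for every view containing $X$, because a view's variables come from the atoms below it. Hence $u$, which lies above $V'$, cannot contain $X$, which is a contradiction. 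If $u$ and $v$ are incomparable, apply the same argument to the path from $v$ up to their lowest common ancestor, and conclude that $X$ persists up to it. Once connectivity is established, $\mathcal{N}_X$ is a subtree with a unique topmost node $V_X$. By construction, $X$ is in the schema of $V_X$, and every node containing $X$ lies in the subtree of $V_X$. Finally, $V_X$ is a view rather than a leaf atom, or is the single relevant leaf when no view mentions $X$. In the latter case the parent projection removes $X$, and one treats the atom itself as the owner, consistent with the ``view or atom'' phrasing.

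\emph{Uniqueness.} Suppose two distinct nodes $V_1$ and $V_2$ both own $X$. By the definition of ownership, each contains $X$ and lies in the subtree of the other. This is impossible in a rooted tree unless $V_1=V_2$, because the two subtree-containment relations would make each an ancestor of the other.

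The main obstacle is the connectivity step, specifically the incomparable case and the formal justification that a projection removing $X$ blocks $X$ everywhere above it. I would handle this by first proving the auxiliary invariant that the schema of each view is contained in the union of the schemas of the atoms at its leaves. The proof is by induction on height: join views take unions of their children's schemas, and projection views take subsets. Combined with the projection side condition, this invariant gives the blocking argument cleanly.
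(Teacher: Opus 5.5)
Your skeleton is sound, and your key fact is the one the paper uses: a projection $V'$ that marginalizes $X$ out of its child $V$ forces every atom containing $X$ into the subtree of $V$. Your mutual-containment argument for uniqueness is fine, and so is letting a leaf atom act as the owner, since the paper's proof also treats arbitrary nodes as views.

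The gap is in the ancestor case of your connectivity step, which is exactly where you located the difficulty. You have two facts: every atom containing $X$ lies below $V$, and the variables of $u$ come from the atoms below $u$. Together these only show that $u$ and $V$ share a descendant atom, i.e., that $u$ and $V$ are comparable. But $u$ lies above $V'$, so it is an ancestor of $V$ and has all those atoms beneath it too. No contradiction arises, and the sentence ``the same holds for every view containing $X$'' is unjustified. Your planned invariant (the schema of a view is contained in the union of the schemas of the atoms below it) cannot by itself rule out that $X$ re-enters a schema somewhere above $V'$.

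To close the gap you must also use the tree structure. Walk up from $V'$ toward $u$ and let $W$ be the first node whose schema contains $X$. Its child on the path lacks $X$, so $W$ must be a join view, and some other child $S$ of $W$ contains $X$. By your invariant, some atom in the subtree of $S$ contains $X$. That subtree is disjoint from the subtree of $V$, which contradicts the side condition. Equivalently, strengthen the invariant: whenever $X$ is in the schema of a node, some downward path from that node to an atom consists only of nodes containing $X$. Such a path from $u$ would have to pass through $V'$, which lacks $X$.

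Alternatively, drop the ancestor case, since existence does not need it. It suffices that $\mathcal{N}_X$ has at most one topmost node, meaning a node none of whose proper ancestors contains $X$. This suffices because every node of $\mathcal{N}_X$ lies below its highest ancestor containing $X$, which is topmost. The argument then runs as follows:
\begin{itemize}
\item Two distinct topmost nodes $T$ and $T'$ are incomparable.
\item The parent of $T$ exists (otherwise $T'$ would lie below $T$). It lacks $X$, so it is a projection removing $X$, and hence all atoms containing $X$ lie below $T$.
\item Your invariant places an atom containing $X$ below $T'$, which makes $T$ and $T'$ comparable, a contradiction.
\end{itemize}
This needs only the weak invariant and is precisely the paper's proof. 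Your incomparable case is a variant of it and does go through. There, comparability of $u$ with $V$ already contradicts either the incomparability of $u$ and $v$ or the minimality of their lowest common ancestor.
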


\begin{proof}
We say that two views in a view tree are independent 
if they do not appear  on a root-to-leaf path in the view tree.

Consider a view tree $\calT$ and a variable $X \in \vars(\calT)$.
For the sake of contradiction, assume that $\calT$ does not contain a unique view owning $X$. This means that $\calT$ has two independent views $V_1(\vect{X_1})$ and $V_2(\vect{X_2})$ such that (i) $X \in \vect{X}_1$, (ii) $X \in \vect{X}_2$, 
(iii) for any view $V_1'(\vect{X}_1')$ above $V_1(\vect{X}_1)$, it holds $X \notin \vect{X}_1'$,
and (iv) for any view $V_2'(\vect{X}_2')$ above $V_2(\vect{X}_2)$, it holds $X \notin \vect{X}_2'$.
This implies that the parent view $\hat{V}_1(\vect{\hat{X}_1})$ of $V_1(\vect{X}_1)$ is a projection
view that projects away $X$, which means $X \notin \vect{\hat{X}}_1$.
The definition of view trees requires that $V_2$ must be in the subtree rooted at $V_1$ (Definition~\ref{def:viewtree}). This means that $V_1$ and $V_2$ cannot be independent, which is a contradiction. 
\end{proof}

\begin{figure}[t]
\center
\setlength{\tabcolsep}{3pt}
	\renewcommand{\arraystretch}{1.2}
	\renewcommand{\linenumber}{\makebox[2ex][r]{\rownumber\TAB}}
	\setcounter{magicrownumbers}{0}
%	\begin{tabular}[t]{@{}l@{}}
\begin{tabular}[t]{@{\hskip 0.1in}l}
        \toprule
 \textsc{Enumerate} (view tree $\calT$)\\
\midrule
\linenumber \LET $\vect \calV$ be the set of views in $\calT$ that own at least one variable in $\calT$ \\
\linenumber \LET $V_{1}(\vect{X}_1), \ldots , V_{n}(\vect{X}_n)$
be an ordering of the views in $\vect \calV$ that is consistent with $\calT$\\
\linenumber \LET $\vect{O}_i$ be the variables owned by $V_i$ and \LET $\vect{N}_i = \vect{X}_i \setminus \vect{O}_i$, for $i \in [n]$ \\ 
\linenumber \FOREACH $\vect{t}_1 \in V_1(\vect{O}_1)$ \\ 
\linenumber \TAB\LET $\vect{t}_1' = \pi_{\vect{N}_2} \vect{t}_1$ \\
\linenumber \TAB \FOREACH $\vect{t}_2 \in V_2(\vect{t}_1',\vect{O}_2)$ \\ 
\linenumber \TAB\TAB \LET $\vect{t}_2' = \pi_{\vect{N}_3} (\vect{t}_1 \cdot \vect{t}_2)$ \\
\linenumber \TAB\TAB \FOREACH $\vect{t}_3 \in V_2(\vect{t}_2',\vect{O}_3)$ \\ 
\linenumber \TAB\TAB \TAB  ${\cdot}{\cdot}{\cdot}$ \\ 
\linenumber \TAB \TAB \TAB \TAB \LET $\vect{t}_{n-1}' = \pi_{\vect{N}_n} (\vect{t}_1 \cdots \vect{t}_{n-1})$ \\ 
\linenumber \TAB \TAB \TAB \TAB \FOREACH $\vect{t}_n \in V_n(\vect{t}_{n-1}',\vect{O}_n)$ \\ 
\linenumber \TAB\TAB\TAB\TAB \TAB  \textbf{report} $\vect{t}_1{\cdot}{\cdot}{\cdot} \vect{t}_n$ \\
\bottomrule
\end{tabular}
%%%%%%%%%%%%%%%%%%%%%%%%%%%%%%%%
\caption{Constant-delay enumeration of the tuples in the join of the relations at the leaves of a view tree.}
% \Description{Algorithm for the constant-delay enumeration of the tuples in the join of the relations at the leaves of a view tree $\calT$.}
\label{fig:enumeration}
\end{figure}

%\medskip
Equipped with Proposition~\ref{prop:unique_owner}, we are ready to prove Proposition~\ref{prop:const_delay_enum}.

\begin{proof}[Proof of Proposition~\ref{prop:const_delay_enum}]
As explained at the beginning of Section~\ref{sec:enumeration}, it suffices 
to show that for any view tree $\calT$, the set of tuples in the join of the views of  $\calT$
can be enumerated with constant delay. The procedure \textsc{Enumerate} in Fig.~\ref{fig:enumeration}
describes our enumeration strategy for any given view tree. First, we explain the details of the procedure. Then, we show its correctness, i.e., we explain why it enumerates all distinct tuples represented by the view tree. Finally, we show that it enumerates with constant delay. 

\paragraph{Enumeration Procedure}
Let $\vect \calV$ be the set of views in the input view tree $\calT$ that own at least one variable in $\calT$.
For each $i \in [n]$, let $\vect{O}_i$ be the variables owned by $V_i$ and let $\vect{N}_i = \vect{X}_i \setminus \vect{O}_i$.
The procedure \textsc{Enumerate} creates a strict ordering $V_1(\vect{X_1}), \ldots, V_n(\vect{X_n})$ of the views in $\vect \calV$ that is consistent with the partial ordering given by $\calT$ (Line~2), i.e., for any $i, j \in [n]$ it holds: if $V_j$ is on the path from $V_i$ to the root of $\calT$, then $i <j$.
The procedure constructs the first value tuple over $\vars(\calT)$
by traversing the views in the strict order as follows (Lines~4--11). 
All variables in $\vect{X_1}$  
must be owned by $V_1$, hence, $\vect{X}_1 = \vect{O}_1$.
The procedure retrieves a tuple 
$\vect{t}_1$ over $\vect{O}_1$ from $V_1$. For any $i \in \{2, \ldots , n\}$ it proceeds as follows. 
Let $\vect{t}_{i-1}$ be the tuple constructed before visiting the view $V_{i}$ and let $\vect{t}_{i-1}' = \pi_{\vect{N}_i} \vect{t}_{i-1}$. 
The procedure retrieves a tuple $\vect{t}_i$ from $V_i(\vect{t}_{i-1}',\vect{O}_i)$.
After all views in $\vect \calV$ are visited, we have a complete tuple 
$\vect{t} = \vect{t}_1 \cdots \vect{t}_n$, which is reported by the procedure. Then, the procedure iterates over the remaining tuples in 
$V_n(\vect{t}_{n-1}',\vect{O}_n)$ and outputs for each such tuple 
$\vect{\hat{t}}_n$, the tuple 
$\vect{t} = \vect{t}_1 \cdots \vect{t}_{n-1} \cdot \vect{\hat{t}}_n$. 
After all tuples in $V_n$ are exhausted, the procedure backtracks, i.e., it retrieves the next tuple in $V_{n_1}$ and iterates again over the tuples in $V_n$. The procedure stops after each view in 
$\vect \calV$ is exhausted. 

\paragraph{Correctness}
The correctness of the enumeration procedure follows from two observations. 
Firstly, the values of any variable $X$ are retrieved from the view that owns $X$, which by Proposition~\ref{prop:unique_owner} is unique.
This means that the values are retrieved from a view that joins all views containing $X$. 

Secondly, it follows from Proposition~\ref{prop:unique_owner} that for each view $V_i(\vect{X}_i)$, it holds: all variables in $\vect{N}_i$
have their owning views above $V_i$ in the view tree. This means that at the time, the procedure visits the view $V_i$, all variables in $\vect{N}_i$ are already fixed to some constant, which guarantees that all tuples in $V_i(\vect{t}_{i-1}', \vect{O}_i)$ are distinct. 

\paragraph{Time analysis}
The view tree contains constantly many views (in data complexity). For any view $V_i$ and any tuple $\vect{t}$ over 
$\vect{O}_i$, our computation model allows for the constant-delay enumeration of the distinct tuples
in 
$V_i(\vect{t}_{i-1}', \vect{O}_i)$. This implies that 
the procedure $\textsc{Enumerate}$ constructs each output tuple in constant time.
\end{proof}

We conclude this section by showing how our approach can be easily adapted to maintain the query count, i.e., the number of tuples in the query output, with the same  update time and constant-delay enumeration as for full queries. 

\paragraph{Maintaining the Query Count}
%We explain how our approach can be extended so that we can use the view trees to compute the number of tuples in the query output in constant time. 
Given a join query $Q$, consider  the view tree $\calT_{\vect d}$ with root view $V_{\vect d}(\vect{X}_{\vect{d}})$ constructed for each degree configuration 
$\vect d \in \vect D(Q)$. 
We extend each such a view tree with a projection view $\hat{V}_{\vect d}()$ that marginalizes our all variables of $V_{\vect d}$. For any single-tuple update, the  maintenance time for $\hat{V}_{\vect d}$ is upper-bounded by the maintenance time for $V_{\vect d}$.  
Note that $\hat{V}_{\vect d}$ is a constant function that returns the number of tuples in the join of the views in 
$\calT_{\vect d}$.
Hence, the number of output tuples of the query  is 
$\sum_{\vect{d} \in \vect{D}(Q)}\hat{V}_{\vect d}()$, which can be computed in constant time.

% \received{December 2025}
% \received[accepted]{February 2026}

%File name: V4mod109

\end{document}